\newcommand{\0}{\mathbf 0}
\newcommand{\AAA}{{\mathcal A}}
\newcommand{\aaaa}{\mbox{$\mathbf a$}}
\newcommand{\ttt}{\boldsymbol \theta}
\newcommand{\BBB}{\mbox{$\mathcal B$}}
\newcommand{\bb}{\mathbf b}
\newcommand{\CCC}{\mathcal C}
\newcommand{\cc}{\mathbf c}
\newcommand{\HH}{\mathcal H}
\newcommand{\OO}{\mathcal O}
\newcommand{\PP}{\mathcal P}
\newcommand{\R}{\mathbb R}
\newcommand{\x}{\mathbf x}
\newcommand{\y}{\mathbf y}
\newcommand{\z}{\mathbf z}
\newcommand{\argmin}{\operatornamewithlimits{arg\,min}}
\newtheorem{theorem}{Theorem}
\newtheorem{algorithm}{Algorithm}
\newtheorem{definition}{Definition}
\newtheorem{lemma}{Lemma}
\newtheorem{proposition}{Proposition}
\newtheorem{remark}{Remark}
\title{Unfolding-Model-Based Visualization:  Theory, Method and Applications}
\author{Yunxiao Chen\footnote{Department of Statistics, London School of Economics and Political Science. Address: Columbia House, Room 5.16, Houghton Street, London, WC2A 2AE. Email: y.chen186@lse.ac.uk}, Zhiliang Ying\footnote{Department of Statistics, Columbia University}, Haoran Zhang\footnote{Shanghai Center for Mathematical Sciences, Fudan University}\\
 }
\date{}
\newenvironment{proof}[1][Proof]{\noindent\textbf{#1.} }{\ \rule{0.5em}{0.5em}}
\begin{document}
\maketitle
\doublespacing

\begin{abstract}
Multidimensional unfolding methods are widely used for visualizing item response data. Such methods project respondents and items simultaneously onto a low-dimensional Euclidian space, in which respondents and items are represented by ideal points, with person-person, item-item, and person-item similarities being captured by the Euclidian distances between the points. In this paper, we study the visualization of multidimensional unfolding from a statistical perspective. We cast multidimensional unfolding into an estimation problem, where the
respondent and item ideal points are treated as parameters to be estimated.
An estimator is then proposed for the simultaneous estimation of these parameters.
Asymptotic theory is provided for the recovery of the ideal points, shedding lights on the validity of model-based visualization.
An alternating projected gradient descent algorithm is proposed for the parameter estimation. We provide two
illustrative examples, one on users' movie rating and the other on senate roll call voting.
\end{abstract}

\bigskip
\noindent KEY WORDS: Multidimensional Unfolding; Data Visualization; Distance Matrix Completion; Item Response Data; Embedding.

\section{Introduction}

Multidimensional unfolding (MDU) methods are widely used as an important data visualization tool in social and behavioral sciences such as psychology \citep[][]{van2007multidimensional,papesh2010multidimensional}, political science \citep[][]{poole2000nonparametric,poole2005spatial,clinton2004statistical,bakker2013bayesian}, and marketing \citep[][]{desarbo1987constructing,desarbo1997parametric,ho2010unfolding}. It is regarded as the dominant method in the scaling of both preferential choice and attitude \citep{de2005multidimensional}.
The basic idea of MDU is to place both respondents and items in a joint Euclidean space based on data, with the understanding that respondents tend to prefer items that are close to them in the space. This joint visualization may lead to better understanding and interpretations of both the respondents and the items, as compared with separately visualizing the respondents and the items by themselves.
MDU has its origin in psychology  \citep{bennett1956determination,bennett1960multidimensional,hays1961multidimensional,coombs1964theory}.
It is closely related to multidimensional scaling (MDS) methods \citep{kruskal1964multidimensional,kruskal1978multidimensional,borg2005modern} and several other recent approaches to nonlinear dimension reduction and manifold learning \citep{tenenbaum2000global, lu2005framework, chen2009local, zhang2016distance}.

MDU methods can be categorized into two types,  algorithm-based and model-based. Algorithm-based methods \citep[e.g.,][]{takane1977nonmetric,greenacre1986efficient,smacof} estimate the ideal points by minimizing a certain objective function, also known as the stress function in the literature of MDU.
The classical algorithm-based methods have been implemented in the R package \emph{smacof} \citep{smacof} that is widely used for MDU and MDS analysis.
Model-based methods \citep[e.g.,][]{desarbo1987constructing,hinich2005new,bakker2013bayesian}, however, infer the locations of the ideal points by making use of a probabilistic model. Such a model typically assumes that, up to some measurement error, the similarity between a person and an item is a decreasing function of some defined distance between the corresponding ideal points. The specification of MDU models is closely related to
item response theory models in psychometrics \citep[see e.g.,][]{embretson2013item,rabe2004generalized,bartholomew2011latent}.


The MDU problem is closely related to MDS. The key difference is that data for the former do not contain direct measurement of within-set (i.e., person-person and item-item) similarities, while data for MDS typically have such information.
Largely due to the missing information contained in the within-set similarities, the MDU problem tends to be more challenging.
As a result, degenerate solutions are often encountered in the applications of MDU methods, in which case the visualization and the corresponding interpretations convey no information \citep[e.g.,][]{busing2005avoiding,borg2005modern}, while  MDS results tend to be more stable.
These empirical observations suggest that it is of importance to study the validity of MDU solutions, which motivates the research in this paper.

This paper studies the visualization of MDU from the statistical perspective. First, for binary choice data, we formulate the MDU problem into a parameter estimation problem under a general family of probabilistic MDU models, where the respondent and item ideal points are treated as parameters to be estimated.
Second, an estimator is proposed for the ideal points and an asymptotic theory is provided for this estimator,
shedding lights on the validity of model-based visualization.  Finally, an efficient alternating projected gradient algorithm is proposed for the computation which is scalable to large-scale problems.

We illustrate the proposed method through two applications, one on movie rating and the other on senate roll call voting. The movie dataset is a subset from the famous MovieLens dataset \citep{harper2016movielens}. We unfold the 943 users and 338  movies in the dataset. Specifically, we study the users' movie watching decisions. Based on the ideal points of movies in a two-dimensional space, it is found that one dimension of the space corresponds to the popularity of the movies and the other dimension corresponds to the release date of the movies. Good understanding of the user ideal points is further obtained based on their distances to the movie ideal points. The senate voting dataset is based on the senate roll call voting records from the 108th congress in 2003-2004. Based on the unfolding of the senators and roll calls, it is found that most of the ideal points lie around a one-dimensional line, with the two extremes of the line representing the most liberal and the most conservative political standings.

The rest of the paper is organized as follows. In Section~\ref{sec:model}, we introduce a family of MDU models and formulate the problem of joint configuration recovery into an estimation problem.
In Section~\ref{sec:theory}, we propose an estimator, for which
statistical theory is established that guarantees
the consistency of configuration recovery under reasonable conditions.
Simulation studies and real data examples are presented in Sections~\ref{sec:simu} and \ref{sec:real}, respectively. We end with discussions on future directions in Section~\ref{sec:disc}. An application to cluster analysis, proofs of the theoretical results, and  numerical comparison with classical MDU methods are provided as supplementary materials.

\section{Distance-based MDU} \label{sec:model}

\subsection{Distance-based Unfolding Model for Binary Data}

Consider $N$ respondents making choice on $J$ binary items (e.g., ``agree/disagree"). Let $Y_{ij}$ be a random variable, denoting the response from respondent $i$ to item $j$, taking value 0 or 1, and let $y_{ij}$ be its realization. For example, such data can come from senate roll call voting, where the respondents are senators and the items correspond to roll calls. Response $Y_{ij} = 1$ means that senator $i$ supports roll call $j$ and $Y_{ij}=0$ otherwise.

We provide a simulated example in Figure~\ref{fig:illustrative} to illustrate MDU analysis. Panel (a) shows the heat map of an observed response matrix which consists of 20 respondents and 10 items,
where 0 and 1 responses are represented by red and green colors, respectively.  Given choice data in panel (a), an
MDU method aims at representing respondents and items by ideal points in the same low-dimensional Euclidian space  $\mathbb R^K$ as in panel (b) of Figure~\ref{fig:illustrative}  that can be easily visualized, where the respondent-respondent, respondent-item, and item-item relationships are captured by the between-points distance. The dimension $K$ of the Euclidian space  is often set to be 2 or 3 for the purpose of visualization.

\begin{figure}
  \centering
  \begin{subfigure}[b]{0.4\textwidth}
        \centering
        \includegraphics[scale = 0.6]{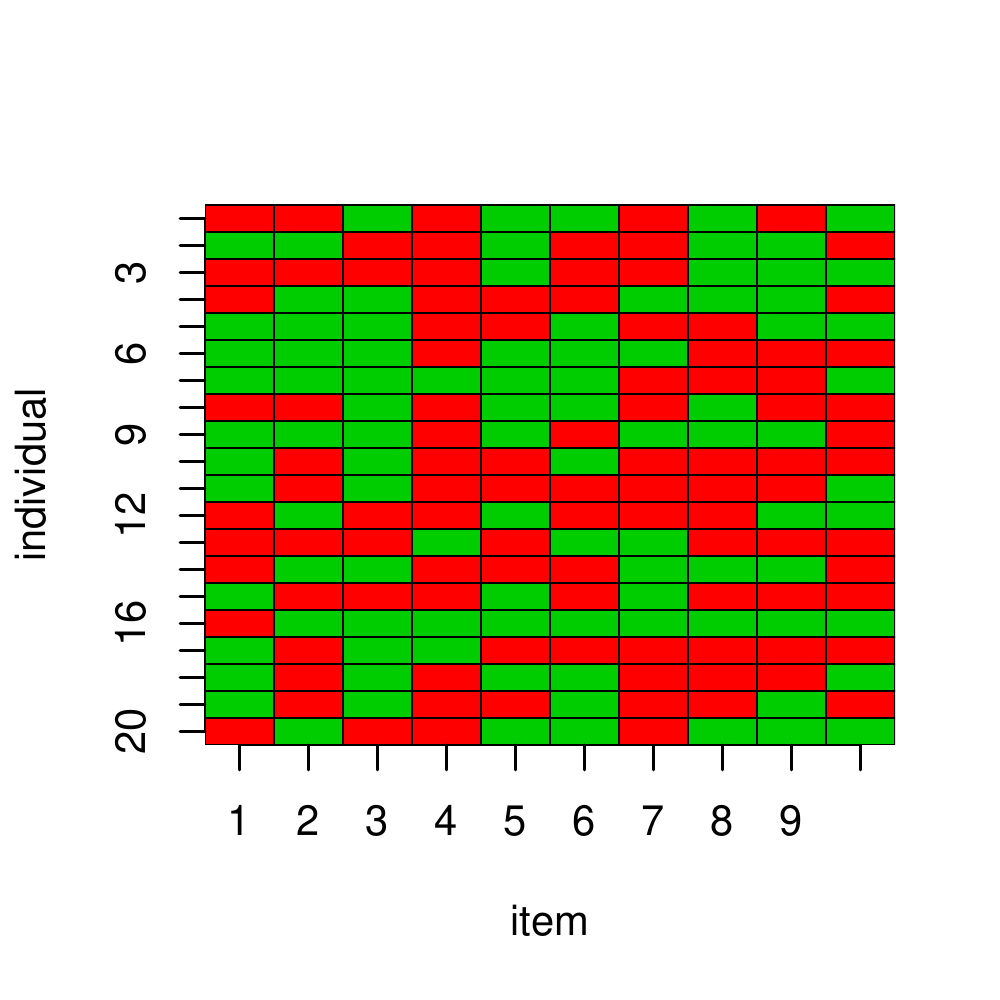}
        \caption{}
    \end{subfigure}
    \begin{subfigure}[b]{0.4\textwidth}
        \centering
        \includegraphics[scale = 0.6]{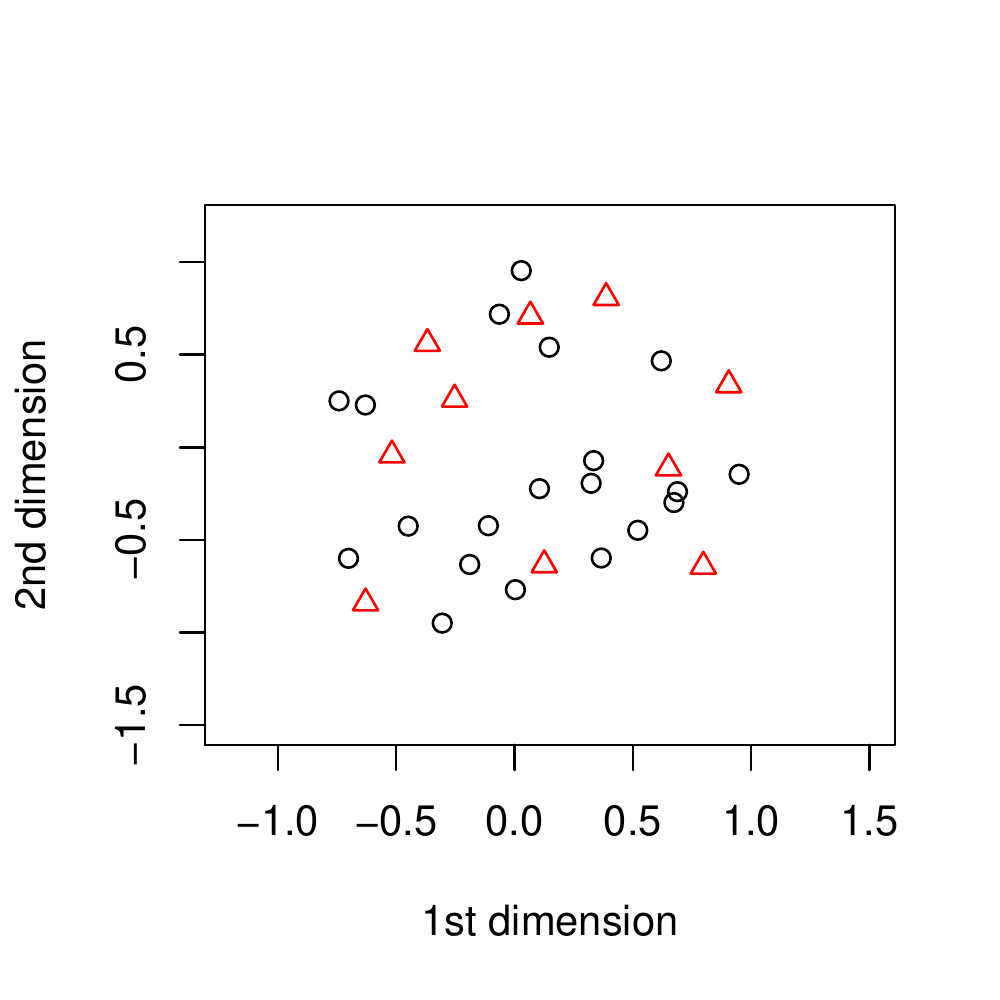}
        \caption{}
    \end{subfigure}
\caption{An illustrative example. Panel (a): The heatmap of a response matrix, where 0 and 1 responses are represented by red and green colors, respectively.
Panel (b): The respondent and item ideal points where black circles represent respondents and red triangles represent items. }
\label{fig:illustrative}
\end{figure}

One way to conduct MDU is via a statistical model.
An MDU model typically assumes that each respondent/item is associated with a true ideal point in $\mathbb R^K$ that is represented by a $K$-dimensional parameter vector. Let $\ttt_i = (\theta_{i1}, ..., \theta_{iK})^\top$ and $\aaaa_j = (a_{j1}, ..., a_{jK})^\top$ denote the parameter vectors of respondent $i$ and item $j$, respectively.
It is assumed that  response $Y_{ij}$ is determined by the Euclidian distance between $\ttt_i$ and $\aaaa_j$ in $\mathbb R^K$. Finally, we use $\Theta^N = (\theta_{ik})_{N\times K}$ and $A^J = (a_{jk})_{J\times K}$ to denote the matrices containing all the person and the item ideal points, respectively.
Under such a statistical model, the goal of MDU becomes to estimate the person and item parameters based on data.


In this paper, we focus on MDU models taking the form
\begin{equation}\label{eq:model}
P(Y_{ij} = 1\mid \ttt_i, \aaaa_j) = f(\Vert\ttt_i - \aaaa_j\Vert^2),
\end{equation}
where $\Vert\cdot\Vert$ denotes the standard $L_2$ norm and $f: [0, \infty) \rightarrow [0, 1]$ is a pre-specified link function. It is assumed that the responses $Y_{ij}$ are conditionally independent, given the ideal points $\ttt_i$ and $\aaaa_j$, $i = 1, ..., N, j =  1, ..., J$. This model falls under the general framework of the MDU threshold model for binary choice data \citep[see][]{desarbo1987constructing}. According to the form of \eqref{eq:model}, the distribution of data only depends on the squared distance between every pair of person and item ideal points, $d_{ij} = \Vert\ttt_i - \aaaa_j\Vert^2, i = 1, ..., N, j = 1, ..., J$. The matrix $D_{N,J} = (d_{ij})_{N\times J}$ is known as the corresponding \textit{partial distance matrix}, where the subscripts of $D_{N, J}$ emphasize the dependence of this matrix on the numbers of respondents and items. 

In addition, the link function $f$ is often assumed to be a monotone decreasing function, so that a larger distance implies a lower probability of $Y_{ij} = 1$. An example of such a link function is $f(x) = {2}/{(1+\exp(x))}$. When $f(x)$ takes this form, $P(Y_{ij} = 1\mid \ttt_i, \aaaa_j) = 1$ when the distance between $\ttt_i$ and $\aaaa_j$ is 0, i.e., the two points are identical, and the probability $P(Y_{ij} = 1\mid \ttt_i, \aaaa_j)$ decays towards 0 when the distance increases.

In what follows, we provide two remarks on this modeling framework.
\begin{remark}
  We remark on the link function $f$ which plays a similar role as the dissimilarity transformation function in the classical MDS and MDU methods \citep[e.g., Chapter 9,][]{borg2005modern}. Assuming a pre-specified $f$ is similar to assuming an identity transformation in classical MDU.

  In classical MDS and MDU, the dissimilarity transformation function can be unknown and estimated from data parametrically or non-parametrically. Similar treatment can be applied to the link function $f$. For example, one may assume
  $$f(\Vert\ttt_i - \aaaa_j\Vert^2) = g(\beta_0 + \beta_1 \Vert\ttt_i - \aaaa_j\Vert^2),$$
  where $g:  \mathbb R \rightarrow [0, 1]$ is a given monotone decreasing function and $\beta_0$ and $\beta_1$ are additional parameters to be estimated from data together with the person- and item-specific parameters. This form is similar in spirit to the interval transformation in classical MDU.
  When no constraint is imposed on the scales of $\ttt_i$s and $\aaaa_j$s,
  $\beta_1$ needs to be fixed to be a constant (e.g., $\beta_1 = 1$) for model identifiability.
  One may also estimate $f$ non-parametrically, for example, by using monotone splines.

  Under suitable regularity conditions, our theoretical development in Section~\ref{sec:theory} can be extended to the case when $f$ also needs to be estimated from data.
\end{remark}






\begin{remark}\label{rmk:other}

Although we focus on binary data, the introduced modeling framework can be easily extended to other types of preference data, such as rating and ranking data. For example, consider rating data
$Y_{ij} \in \{0, ..., T\}$, where 0, 1, ..., $T$ are $T+1$ ordered response categories. A higher category implies a higher level of agreement between the respondent and the item.
Then one can assume the following unfolding model
\begin{equation}
P\left(Y_{ij} \geq t \mid \ttt_i,\aaaa_j\right) = g\left(d_t + \| \ttt_i - \aaaa_j \|^2\right),
\end{equation}
for $t\in \{1,...,T\},$ where  $g:  \mathbb R \rightarrow [0, 1]$ is a given monotone decreasing function and
$d_1$, ..., $d_T$ are additional model parameters.
It implies that the larger the distance, the smaller the probability for $Y_{ij}$ to take a large value.
This model is closely related to the graded response model \citep{samejima1997graded} in item response theory.
For another example, consider ranking data consisting of pair-wise comparisons, where each response is a comparison between two items $j$ and $j'$. Following the same idea as above, one may model the probability that item $j$
is preferred over $j'$ to take the form $g(\Vert \ttt_i - \aaaa_j \Vert^2 - \Vert \ttt_i - \aaaa_{j'} \Vert^2)$. That is, the probability decreases with the difference of their squared distances to person $i$.
Our theoretical results and computational algorithm given below can be adapted to these situations.


\end{remark}

%
%
%
%

\subsection{Recovery of Configuration}\label{subsec:config}

Our main goal is the simultaneous recovery of the ideal points $\ttt_i$ and $\aaaa_j$, based on the observed binary responses $y_{ij}, i = 1, ..., N, j = 1, ..., J$. Since the model only relies on the Euclidian distance between the ideal points, two sets of points lead to the same model if they have the same configuration, i.e.,
one set of points can be obtained by applying an isometry mapping to the other.
 This is because, the distance between points is invariant under an isometry mapping. An isometry mapping $F$ in $\mathbb R^{K}$ takes the form
$$F(\mathbf x) = O\mathbf x + \mathbf b, ~~\forall \mathbf x \in \mathbb R^{K}, $$
where $O$ is a $K\times K$ orthogonal matrix and $\mathbf b $ is a vector in  $\mathbb R^{K}$ \citep[see, e.g.,][]{olver1999classical}.
We further denote $\mathcal A_K$ as the set of all isometry mappings on $\mathbb R^{K}$.
Without additional information,
the best possible result one can expect is recovering the ideal points up to an isometry mapping. We refer to this problem as the recovery of ideal point configuration.

It is worth noting that regularity conditions are needed to ensure the recovery of the configuration. That is, it is possible that there exist multiple sets of ideal points with different configurations that
lead to the same distribution of $Y_{ij}$s. In other words, the configuration of $\{\ttt_1,...,\ttt_N,\aaaa_1,...,\aaaa_J\}$ may not be unique only given the partial distance matrix.
This is known as the situation of degeneration, in which case the visualization does not convey information or can even be misleading. A simple example is given in Figure~\ref{fig:example}, where the two different configurations in the two panels have the same partial distance matrix.

\begin{figure}[h]
  \centering
   \centering
    \begin{subfigure}[b]{0.3\textwidth}
        \centering
        \includegraphics[width=0.99\textwidth]{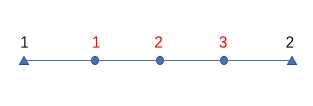}
        \caption{}
    \end{subfigure}%
    \begin{subfigure}[b]{0.3\textwidth}
        \centering
        \includegraphics[width=0.99\textwidth]{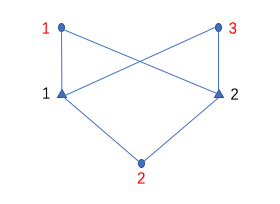}
        \caption{}
    \end{subfigure}
  \caption{An example of degenerate situation: The triangles represent item points and circles represent person points. The two configurations in $\mathbb R^2$ share the same partial distance matrix, where $d_{11} = 1^2$, $d_{12} = 3^2$,  $d_{21} = 2^2$, $d_{22} = 2^2$, $d_{31} = 3^2$, and $d_{32} = 1^2$.}\label{fig:example}
\label{fig:example}
\end{figure}

Following the above discussion, the validity of unfolding-model-based visualization relies on the accuracy of configuration recovery, a problem to be discussed.
Specifically,
we consider the following loss function for configuration recovery,
\begin{equation}\label{eq:loss}
\min_{F\in \mathcal A_K}\frac{\sum_{i=1}^N \Vert \ttt_i^* - F(\hat \ttt_i)\Vert	^2}{N} + \frac{\sum_{j=1}^J \Vert \aaaa_j^* - F(\hat \aaaa_j)\Vert^2}{J},
\end{equation}
where $\ttt_i^*$ and $\aaaa_j^*$ denote the true ideal points and $\hat \ttt_i$ and $\hat \aaaa_j$ denote the estimates from data $(y_{ij})_{N\times J}$. Note that \eqref{eq:loss} quantifies the accuracy of configuration recovery in an average sense, where isometry indeterminacy is bypassed by the minimization in \eqref{eq:loss} with respect to all isometry mappings in $\mathcal A_K$.
We call \eqref{eq:loss} the \textit{average loss} for the recovery of ideal point configuration.
Error bounds will be established for \eqref{eq:loss} under reasonable conditions, which ensures the accurate recovery of the loss function when both $N$ and $J$ are large.

\subsection{Connection with Other Scaling Methods}

MDU is closely related to MDS, a class of methods for
visualizing the similarity pattern between data points \citep{borg2005modern}.
More precisely, MDS maps a set of variables onto a low dimensional space, based on data measuring the similarity between variables. As pointed out in Chapter 14, \cite{borg2005modern},
MDU can be viewed as a special case of MDS, where the set of variables in MDS composes of both the respondents and items and the item response data $(y_{ij})_{N\times J}$ are regarded as measures of similarity between the respondents and the items, while the similarities within the two sets (i.e., respondents and items) are structurally missing; see Figure~\ref{fig:mds} for an illustration that is a reproduction of Figure 14.1 of \cite{borg2005modern}.

\begin{figure}[h]
\centering
\includegraphics[scale=0.45]{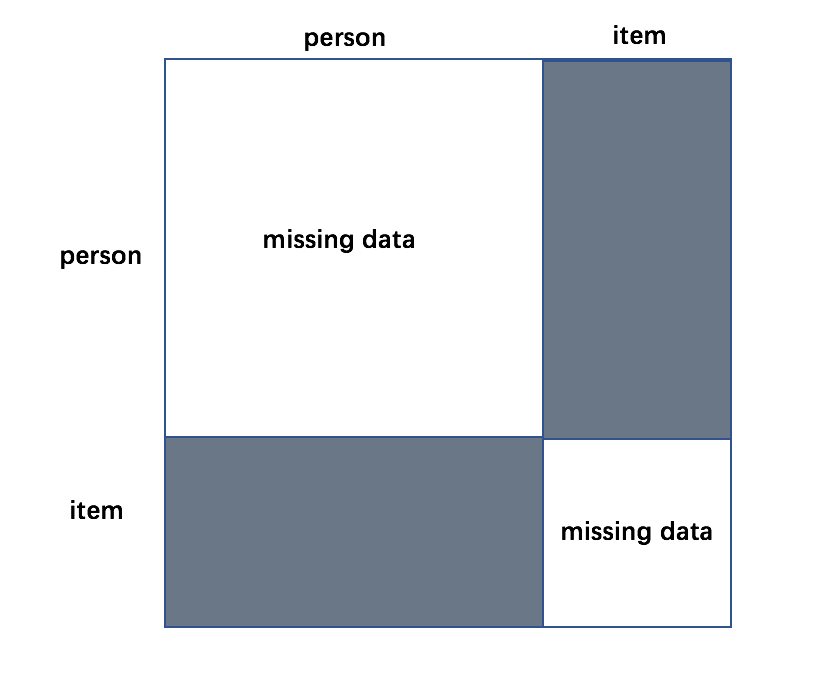}
\caption{In MDU, the diagonal blocks are missing. All we observe are the off-diagonal blocks.}
\label{fig:mds}
\end{figure}

Little statistical theory has been developed for the recovery of configuration based on MDS models. The most relevant work is \cite{zhang2016distance}, in which an error bound is developed for the recovery of the complete distance matrix, under a linear MDS model without structurally missing data. However, little discussion is provided on the recovery of ideal point configuration, under an MDU setting.

The recovery of configuration is relatively easier under the setting of MDS with no structurally missing data. This is because, the complete data matrix of similarities will provide sufficient information on the complete distance matrix.
The accurate recovery of the complete distance matrix further implies the
accurate recovery of configuration under weak conditions, due to the one-to-one relationship between the complete distance matrix and the ideal point configuration as described in Proposition~\ref{prop:configuration}. Under the MDU setting, the recovery of configuration requires additional regularity conditions, due to the lack of direct measurement of
 within-set distances. 

\begin{proposition}\label{prop:configuration}
For $\{\x_1,...,\x_n\} \subset \mathbb{R}^K$, $\{\y_1,...,\y_n\} \subset \mathbb{R}^K$, if
$\|\x_i-\x_j\| = \|\y_i-\y_j\|$
for all $i$ and $j$,
then there exists an isometry mapping $F \in \mathcal A_K$ such that
$F(\x_i)=\y_i$
for all $i = 1, ..., n$.
\end{proposition}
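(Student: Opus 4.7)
The plan is to construct the isometry $F$ explicitly in two stages: first remove translation indeterminacy and then construct an orthogonal linear map matching the two configurations.

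First I would translate both configurations so that $\x_1 = \y_1 = \mathbf{0}$. Setting $\tilde \x_i = \x_i - \x_1$ and $\tilde \y_i = \y_i - \y_1$ preserves all pairwise distances, so it suffices to exhibit an orthogonal matrix $O$ such that $O\tilde \x_i = \tilde \y_i$ for all $i$; the final isometry is then $F(\z) = O(\z - \x_1) + \y_1$. Because $\tilde \x_1 = \tilde \y_1 = \mathbf{0}$, the distance-preservation hypothesis gives $\|\tilde \x_i\| = \|\tilde \y_i\|$ for every $i$, and the polarization identity $\|\u - \v\|^2 = \|\u\|^2 + \|\v\|^2 - 2\langle \u, \v\rangle$ upgrades this to equality of Gram matrices, i.e., $\langle \tilde \x_i, \tilde \x_j\rangle = \langle \tilde \y_i, \tilde \y_j\rangle$ for all $i, j$.

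Next I would pick a maximal linearly independent subset $\{\tilde \x_{i_1}, \ldots, \tilde \x_{i_r}\}$ of $\{\tilde \x_2, \ldots, \tilde \x_n\}$ and spanning $V := \operatorname{span}(\tilde \x_1, \ldots, \tilde \x_n)$. Since Gram matrices agree, the corresponding vectors $\tilde \y_{i_1}, \ldots, \tilde \y_{i_r}$ are also linearly independent (their Gram matrix is nonsingular), and hence span a subspace $W$ of the same dimension $r$. Define a linear map on $V$ by sending $\tilde \x_{i_k} \mapsto \tilde \y_{i_k}$, $k = 1, \ldots, r$; because the Gram matrices coincide, this linear map preserves inner products on $V$, and so extends to an orthogonal map $O$ of $\mathbb{R}^K$ by fixing any orthonormal bases of $V^\perp$ and $W^\perp$ (which have equal dimensions) and mapping one to the other.

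The crux of the argument, and the step that I expect to require the most care, is verifying that $O\tilde \x_i = \tilde \y_i$ holds for every $i$, not only for the chosen basis indices. For an arbitrary $i$, write $\tilde \x_i = \sum_{k=1}^r c_k \tilde \x_{i_k}$ uniquely. Taking inner products of this identity with each $\tilde \x_{i_l}$ gives a linear system for $(c_k)$ involving only Gram entries of the $\tilde \x$'s. Since the Gram entries of the $\tilde \y$'s are identical, the very same coefficients $c_k$ satisfy $\langle \tilde \y_i - \sum_k c_k \tilde \y_{i_k},\; \tilde \y_{i_l}\rangle = 0$ for every $l$. Thus $\tilde \y_i - \sum_k c_k \tilde \y_{i_k}$ lies in $W^\perp$, but it also lies in $W$, hence it is $\mathbf{0}$. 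Therefore $O\tilde \x_i = \sum_k c_k \tilde \y_{i_k} = \tilde \y_i$, completing the construction and the proof.
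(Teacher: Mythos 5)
Your proof is correct and reaches the same structural conclusion as the paper's, but by a genuinely different construction of the orthogonal map. Both arguments hinge on the same reduction: after removing translations, equality of pairwise distances upgrades (via polarization) to equality of Gram matrices. The paper centers both configurations at their centroids, recovers the common Gram matrix $B=-\frac12 JDJ$ by double centering, and extracts the orthogonal matrix from QR factorizations $P_1^\top=Q_1R_1$, $P_2^\top=Q_2R_2$ with $R_1=R_2$, setting $O=Q_2Q_1^\top$. You instead translate to a single base point and build $O$ by hand: define it on a maximal linearly independent subset of the $\tilde\x_i$, check it is inner-product preserving there, and extend orthogonally on the complement. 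What your route buys is an explicit treatment of the rank-deficient case: when the points do not affinely span $\R^K$, the matrix $B$ is singular and the paper's claim $R_1=R_2$ requires a uniqueness convention for the factorization that the paper does not spell out, whereas your basis-plus-extension argument handles that degeneracy transparently. What the paper's route buys is brevity and a direct link to classical MDS machinery.

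One step of yours is asserted rather than proved: you claim that $\tilde\y_i-\sum_k c_k\tilde\y_{i_k}$ ``also lies in $W$,'' but a priori $\tilde\y_i$ need not lie in $W=\operatorname{span}(\tilde\y_{i_1},\dots,\tilde\y_{i_r})$. The gap is closed by a one-line computation that avoids the membership claim entirely: writing $\mathbf v=\tilde\y_i-\sum_k c_k\tilde\y_{i_k}$, expand
\begin{equation*}
\Vert \mathbf v\Vert^2=\langle\tilde\y_i,\tilde\y_i\rangle-2\sum_k c_k\langle\tilde\y_i,\tilde\y_{i_k}\rangle+\sum_{k,l}c_kc_l\langle\tilde\y_{i_k},\tilde\y_{i_l}\rangle,
\end{equation*}
which involves only Gram entries and therefore equals $\Vert\tilde\x_i-\sum_k c_k\tilde\x_{i_k}\Vert^2=0$. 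With that substitution the argument is complete.
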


MDU is also related to other scaling methods for binary data such as item response theory \citep[IRT;][]{embretson2013item, reckase2009multidimensional} and multiple correspondence analysis \citep{gifi1990nonlinear,le2010multiple}.
Specifically, probabilistic models are available from IRT for multivariate binary data. An IRT model also represents respondents and items by low-dimensional parameter vectors, say $\ttt_i$ and $\aaaa_j$.
It also assumes that the probability of $Y_{ij} = 1$ is a function of  $\ttt_i$ and $\aaaa_j$.
In this sense, the model introduced above can be viewed as a special IRT model, in which
the probability of $Y_{ij} = 1$  is assumed to be a monotone decreasing function of $\| \ttt_i - \aaaa_j \|$. However, the classical IRT models \citep[see e.g.,][]{embretson2013item, reckase2009multidimensional} are not specified in this way.
Consequently, it does not make sense to visualize the person and item parameter vectors jointly.

Multiple correspondence analysis is an algorithm-based approach that can
be applied to binary data and produce low-dimensional scores for both respondents and items. These score vectors can be plotted jointly in the same space. However, as a common issue with algorithm-based approaches,
the meaning of the distance between the score vectors is not clear and the uncertainty associated with the visualization is hard to quantify.

\section{Theoretical Results}\label{sec:theory}

\subsection{Configuration Recovery based on Perturbed Partial Distances}
We first study the recovery of configuration from a perturbed partial distance matrix, when both $N$ and $J$ grow to infinity. Let $\ttt_i^*$, $i = 1, ..., N$, and $\aaaa_j^*$, $j = 1, ..., J$ be the true person and item ideal points in $\R^{K}$, respectively, and let $D^*_{N, J}$ be the corresponding partial distance matrix.
In addition, let $\tilde \ttt_i \in \R^{K_+}$ and $\tilde \aaaa_j \in \R^{K_+}$ correspond to a perturbed version of the true configuration, satisfying
\begin{equation}\label{eq:bound}
{\|\tilde{D}_{N,J}-D^*_{N,J}\|_F^2} = o(NJ)
\end{equation}
and $K_+ \geq K,$ where $\tilde{D}_{N,J}$ denotes the partial distance matrix given by the perturbed configuration and $\Vert \cdot \Vert_F$ denotes the matrix Frobenius norm. One can think of $K_+$ as the latent dimension of the MDU model being applied to data, and $\tilde \ttt_i$ and $\tilde \aaaa_j$ as some estimates of the person and item ideal points. For the time being, we treat $K_+$, $\tilde \ttt_i$, and $\tilde \aaaa_j$ as given.


Based on the definition of matrix Frobenius norm, the left side of \eqref{eq:bound} has $NJ$ terms, each of which is a squared distance between a true person-item distance and its perturbed value. Equation~\eqref{eq:bound} implies that the perturbed partial distance matrix converges to the true one in an average sense, when both $N$ and $J$ grow to infinity.

We denote
$\ttt_i^+ = ((\ttt_i^*)^\top, \mathbf 0^\top)^\top$ and  $\aaaa_j^+ = ((\aaaa_j^*)^\top, \mathbf 0^\top)^\top$ in $\mathbb R^{K_+}$ as the embedding of the true ideal points in $\R^{K_+}$,
where $\bf 0$ denotes a zero vector.
In what follows, we show that
$$\min_{F\in \mathcal A_{K_+}}\frac{\sum_{i=1}^N \Vert \ttt_i^+ - F(\tilde \ttt_i)\Vert^2}{N} + \frac{\sum_{j=1}^J \Vert \aaaa_j^+ - F(\tilde \aaaa_j)\Vert^2}{J} \to 0$$
as $N$ and $J$ grow to infinity, under reasonable conditions on the true ideal points.

Throughout this paper, we assume that ideal points are constrained in a compact set in $\mathbb R^{K}$.
\begin{itemize}
\item[A0.] There exists a constant $M$ such that
$\|\ttt_i^*\| \leq M$ and $\|\aaaa_j^*\| \leq M$ for all $i$ and $j$.
\end{itemize}
To impose regularity conditions on the true configuration of the $N+J$ ideal points, which can vary with $N$ and $J$, we introduce the notion of anchor points,
two finite sets of points in $\mathbb R^{K}$ satisfying certain regularities that are independent of $N$ and $J$.

\begin{definition}\label{def:anchor}
Two sets of points, $\{\bb_1^*,...,\bb_{k_1}^*\}, \{\cc_1^*,...,\cc_{k_2}^*\} \subset B^K_{\bf 0}(M)$, are called a collection of anchor points of $\mathbb R^{K}$, if they satisfy conditions A1 and A2 below, where  $B^K_{\bf 0}(M)$ denotes a closed ball in $\mathbb R^K$ centered at $\bf 0$ with radius $M$.

\end{definition}
Let $D^* = (\Vert \bb_i^* - \cc_j^*\Vert^2)_{k_1\times k_2}$ be the partial distance matrix based on the anchor points, whose entries are assumed to be all positive (i.e., there is no identical points). 
\begin{itemize}
\item[A1.] There exists $\eta > 0$ such that for any partial distance matrix $D \in \mathbb R^{k_1\times k_2}$ satisfying $\Vert D - D^* \Vert_F < \eta,$
$D$ has a unique configuration.
\item[A2.] Both $\{\bb_1^*,...,\bb_{k_1}^*\}$ and $\{\cc_1^*,...,\cc_{k_2}^*\}$ can affine span $\mathbb R^{K}$.
\end{itemize}

\begin{remark}\label{rk:anchor}
According to Definition \ref{def:anchor}, we still get a collection of anchor points when slightly perturbing the points in a given anchor point collection  in $\R^K$.

%
%
%
%
%
%
%
\end{remark}

According to condition A1, the anchor points are well-behaved points whose configuration can be uniquely determined by the partial distance matrix, even after a small perturbation.
In addition, thanks to A2, the anchor points will help to anchor the rest of the points in $\mathbb R^{K}$, i.e., determining the configuration of a larger set of respondent and item ideal points.

Following the above concept of anchor points, it is intuitive that if there exist anchor points
$\{\bb_1^*,...,\bb_{k_1}^*\}$ and $\{\cc_1^*,...,\cc_{k_2}^*\}$, satisfying that each $\bb_i^*$
is surrounded by sufficiently many  respondent ideal points and each $\cc_j^*$ is  surrounded by sufficiently many item ideal points; that is, there exist a sufficient number of anchor points. Then it is relatively easy to recover the configuration of the ideal points from  a perturbed partial distance matrix. This intuition is
formalized by condition A3 below.

\begin{itemize}
\item[A3.]
There exists a collection of anchor points $\{\bb_1^*,...,\bb_{k_1}^*\}$ and $\{\cc_1^*,...,\cc_{k_2}^*\} \subset B^K_{\bf 0}(M) \subset \mathbb R^K$ and $0<\epsilon<M/10$ such that the closed balls with $\bb_1^*,...,\bb_{k_1}^*$ and $\cc_1^*,...,\cc_{k_2}^*$ as centers and radius $\epsilon$, denoted by $B_{\mathbf b_1^*}(\epsilon),...,B_{\mathbf b_{k_1}^*}(\epsilon)$ and $B_{\cc_1^*}(\epsilon),...,B_{\cc_{k_2}^*}(\epsilon)$, do not overlap. The following two conditions are required to hold.
\begin{enumerate}
  \item[(1)]  For any
$\bb_1 \in B_{\bb_1^*}(\epsilon),...,\bb_{k_1} \in B_{\bb_{k_1}^*}(\epsilon)$ and $\cc_1 \in B_{\cc_1^*}(\epsilon),...,\cc_{k_2} \in B_{\cc_{k_2}^*}(\epsilon)$, $\{\bb_1,...,\bb_{k_1}\}$ and $\{\cc_1,...,\cc_{k_2}\}$ are also a collection of anchor points.
  \item[(2)] When $N$ and $J$ grow to infinity,
  \begin{align*}
& p_i = \liminf\limits_{N \to \infty}\frac{\sum_{l=1}^N 1_{\{\|\ttt^*_l - \bb_i^*\| < \epsilon\}}}{N} > 0, \quad i = 1,...,k_1,\\
& q_j = \liminf\limits_{J \to \infty}\frac{\sum_{l=1}^J 1_{\{\|\aaaa^*_l - \cc_j^*\| < \epsilon\}}}{J} > 0, \quad j = 1,...,k_2.
\end{align*}
\end{enumerate}
\end{itemize}


\begin{theorem}\label{thm:loss}
Suppose that A0 and A3 are satisfied for the true ideal points $\ttt_i^*$ and $\aaaa_j^*, i = 1, ..., N, j = 1, ..., J$. Let $\tilde \ttt_i, \tilde \aaaa_j \in B^{K_+}_{\bf 0}(M)$ correspond to a perturbed version of the true configuration, for some $K_+ \geq K$. Further let  $\tilde{D}_{N,J}$ be the corresponding partial distance matrix.
Suppose that
 $\|\tilde{D}_{N, J} - D^*_{N, J}\|_F^2 = o(NJ)$,
 when $N$ and $J$ grow to infinity. Then
\begin{equation}\label{eq:theorem}
\limsup\limits_{N, J \to \infty}\left(\min_{F\in \mathcal A_K}\frac{\sum_{i=1}^N \Vert \ttt^+_i - F(\tilde\ttt_i)\Vert^2}{N} + \frac{\sum_{j=1}^J \Vert \aaaa^+_j - F(\tilde\aaaa_j)\Vert^2}{J}\right)
\leq C\epsilon^2.
\end{equation}
where $C$ is a constant that does not depend on $N$ and $J.$  If there exists a fixed collection of anchor points, for which
A3 is satisfied  for any sufficiently small $\epsilon > 0$, then we have
\begin{equation}\label{eq:theorem_2}
\limsup\limits_{N, J \to \infty}\left(\min_{F\in \mathcal A_K}\frac{\sum_{i=1}^N \Vert \ttt^+_i - F(\tilde\ttt_i)\Vert^2}{N} + \frac{\sum_{j=1}^J \Vert \aaaa^+_j - F(\tilde\aaaa_j)\Vert^2}{J}\right)
= 0.
\end{equation}
\end{theorem}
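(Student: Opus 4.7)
\textbf{Proof plan for Theorem~\ref{thm:loss}.}
The plan is to use A3 to pin down an approximate coordinate system in $\R^{K_+}$ via anchor-ball averages, and then to locate every remaining ideal point by its averaged distances to those anchor balls. I would work with averages over the balls rather than individual representatives, because the Frobenius bound $\|\tilde D_{N,J} - D^*_{N,J}\|_F^2 = o(NJ)$ is an $L^2$ statement on a matrix of size $NJ$ and therefore does not control individual entries. Writing $S_j = \{m:\aaaa_m^*\in B_{\cc_j^*}(\epsilon)\}$, set $\overline{\tilde d}_{ij} = |S_j|^{-1}\sum_{m\in S_j}\tilde d_{im}$ and define $\overline{d^*}_{ij}$ analogously. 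Disjointness of the balls and $|S_j|\gtrsim q_j J$ (both from A3), together with Cauchy--Schwarz, convert the Frobenius bound into $\tfrac{1}{N}\sum_i\sum_j(\overline{\tilde d}_{ij}-\overline{d^*}_{ij})^2 = o(1)$. Introducing the centroids $\hat\cc_j^* = |S_j|^{-1}\sum_{m\in S_j}\aaaa_m^*$ (within $\epsilon$ of $\cc_j^*$, and themselves an anchor collection by A3(1)) and expanding the squared norm gives $\overline{d^*}_{ij} = \|\ttt_i^* - \hat\cc_j^*\|^2 + O(\epsilon^2)$, since the linear term in the expansion vanishes by definition of the centroid. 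Analogous statements hold on the respondent side.

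Next I would align the perturbed centroids $\tilde\cc_j := |S_j|^{-1}\sum_{m\in S_j}\tilde\aaaa_m$ (and their respondent counterparts) with the embedded true centroids $\hat\cc_j^{*+}$ in $\R^{K_+}$. A compactness--extraction argument, using A1 applied to the centroid anchor collection, produces an isometry $F\in\mathcal A_{K_+}$ with $\|F(\tilde\cc_j) - \hat\cc_j^{*+}\|^2 = O(\epsilon^2)$. The key ingredient is that any $\R^{K_+}$ configuration whose distance matrix matches a $K$-dimensional distance pattern must lie in a $K$-dim affine subspace (its Gram matrix has rank $\leq K$), so after a suitable $K_+$-rotation it sits in $\R^K\times\{0\}$, reducing the $K_+$-stability question to the $K$-stability encoded in A1.

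With the anchors aligned, I locate each respondent $i$ by the classical trilateration identity
\[
\|\ttt_i^*-\hat\cc_j^*\|^2 - \|\ttt_i^*-\hat\cc_1^*\|^2 = 2(\hat\cc_1^* - \hat\cc_j^*)^\top\ttt_i^* + \|\hat\cc_j^*\|^2 - \|\hat\cc_1^*\|^2,
\]
which is an invertible linear system in $\ttt_i^*$ by A2 (the centroids affinely span $\R^K$ since A3(1) makes them a genuine anchor collection). Substituting $\overline{\tilde d}_{ij}$ in for $\|\ttt_i^* - \hat\cc_j^*\|^2$ and bounding the $(K_+-K)$-dimensional component of $F(\tilde\ttt_i)$ orthogonal to the anchor affine hull via the consistency residual of the $k_2$ trilateration equations produces the pointwise bound
\[
\|\ttt_i^+ - F(\tilde\ttt_i)\|^2 \leq C_1\sum_{j=1}^{k_2}\left(\overline{\tilde d}_{ij} - \|\ttt_i^*-\hat\cc_j^*\|^2\right)^2 + C_2\epsilon^2,
\]
with $C_1, C_2$ depending only on the condition number of the affine system in A2, and uniformly bounded for small $\epsilon$ by continuity. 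Averaging over $i$, invoking the Step~1 bound, and combining with the symmetric bound for items yields $\tfrac{1}{N}\sum_i\|\ttt_i^+-F(\tilde\ttt_i)\|^2 + \tfrac{1}{J}\sum_j\|\aaaa_j^+-F(\tilde\aaaa_j)\|^2 \leq o(1)+C\epsilon^2$. Taking $\limsup$ gives \eqref{eq:theorem}. For \eqref{eq:theorem_2}, since $C$ depends only on the fixed anchor collection, letting $\epsilon\downarrow 0$ through values for which A3 holds delivers the zero limit.

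The main obstacle will be the alignment in Step~2: A1 is only a qualitative uniqueness statement, so I would extract quantitative stability by a subsequence--compactness argument (if the alignment error did not vanish, a subsequential limit of perturbed-centroid distance matrices would yield an $\R^{K_+}$ configuration realizing the centroid distance matrix but not isometric to the embedded true centroids, contradicting A1 after the rank-$K$ Gram-matrix argument collapses the limiting configuration into $\R^K\times\{0\}$). A secondary care-point is that the trilateration residual in Step~3 must be controlled by the averaged Frobenius bound from Step~1; this works because the $\sum_{j=1}^{k_2}$ in that residual has only $k_2$ (finite, $N,J$-independent) terms, so after summation in $i$ it is dominated by $\tfrac{1}{N}\sum_i\sum_j(\overline{\tilde d}_{ij}-\overline{d^*}_{ij})^2 = o(1)$ up to an $O(\epsilon^4)$ bias from Step~1.
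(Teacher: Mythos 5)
Your plan correctly identifies the three load-bearing ideas that any proof of this theorem needs --- averaging to convert the $L^2$ Frobenius control into something usable, a compactness/subsequence argument to turn the qualitative uniqueness in A1 into quantitative stability (this is exactly the paper's Lemma~\ref{lem:anchor_continuous}), and an affine-spanning/trilateration step to locate the non-anchor points (the paper's Lemma~\ref{lem:add a point}). But the execution through \emph{perturbed centroids} has a genuine gap. Writing $\tilde\cc_j = |S_j|^{-1}\sum_{m\in S_j}\tilde\aaaa_m$ and $\tilde\bb_i$ for the respondent-side analogue, the bias--variance decomposition gives
\begin{equation*}
\frac{1}{|R_i||S_j|}\sum_{l\in R_i}\sum_{m\in S_j}\tilde d_{lm} \;=\; \|\tilde\bb_i - \tilde\cc_j\|^2 \;+\; V_i \;+\; W_j,
\qquad
W_j = \frac{1}{|S_j|}\sum_{m\in S_j}\|\tilde\aaaa_m - \tilde\cc_j\|^2 ,
\end{equation*}
and nothing in the hypotheses controls $V_i$ or $W_j$: the true points in $S_j$ are co-clustered in $B_{\cc_j^*}(\epsilon)$, but their perturbed counterparts need not be, because item--item distances are structurally missing in MDU and the Frobenius condition only constrains cross-set entries. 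Consequently you have not shown that the partial distance matrix of $[\tilde\bb_1,\dots,\tilde\cc_{k_2}]$ is close to that of the true centroids, which is precisely the input your compactness--extraction step (and A1) requires; the same uncontrolled $W_j$ contaminates the substitution $\overline{\tilde d}_{ij}\approx\|\tilde\ttt_i-\tilde\cc_j\|^2$ in your trilateration step. One can argue \emph{a posteriori} that these variances must be small, but doing so essentially requires first locating the individual perturbed points, which is circular with your plan. The paper sidesteps this entirely by a pigeonhole/covering argument: since all $\tilde\aaaa_m$ live in the compact ball $B_{\0}^{K_+}(M)$, for each anchor ball a positive fraction $p_\epsilon$ of its indices have their \emph{perturbed} points co-clustered in some $\epsilon$-ball $\tilde G_l(\epsilon)$; it then works only with those doubly-clustered sub-collections and with representative points rather than centroids, so that every entry of the relevant small distance matrices is controlled.

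A secondary issue is quantitative. Your pointwise bound $\|\ttt_i^+ - F(\tilde\ttt_i)\|^2 \le C_1\sum_j r_{ij}^2 + C_2\epsilon^2$ is second order in the squared-distance residuals, but the component $\w$ of $F(\tilde\ttt_i)$ orthogonal to the affine hull of the anchors enters the trilateration system only through $\|\w\|^2$, which is recovered at \emph{first} order in the residual; since the residual itself carries an $O(M\epsilon)$ term from the anchor alignment error, this route yields $\|\w\|^2 = O(M\epsilon)$ and hence a final bound of order $\epsilon$ rather than $\epsilon^2$. The paper avoids this loss by never solving a squared-distance system: Lemma~\ref{lem:add a point} works directly with configuration distances and isometries, so the displacement of each added point is bounded linearly by the displacements of the anchors, and squaring at the end gives the stated $C\epsilon^2$. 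This does not affect the limit statement \eqref{eq:theorem_2}, but it does affect \eqref{eq:theorem} as written. I would recommend replacing the centroid construction with the paper's co-clustering selection, or supplying a separate argument that bounds the within-ball spread of the perturbed points before the alignment step.
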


\begin{remark}\label{rk:identifiable}

Theorem \ref{thm:loss} shows that the configuration can be recovered asymptotically
when both $N$ and $J$ grow to infinity and suitable conditions hold. The conditions required by Theorem~\ref{thm:loss} are quite mild. It first requires all the true and perturbed ideal points to be located in a compact set. Second,
as will be shown in Proposition~\ref{prop:random} below,
condition A3 is satisfied with high probability when the true person and item points are i.i.d. samples from two distributions satisfying mild conditions, respectively.
Finally, it requires that the perturbation of the partial distance matrix is not too large, i.e.,
$\|\tilde{D}_{N, J} - D^*_{N, J}\|_F^2 = o(NJ)$. As will be shown in Proposition~\ref{prop:distance}, this condition holds with high probability when $\tilde{D}_{N, J}$ is given by a likelihood-based estimator.


\end{remark}

\begin{proposition}\label{prop:random}
Suppose that $\ttt_1^*, ...,\ttt_N^*$ and $\aaaa_1^*,...,\aaaa_J^*$ are independent and identically distributed samples from distributions  $P_1$ and $P_2$,
where $P_1$ and $P_2$ have positive and continuous density functions within a ball $G \subset B^K_{\bf 0}(M).$ Then A3 holds almost surely for any sufficiently small $\epsilon > 0$.

\end{proposition}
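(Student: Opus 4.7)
The plan is to deterministically fix a collection of anchor points inside the ball $G$ (where the densities are positive) and then invoke the strong law of large numbers to verify A3(2); condition A3(1) will follow immediately from Remark~\ref{rk:anchor}.

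\textbf{Step 1 (Selecting anchor points in $G$).} Since $G$ is a ball in $\mathbb R^K$ and therefore has non-empty interior, I select deterministically two finite sets $\{\bb_1^*,\ldots,\bb_{k_1}^*\}, \{\cc_1^*,\ldots,\cc_{k_2}^*\} \subset G$ with $k_1,k_2 \geq K+1$ in sufficiently general position that conditions A1 and A2 of Definition~\ref{def:anchor} hold. Condition A2 only asks for $K+1$ affinely independent points on each side, which is automatic inside an open set. Condition A1 is a generic rigidity property of the complete bipartite distance framework, and the set of configurations satisfying it is an open dense subset of $(\mathbb R^K)^{k_1+k_2}$, so one can place anchor points inside any open set, in particular inside $G$. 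This step is purely deterministic and does not depend on the random samples.

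\textbf{Step 2 (A3(1) from Remark~\ref{rk:anchor}).} By Remark~\ref{rk:anchor}, being an anchor-point collection is an open property, so there is $\epsilon_1 > 0$ such that for every $\epsilon \in (0,\epsilon_1)$ and every choice of $\bb_i \in B_{\bb_i^*}(\epsilon)$ and $\cc_j \in B_{\cc_j^*}(\epsilon)$, the perturbed collection is again a set of anchor points. I shrink $\epsilon_1$ if necessary so that the balls $B_{\bb_i^*}(\epsilon)$ and $B_{\cc_j^*}(\epsilon)$ are pairwise disjoint, each is contained in $G$, and $\epsilon_1 < M/10$. This gives A3(1) and the non-overlap requirement of A3.

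\textbf{Step 3 (A3(2) via SLLN).} For any $\epsilon \in (0,\epsilon_1)$ and any anchor $\bb_i^*$, the ball $B_{\bb_i^*}(\epsilon)$ lies in $G$ where $P_1$ has positive continuous density, so $P_1(\|\x - \bb_i^*\| < \epsilon) > 0$. Since $\ttt_1^*,\ttt_2^*,\ldots$ are i.i.d., the indicators $1_{\{\|\ttt_l^* - \bb_i^*\| < \epsilon\}}$ are i.i.d.\ Bernoulli variables, and Kolmogorov's strong law of large numbers gives
\[
\lim_{N \to \infty} \frac{1}{N} \sum_{l=1}^N 1_{\{\|\ttt_l^* - \bb_i^*\| < \epsilon\}} = P_1(\|\x - \bb_i^*\| < \epsilon) > 0 \quad \text{a.s.,}
\]
so $p_i > 0$ a.s. An identical argument gives $q_j > 0$ a.s.\ for each item anchor. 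Taking a union bound over the finite collection of $k_1 + k_2$ exceptional null events yields A3(2) almost surely simultaneously for every $\epsilon \in (0,\epsilon_1)$.

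\textbf{Main obstacle.} The technical core is Step 1: establishing that anchor points in the sense of A1 actually exist inside $G$. This reduces to generic rigidity of complete bipartite distance frameworks in $\mathbb R^K$. Once $K+1$ affinely independent reference points are fixed on one side, trilateration locates (up to reflection through their affine hull) each point on the other side from its squared cross-distances; a symmetric argument pins down the first side. A standard implicit-function/transversality argument should then upgrade this pointwise uniqueness to the local uniqueness condition A1 needed in Definition~\ref{def:anchor}. Everything else in the proof is essentially bookkeeping built on top of the SLLN and Remark~\ref{rk:anchor}.
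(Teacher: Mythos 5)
Your overall architecture matches the paper's: deterministically fix a collection of anchor points inside $G$, use the stability of the anchor property (Remark~\ref{rk:anchor}) to obtain A3(1), and use a law of large numbers to obtain A3(2). Steps 2 and 3 are correct, and your SLLN route to A3(2) is if anything more direct for an almost-sure claim than the paper's, which runs a Hoeffding bound for each $(N,J)$. (One cosmetic point: the null set in your Step 3 depends on $\epsilon$, so covering ``every $\epsilon\in(0,\epsilon_1)$'' simultaneously needs a countable dense set of radii together with monotonicity of the indicators in $\epsilon$, not merely a union over the $k_1+k_2$ anchors.)

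The genuine gap is Step 1, and the heuristic you offer for it fails as stated. You justify existence of a collection satisfying A1 by trilateration: fix $K+1$ affinely independent reference points on one side and locate each point on the other side from its squared cross-distances. But in the unfolding setting the within-set distances are precisely the missing data: the partial distance matrix does not determine the mutual configuration of the reference points $\bb_1^*,\dots,\bb_{k_1}^*$ themselves, so there is no fixed frame to trilaterate from. Unique localizability of a complete bipartite framework from cross-distances alone is a nontrivial rigidity question, and $k_1=k_2=K+1$ points per side is in general not sufficient (global rigidity of $K_{m,n}$ in $\mathbb R^d$ demands more vertices than local rigidity, and even generically rigid bipartite frameworks can admit incongruent realizations with the same cross-distances). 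Moreover, A1 asks for more than uniqueness at $D^*$: it requires uniqueness for \emph{every} partial distance matrix in a Frobenius neighborhood of $D^*$. The paper closes exactly this gap by invoking Theorem 3.3 of \cite{alfakih2003uniqueness} on unique completability of partial Euclidean distance matrices to produce the anchor collection, and then upgrades pointwise uniqueness on a compact product of balls to the robust condition A1 via the compactness argument of Lemma~\ref{lem:anchor_continuous}. You would need either to import such a result or to carry out the transversality argument you allude to; as written, the existence of anchor points---the one place where the geometry of unfolding, as opposed to scaling, actually bites---is asserted rather than proved.
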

\begin{remark}\label{rmk:decay}
We remark that constant $C$ is determined and only determined by the configuration of the anchor points in A3, according to our proof in the supplementary material.
Roughly, the more regular the  set of anchor points is (in terms of affine spanning $\mathbb R^{K}$),
the smaller the value of $C$.
\end{remark}

\begin{remark}\label{rk:confirmatory}

As discussed in Section~\ref{subsec:config}, we can only recover the ideal points up to an isometry mapping.
This isometry mapping may be fixed if one is willing to make further assumptions such as
non-negativity \citep{donoho2004does, hoyer2004non} and sparsity \citep{chen2019structured}.
In that case, one may further interpret each coordinate of the latent space. We leave this problem for future investigation.
\end{remark}

\subsection{Likelihood-based Estimation}\label{subsec:estimator}
In what follows, we propose a constrained maximum likelihood estimator
and show its properties. Given the assumptions of the MDU model,
our likelihood function takes the form
\begin{equation*}
L(\ttt_1,...,\ttt_N,\aaaa_1,...,\aaaa_J) = \prod_{i=1}^N\prod_{j=1}^J f(\|\ttt_i-\aaaa_j\|^2)^{y_{ij}}(1-f(\|\ttt_i-\aaaa_j\|^2)^{1-y_{ij}}.
\end{equation*}
Based on this likelihood function, we consider the following estimator
\begin{align}\label{eq:optimization}
\begin{split}
(\hat{\ttt}_1,...,\hat{\ttt}_N,\hat{\aaaa}_1,...,\hat{\aaaa}_J) =& \argmin_{\ttt_1,...,\ttt_N,\aaaa_1,...,\aaaa_J \in \mathbb R^{K_+}} -\log L(\ttt_1,...,\ttt_N,\aaaa_1,...,\aaaa_J)\\
 s.t.~~~ &~  \|\ttt_i\| \leq M,  \ \|\aaaa_j\| \leq M, \quad i = 1,...,N,\ j = 1,...,J.
\end{split}
\end{align}
where $K_+$ and $M$ are pre-specified. We denote $\hat D_{N,J}$ as the partial distance matrix based on $\hat \ttt_i$s and $\hat \aaaa_j$s from \eqref{eq:optimization}.

We impose the following regularity condition on the link function $f$, which requires $f$ to be neither
too steep nor too flat in the feasible domain. Similar conditions are assumed in \cite{davenport20141} for solving a 1-bit matrix completion problem.
\begin{itemize}
\item[A4.]
The link function $f: \mathbb R \rightarrow (0, 1)$ is a smooth and monotone decreasing function, satisfying
$L_{4M^2} < \infty$ and $\beta_{4M^2} < \infty,$
where
\begin{equation*}
L_{\alpha} = \sup\limits_{|x|\leq\alpha} \frac{|f'(x)|}{f(x)(1-f(x))}, \mbox{~~and~~}  \beta_{\alpha} = \sup\limits_{|x|\leq\alpha}\frac{f(x)(1-f(x))}{|f'(x)|^2}.
\end{equation*}
\end{itemize}


\begin{proposition}\label{prop:distance}
Suppose that A0 and A4 are satisfied and $K_+ \geq K$.
Then there exist $C_1$ and $C_2$ independent of $N$ and $J$, such that
\begin{equation*}
\frac{1}{NJ}\|\hat{D}_{N,J}-D_{N,J}^*\|_F^2 \leq   C_1M^2L_{4M^2}\beta_{4M^2} \sqrt{\frac{N+J}{NJ}}\sqrt{1+\frac{\log(NJ)}{N+J}},
\end{equation*}
with probability at least $1-{C_2}/{(N+J)}.$
\end{proposition}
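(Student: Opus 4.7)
The plan is to adapt the 1-bit matrix completion approach of \cite{davenport20141} to the Euclidean-distance setting. The starting point is the structural decomposition $d_{ij} = \|\ttt_i\|^2 - 2\ttt_i^\top \aaaa_j + \|\aaaa_j\|^2$, which exhibits both $D^*_{N,J}$ and $\hat D_{N,J}$ as elements of the class $\mathcal{D}$ of $N \times J$ matrices with rank at most $K_+ + 2$ and entries in $[0, 4M^2]$ (the entrywise bound follows from A0 via $\|\ttt_i - \aaaa_j\| \le 2M$). In particular, any two matrices $D', D'' \in \mathcal{D}$ satisfy the nuclear-norm bound $\|D' - D''\|_* \le 8M^2\sqrt{(K_+ + 2)NJ}$, which will serve as the complexity measure of the feasible class.

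Next, the optimality of $\hat D_{N,J}$ over $\mathcal{D}$ gives $F_Y(\hat D_{N,J}) \le F_Y(D^*_{N,J})$, where $F_Y(D) = -\log L(D)$. Setting $K(p,q) = p\log(p/q) + (1-p)\log((1-p)/(1-q))$ and $\Delta(D) = F_Y(D) - F_Y(D^*_{N,J}) - \sum_{i,j} K(f(d^*_{ij}), f(d_{ij}))$, taking expectations shows $\mathbb{E}[\Delta(D)] = 0$ for every fixed $D$, and rearranging the optimality inequality yields $\sum_{i,j} K(f(d^*_{ij}), f(\hat d_{ij})) \le \sup_{D \in \mathcal{D}} |\Delta(D)|$. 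A direct computation identifies $\Delta(D) = -\langle Y - f(D^*_{N,J}), h(D) - h(D^*_{N,J})\rangle$, where $h = \log(f/(1-f))$ is applied entrywise; by A4, the Lipschitz constant of $h$ on the feasible range is bounded by $L_{4M^2}$.

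The crux of the argument is controlling this empirical-process supremum. I will use symmetrization to replace the centered noise $Y - f(D^*_{N,J})$ by a Rademacher matrix $\epsilon$, then apply the Ledoux-Talagrand contraction principle entrywise to the $L_{4M^2}$-Lipschitz transform $h$ to reduce the problem to bounding $\sup_{D \in \mathcal{D}} |\langle \epsilon, D - D^*_{N,J}\rangle|$. The matrix Hölder inequality bounds this by $\|\epsilon\|_{\text{op}} \cdot \sup_{D \in \mathcal{D}}\|D - D^*_{N,J}\|_*$, and a standard non-asymptotic spectral-norm bound for bounded, centered random matrices yields $\|\epsilon\|_{\text{op}} \le C\sqrt{N+J}\sqrt{1 + \log(NJ)/(N+J)}$ with probability at least $1 - C_2/(N+J)$. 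Combining these ingredients gives $\sup_{D \in \mathcal{D}}|\Delta(D)| \le C L_{4M^2} M^2 \sqrt{(N+J)NJ}\sqrt{1 + \log(NJ)/(N+J)}$ on the same event. I expect this step to be the main obstacle: applying the contraction principle to a low-rank class after an entrywise Lipschitz transform requires some care, and securing the precise logarithmic tail factor demands the sharp spectral-norm concentration inequality.

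Finally, I will convert the KL bound into the desired Frobenius bound using A4. A second-order Taylor expansion of $d \mapsto K(f(d^*), f(d))$ at $d = d^*$, combined with the uniform bound $\beta_{4M^2} < \infty$ and the smoothness of $f$, yields a uniform quadratic lower bound $K(f(d^*), f(d)) \ge (d - d^*)^2/(C\beta_{4M^2})$ valid for all $d, d^* \in [0, 4M^2]$. Summing over entries gives $\|\hat D_{N,J} - D^*_{N,J}\|_F^2 \le C\beta_{4M^2}\sum_{i,j} K(f(d^*_{ij}), f(\hat d_{ij}))$; combining this with the supremum bound from the previous paragraph and dividing by $NJ$ produces the stated inequality.
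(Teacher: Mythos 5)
Your proposal is correct and follows essentially the same route as the paper, which likewise adapts Theorem 1 of Davenport et al.\ (2014): the rank-$(K_++2)$ decomposition of the squared-distance matrix and the entrywise bound $4M^2$ yield the nuclear-norm ball, the optimality inequality plus a symmetrization/contraction bound on the centered log-likelihood over that ball controls the summed KL divergence, and the KL is converted to a Frobenius bound via $\beta_{4M^2}$. The only cosmetic difference is in the last step, where the paper passes through the Hellinger distance (Lemma A.2 of Davenport et al.) to get the uniform quadratic lower bound on the KL with constant $\beta_{4M^2}$, whereas you Taylor-expand the KL directly; the Hellinger route is the cleaner way to make that lower bound global rather than merely local.
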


Proposition \ref{prop:distance} implies that $\|\hat D_{N,J}-D_{N,J}^*\|_F^2 = o_p(NJ)$, which, combined with Theorem \ref{thm:loss}, leads to Theorem~\ref{thm:main} below. 
\begin{theorem}\label{thm:main}
Suppose that A0, A3 and A4 are satisfied and $K_+ \geq K$.
Then
\begin{equation}\label{eq:maintheorem}
\lim\limits_{N, J \to \infty} P \left(\min_{F\in \mathcal A_{K_+}}\frac{\sum_{i=1}^N \Vert \ttt_i^+ - F(\hat\ttt_i)\Vert^2}{N} + \frac{\sum_{j=1}^J \Vert \aaaa_j^+ - F(\hat\aaaa_j)\Vert^2}{J}
\leq C\epsilon^2 \right) = 1,
\end{equation}
where $\hat \ttt_i$ and $\hat \aaaa_j$, $i = 1, ..., N$, and $j = 1, ..., J$, are given by \eqref{eq:optimization},
$\epsilon$ is from condition A3, and  $C$ is a constant independent of $\epsilon$, $N$, and $J$. 

\end{theorem}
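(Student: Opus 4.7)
The plan is to chain Proposition~\ref{prop:distance} with Theorem~\ref{thm:loss}, taking the constrained MLE $(\hat\ttt_i,\hat\aaaa_j)$ from \eqref{eq:optimization} as the perturbed configuration. The first observation is that the feasibility constraints built into \eqref{eq:optimization} force $\hat\ttt_i,\hat\aaaa_j\in B^{K_+}_{\bf 0}(M)$ deterministically, so the MLE already sits in the ambient set required by Theorem~\ref{thm:loss}; no additional projection or clipping step is needed.

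Next, under A0 and A4, Proposition~\ref{prop:distance} supplies a high-probability rate on the perturbed partial distance matrix: writing
$$\alpha_{N,J} := C_1 M^2 L_{4M^2}\beta_{4M^2}\sqrt{\tfrac{N+J}{NJ}}\sqrt{1+\tfrac{\log(NJ)}{N+J}},$$
the event $\mathcal E_{N,J} := \{\|\hat D_{N,J}-D^*_{N,J}\|_F^2 \leq \alpha_{N,J}\cdot NJ\}$ has $P(\mathcal E_{N,J}) \geq 1 - C_2/(N+J) \to 1$, and $\alpha_{N,J}\to 0$. On $\mathcal E_{N,J}$, the hypothesis $\|\hat D_{N,J} - D^*_{N,J}\|_F^2 = o(NJ)$ of Theorem~\ref{thm:loss} is satisfied. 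Invoking that theorem under A0 and A3, with $\tilde\ttt_i=\hat\ttt_i$ and $\tilde\aaaa_j=\hat\aaaa_j$, then yields the configuration-recovery bound
$$\min_{F\in\mathcal A_{K_+}} \frac{1}{N}\sum_{i=1}^N\|\ttt_i^+-F(\hat\ttt_i)\|^2 + \frac{1}{J}\sum_{j=1}^J\|\aaaa_j^+-F(\hat\aaaa_j)\|^2 \leq C\epsilon^2,$$
with $C$ inherited from Theorem~\ref{thm:loss}. Since $P(\mathcal E_{N,J})\to 1$, this delivers \eqref{eq:maintheorem}.

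The main subtlety is that Theorem~\ref{thm:loss} is phrased as a deterministic limsup statement, whereas Theorem~\ref{thm:main} requires a probability for each fixed $(N,J)$. I would bridge this by reopening the proof of Theorem~\ref{thm:loss} to extract a non-asymptotic version, of the form $\text{loss}_{N,J} \leq C\epsilon^2 + r_{N,J}\bigl(\|\tilde D_{N,J}-D^*_{N,J}\|_F^2/(NJ)\bigr)$ with a deterministic remainder $r_{N,J}$ that tends to zero whenever $N,J\to\infty$ and the normalized Frobenius distance shrinks. Plugging $\alpha_{N,J}$ into $r_{N,J}$ gives a purely deterministic vanishing term, so on $\mathcal E_{N,J}$ the loss is bounded by $C\epsilon^2$ up to an $o(1)$ quantity that can be absorbed into the constant. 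Alternatively, one may package the step as a subsequence argument: if \eqref{eq:maintheorem} failed, one could extract a realization $(\hat\ttt_i,\hat\aaaa_j)$ along a subsequence with $\|\hat D-D^*\|_F^2=o(N_kJ_k)$ on which the loss exceeds $C\epsilon^2$ infinitely often, contradicting Theorem~\ref{thm:loss}. Everything else is bookkeeping.
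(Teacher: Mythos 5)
Your proposal is correct and follows essentially the same route as the paper, whose entire proof of Theorem~\ref{thm:main} is the single line ``Combining Theorem~\ref{thm:loss} and Proposition~\ref{prop:distance}, we have the result.'' Your additional discussion of how to bridge the deterministic limsup formulation of Theorem~\ref{thm:loss} with the per-$(N,J)$ probability statement (via a non-asymptotic remainder or a subsequence argument) addresses a genuine gap that the paper's one-line proof glosses over, and is a worthwhile refinement rather than a different approach.
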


\begin{remark}
We remark that if A3 holds for any sufficiently small $\epsilon$, then \eqref{eq:maintheorem} implies that the loss
$$\min_{F\in \mathcal A_{K_+}}\frac{\sum_{i=1}^N \Vert \ttt_i^+ - F(\hat\ttt_i)\Vert^2}{N} + \frac{\sum_{j=1}^J \Vert \aaaa_j^+ - F(\hat\aaaa_j)\Vert^2}{J}$$
converges to zero in probability. Further note that according to Proposition~\ref{prop:random}, A3 holds with high probability for any sufficiently small $\epsilon$, under a random design for the true ideal points. Therefore, the loss can be shown to converge to zero in probability, under this random design. This result is summarized in Theorem~\ref{thm:random} below.

\end{remark}

\begin{theorem}\label{thm:random}
Suppose that A0 and A4 are satisfied and $K_+ \geq K$.
Further suppose that $\ttt_1^*, ...,\ttt_N^*$ and $\aaaa_1^*,...,\aaaa_J^*$ are independent and identically distributed samples from distributions  $P_1$ and $P_2$, respectively,
where $P_1$ and $P_2$ have positive and continuous density functions within a ball $G \subset B^K_{\bf 0}(M).$
Then for
$\hat \ttt_i$ and $\hat \aaaa_j$, $i = 1, ..., N$, and $j = 1, ..., J$, given by \eqref{eq:optimization},
the loss function 
\begin{equation*}
\min_{F\in \mathcal A_{K_+}}\frac{\sum_{i=1}^N \Vert \ttt_i^+ - F(\hat\ttt_i)\Vert^2}{N} + \frac{\sum_{j=1}^J \Vert \aaaa_j^+ - F(\hat\aaaa_j)\Vert^2}{J}
\end{equation*}
goes to 0 in probability as $N$ and $J$ grow to infinity.
\end{theorem}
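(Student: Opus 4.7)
The plan is to combine Proposition~\ref{prop:random} with Theorem~\ref{thm:main} via a conditioning argument on the random design. Let $\mathcal{D}_{N,J}$ denote the $\sigma$-algebra generated by $\ttt_1^*,\ldots,\ttt_N^*,\aaaa_1^*,\ldots,\aaaa_J^*$. Since the responses are conditionally independent Bernoullis given $\mathcal{D}_{N,J}$, Theorem~\ref{thm:main}---whose probabilistic statement is only over the responses---can be applied conditionally on any realization of the design for which A0, A3, and A4 all hold. Denote the quantity whose limit we want to control by $\mathrm{Loss}_{N,J}$.

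\textbf{Setting up a fixed deterministic anchor system.} First I would fix, once and for all, a collection of anchor points $\{\bb_1^*,\ldots,\bb_{k_1}^*\}\cup\{\cc_1^*,\ldots,\cc_{k_2}^*\}$ in the interior of the ball $G$ satisfying A1 and A2. Such a configuration exists because $G$ is a full-dimensional ball in $\R^K$, so one can place finitely many points that affine-span $\R^K$ and whose partial distance matrix lies away from any degenerate locus, producing A1 in view of Remark~\ref{rk:anchor}. Crucially, this choice is deterministic, so by Remark~\ref{rmk:decay} the constant $C$ appearing in Theorem~\ref{thm:main} is also deterministic and independent of $N,J$. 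Because $P_1,P_2$ have strictly positive, continuous densities on $G$, every ball $B_{\bb_i^*}(\epsilon)$ and $B_{\cc_j^*}(\epsilon)$ has strictly positive probability for every $\epsilon>0$; the strong law of large numbers applied to the finitely many indicator averages in A3(2) then forces the liminfs there to equal those positive probabilities almost surely. For all sufficiently small $\epsilon>0$, A3(1) also holds by Remark~\ref{rk:anchor}. So for any fixed sufficiently small $\epsilon>0$, A3 is satisfied for the random design on an event of probability one---this is essentially Proposition~\ref{prop:random}, but applied with the fixed anchor points and the fixed constant $C$.

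\textbf{Conditioning and unconditioning.} Given $\delta>0$, choose $\epsilon>0$ small enough that $C\epsilon^2<\delta$ and that the previous step applies, and let $\Omega^*$ denote the almost-sure event on which A3 holds with this $\epsilon$. On $\Omega^*$, assumptions A0, A3, A4 all hold and $K_+\geq K$, so Theorem~\ref{thm:main} applied to the conditional distribution of $(y_{ij})$ given $\mathcal{D}_{N,J}$ yields
$$P\bigl(\mathrm{Loss}_{N,J}>\delta\bigm|\mathcal{D}_{N,J}\bigr)\;\le\;P\bigl(\mathrm{Loss}_{N,J}>C\epsilon^2\bigm|\mathcal{D}_{N,J}\bigr)\;\longrightarrow\;0\quad\text{as }N,J\to\infty,$$
almost surely on $\Omega^*$. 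Since these conditional probabilities are bounded by $1$, dominated convergence gives $P(\mathrm{Loss}_{N,J}>\delta)=\mathbb E[P(\mathrm{Loss}_{N,J}>\delta\mid\mathcal{D}_{N,J})]\to 0$, which is convergence in probability. The only real subtlety---the main obstacle---is ensuring that the anchor points and the constant $C$ used when invoking Theorem~\ref{thm:main} are chosen independently of the sample, so that the conditional application is legitimate and $C$ does not move with $N$ or $J$; this is precisely what the deterministic construction in Step 1 achieves, reducing the whole argument to a clean chaining of the two previously established results.
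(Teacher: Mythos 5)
Your proposal is correct and follows essentially the same route as the paper: the paper likewise proves this result by chaining Proposition~\ref{prop:random} (A3 holds almost surely for all sufficiently small $\epsilon$ under the random design) with Proposition~\ref{prop:distance} and Theorem~\ref{thm:loss} (equivalently, Theorem~\ref{thm:main}), intersecting the almost-sure design event with the high-probability event for the partial distance matrix. Your write-up is merely more explicit about the measure-theoretic bookkeeping --- fixing a deterministic anchor system so that $C$ does not depend on the sample, and passing from conditional to unconditional probability via dominated convergence --- which the paper's one-line proof leaves implicit.
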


\begin{remark}
We remark that the probability measures in Theorems~\ref{thm:main} and \ref{thm:random} are slightly different. The probability in Theorem~\ref{thm:main} is
based on the conditional distribution of $Y_{ij}$s given $\ttt_i^*$ and $\aaaa_j^*$, while that for
Theorem~\ref{thm:random} is based on the joint distribution of $Y_{ij}$,  $\ttt_i^*$ and $\aaaa_j^*, i = 1, ..., N,  j = 1, ..., J$.

\end{remark}

\begin{remark}

A stress function is a squared error loss function that plays an important role in the classical MDS/MDU algorithms.  It serves  not only as the objective function in the search for the MDS/MDU solution, but also
as the basis for assessing the goodness-of-fit of the solution \citep{mair2016goodness}.
In the proposed framework, the negative joint log-likelihood function plays a similar role as the stress function.
It replaces the squared loss in the stress function by a loss function based on the Kullback–Leibler divergence. Similar goodness-of-fit measures in classical MDU can be developed  under the proposed framework, based on the negative joint log-likelihood.

\end{remark}

\begin{remark}\label{rk:dimension}
We remark on the choice of latent dimension.
Theorems \ref{thm:main} and \ref{thm:random} suggest that as long as we choose $K_+$ to be no less than the true dimension $K,$ then the unfolding result is asymptotically valid. When there is no such prior knowledge about an upper bound of $K$, one can estimate the latent dimension $K$ using data.
Several methods from factor analysis and network data analysis may be adapted to the current problem, such as
trace-norm regularization \citep{bach2008consistency},
cross-validation \citep{chen2018network,li2020network}, and information criteria \citep{bai2002determining}.
We believe that consistency results on the selection of $K$ can be established.


\end{remark}


\begin{remark}
  We point out that the result of Proposition~\ref{prop:distance} can be easily extended to other MDU models, such as  models with additional parameters in the link function and models for rating and ranking data.  Then, by making use of Theorem~\ref{thm:loss}, the results of Theorems~\ref{thm:main} and \ref{thm:random} can also be extended to these models.
\end{remark}

We propose an alternating minimization algorithm
for solving \eqref{eq:optimization}.
To handle the constraints in \eqref{eq:optimization}, a projected gradient descent update is used in each iteration. For $\x \in \mathbb{R}^{K_+}$, we define the following projection operator:
\begin{equation*}
\text{Proc}_{M}(\x) = \argmin_{\|\y\| \leq M} \|\y-\x\| = \left\{
\begin{array}{ll}
\x \quad &\text{if } \|\x\| \leq M,\\
M\x/\|\x\| \quad &\text{if } \|\x\| > M.
\end{array}\right.
\end{equation*}

\begin{algorithm}[Alternating minimization algorithm]\label{algo:estimate}
~

\begin{itemize}
  \item[] \textbf{Input}: Data $(y_{ij})_{N\times J}$, pre-specified dimension $K_+$, constraint $M$, iteration number $m=1$,
   and the initial values $\ttt_1^{(0)},...,\ttt_N^{(0)}$ and $\aaaa_1^{(0)},...,\aaaa_J^{(0)}$ in $\mathbb{R}^{K_+}$.
  \item[] \textbf{Alternating minimization}: at the $m$th iteration, perform
  \begin{itemize}
    \item[(a)] For each respondent $i$, update
    $$\ttt_i^{(m)}= \text{Proc}_{M}\left(\ttt_i^{(m-1)}+ \varrho \mathbf s_i^{(m-1)}(\ttt_i^{(m-1)})\right),$$
    where
    \begin{equation*}
    \begin{aligned}
    &\mathbf s_i^{(m-1)}(\ttt) \\
   =& \frac{\partial}{\partial \ttt} \left(\sum\limits_{j=1}^J y_{ij}\log f(\|\ttt-\aaaa_j^{(m-1)}\|^2) + (1-y_{ij})\log\big(1-f(\|\ttt-\aaaa_j^{(m-1)}\|^2)\big)\right).
   \end{aligned}
    \end{equation*}
    The step size $\varrho>0$ is chosen by line search.
  \item[(b)]  For each item $j$, update
    $$\aaaa_j^{(m)}= \text{Proc}_{M}\left(\aaaa_j^{(m-1)}+ \varrho \tilde{\mathbf s}_j^{(m-1)}(\aaaa_j^{(m-1)})\right),$$
    where
    \begin{equation*}
    \begin{aligned}
    &\tilde{\mathbf s}^{(m-1)}_j(\aaaa) \\
   =& \frac{\partial}{\partial \aaaa} \left(\sum\limits_{i=1}^N y_{ij}\log f(\|\ttt^{(m)}_i-\aaaa\|^2) + (1-y_{ij})\log\big(1-f(\|\ttt^{(m)}_i -\aaaa \|^2)\big)\right).
   \end{aligned}
   \end{equation*}
   The step size $\varrho>0$ is chosen by line search.
   \item[] Iteratively perform steps (a) and (b) until convergence. Let  $m^*$ be the last iteration number upon convergence.
  \end{itemize}
  \item[] \textbf{Output}:
  $\hat \ttt_1 = \ttt_1^{(m^*)},...,\hat \ttt_N = \ttt_N^{(m^*)}$ and $\hat \aaaa_1 = \aaaa_1^{(m^*)},...,\hat \aaaa_J = \aaaa_J^{(m^*)}$.
\end{itemize}
\end{algorithm}

\begin{remark}\label{rmk:nonconvex}
Since \eqref{eq:optimization} is not a convex optimization problem, there is no guarantee that Algorithm~\ref{algo:estimate} finds the global optimal solution. However, we point out that the previous theoretical results hold even when $\{\hat \ttt_1,...,\hat \ttt_N,\hat \aaaa_1,...,\hat \aaaa_J\}$ is not a global optimal point.
Specifically, Proposition \ref{prop:distance} and Theorems \ref{thm:main} and \ref{thm:random} hold for
any $\{\hat \ttt_1,...,\hat \ttt_N,\hat \aaaa_1,...,\hat \aaaa_J\}$ satisfying the constraints in \eqref{eq:optimization} and
\begin{equation}\label{eq:localopt}
L(\hat \ttt_1,...,\hat \ttt_N,\hat \aaaa_1,...,\hat \aaaa_J) \geq L(\ttt_1^*,...,\ttt_N^*,\aaaa_1^*,..., \aaaa_J^*).
\end{equation}
According to our simulation study, estimates given by Algorithm~\ref{algo:estimate} are likely to satisfy \eqref{eq:localopt}.

\end{remark}

\subsection{Analyzing Missing Data}

We further discuss the  configuration recovery problem when data have many missing values, which is commonly encountered in practice.
Denote matrix $\Omega = (\omega_{ij})_{N\times J}$, where $\omega_{ij}=1$ indicates that response $y_{ij}$ is observed and $\omega_{ij}=0$ indicates $y_{ij}$ is missing. We consider the simple case of uniformly missing, as
described in condition A5. We point out that this assumption can be relaxed to analyzing data that have non-uniformly missing entries, following the developments in \cite{cai2013max} for solving a 1-bit matrix completion problem.

\begin{itemize}
\item[A5.] Entries of $\Omega$, $\omega_{ij}$, are independent and identically distributed Bernoulli random variables with
\begin{equation*}
P(\omega_{ij}=1)=\frac{n}{NJ}.
\end{equation*}
\end{itemize}

Under this condition, there are on average $n$ entries of the data matrix $(y_{ij})_{N\times J}$ that are observable. Thanks to the ignorable missingness, given $\Omega$ and the observed data, the likelihood  becomes
$$L^{\Omega}(\ttt_1,...,\ttt_N,\aaaa_1,...,\aaaa_J) = \prod_{\omega_{ij}=1} f(\|\ttt_i-\aaaa_j\|^2)^{y_{ij}}(1-f(\|\ttt_i-\aaaa_j\|^2)^{1-y_{ij}}.$$
We still consider a constrained maximum likelihood estimator
\begin{align}\label{eq:optimizationmissing}
\begin{split}
(\hat{\ttt}_1^{\Omega},...,\hat{\ttt}_N^{\Omega},\hat{\aaaa}_1^{\Omega},...,\hat{\aaaa}_J^{\Omega}) &= \argmin_{\ttt_1,...,\ttt_N,\aaaa_1,...,\aaaa_J \in \mathbb R^{K_+}} -\log L^{\Omega}(\ttt_1,...,\ttt_N,\aaaa_1,...,\aaaa_J)\\
 s.t.~~~ &~  \|\ttt_i\| \leq M,  \ \|\aaaa_j\| \leq M, \quad i = 1,...,N, \ j = 1,...,J.\end{split}
\end{align}
Let $\hat D_{N,J}^{\Omega}$ denote the partial distance matrix for $\hat{\boldsymbol\theta}_1^{\Omega},...,\hat{\ttt}_N^{\Omega},\hat{\aaaa}_1^{\Omega},...,\hat{\aaaa}_J^{\Omega}$.
Proposition \ref{prop:distancemissing} presents a missing-data version of Proposition~\ref{prop:distance}. It
implies that we can still recover the partial distance matrix if $n$ is large enough.

\begin{proposition}\label{prop:distancemissing}
Suppose that A0, A4 and A5 are satisfied and $K_+ \geq K$.
Then there exist $C_1$ and $C_2$ independent of $N$ and $J$, such that
\begin{equation}\label{eq:miss}
\frac{1}{NJ}\|\hat{D}_{N,J}^{\Omega}-D_{N,J}^*\|_F^2 \leq  C_1M^2L_{4M^2}\beta_{4M^2} \sqrt{\frac{N+J}{n}}\sqrt{1+\frac{NJ\log(NJ)}{n(N+J)}}\end{equation}
with probability at least $1-{C_2}/{(N+J)}.$
\end{proposition}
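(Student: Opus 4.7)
The plan is to follow the 1-bit matrix completion template of \cite{davenport20141}, adapted to the distance-matrix parametrization and to the random mask $\Omega$. The three essential ingredients are: (a) a basic inequality obtained by combining the MLE optimality of $(\hat{\ttt}_i^\Omega,\hat{\aaaa}_j^\Omega)$ with strong convexity of the per-entry log-loss under A4; (b) an operator-norm/nuclear-norm duality bound on the resulting ``score'' term, exploiting the low rank and bounded entries of $\hat D^\Omega_{N,J}$ and $D^*_{N,J}$; and (c) a restricted-isometry-type concentration step that converts the mask-weighted squared error appearing in (a) into the full Frobenius norm appearing in the statement.

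For (a), write $\ell(d,y)=-y\log f(d)-(1-y)\log(1-f(d))$, so that $-\log L^\Omega=\sum_{ij}\omega_{ij}\ell(d_{ij},y_{ij})$. The defining optimality of $(\hat{\ttt}_i^\Omega,\hat{\aaaa}_j^\Omega)$ in \eqref{eq:optimizationmissing} gives $\sum_{ij}\omega_{ij}[\ell(\hat d_{ij},y_{ij})-\ell(d^*_{ij},y_{ij})]\leq 0$. A second-order Taylor expansion in $d$ on $|d|\leq 4M^2$, combined with the bounds $L_{4M^2}$ and $\beta_{4M^2}$ in A4 (which give $1/\beta_{4M^2}$-strong convexity of $\ell(\cdot,y)$ on this interval), yields the basic inequality
\begin{equation*}
\sum_{ij}\omega_{ij}(\hat d_{ij}-d^*_{ij})^2 \;\leq\; 2\beta_{4M^2}\Bigl|\sum_{ij}Z_{ij}(\hat d_{ij}-d^*_{ij})\Bigr|, \qquad Z_{ij}:=\omega_{ij}\,\ell'(d^*_{ij},y_{ij}),
\end{equation*}
where the $Z_{ij}$ are independent, mean-zero, and bounded in absolute value by $L_{4M^2}$.

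For (b), the decomposition $d^*_{ij}=\|\ttt_i^*\|^2-2(\ttt_i^*)^\top\aaaa_j^*+\|\aaaa_j^*\|^2$ expresses $D^*_{N,J}$ as a sum of three pieces of rank at most $K+2$ in total, and similarly $\hat D^\Omega_{N,J}$ has rank at most $K_++2$; combined with $|\hat d_{ij}-d^*_{ij}|\leq 8M^2$ this gives $\|\hat D^\Omega_{N,J}-D^*_{N,J}\|_*\leq CM^2\sqrt{K_+ NJ}$. Operator-norm/nuclear-norm duality then produces $|\langle Z,\hat D^\Omega_{N,J}-D^*_{N,J}\rangle|\leq CM^2\sqrt{K_+ NJ}\,\|Z\|_{\mathrm{op}}$. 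A noncommutative Bernstein inequality applied to $Z$, whose independent entries have variance at most $pL_{4M^2}^2$ for $p=n/(NJ)$, gives
\begin{equation*}
\|Z\|_{\mathrm{op}} \;\leq\; CL_{4M^2}\Bigl(\sqrt{p\max(N,J)}+\log(N+J)\Bigr)
\end{equation*}
with probability at least $1-C_2/(N+J)$, matching precisely the two summands that appear under the square root of the stated rate.

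The main obstacle is ingredient (c). The left-hand side of the basic inequality is a mask-weighted squared error, while the proposition is stated in terms of the unweighted Frobenius norm. Following the matrix-completion machinery of \cite{davenport20141} and the refinement of \cite{cai2013max}, I would establish a uniform restricted-isometry bound: with high probability, for every matrix $\Delta$ of rank at most $2K_++4$ with sup-norm at most $8M^2$,
\begin{equation*}
\sum_{ij}\omega_{ij}\Delta_{ij}^2 \;\geq\; \tfrac{1}{2}p\|\Delta\|_F^2 - \xi_{N,J,n},
\end{equation*}
with the additive slack $\xi_{N,J,n}$ controlled through an $\epsilon$-net of the bounded low-rank cone together with matrix Bernstein, and a peeling argument used to handle different scales of $\|\Delta\|_F$. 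Plugging this lower bound into the basic inequality, solving the resulting linear inequality in $\|\hat D^\Omega_{N,J}-D^*_{N,J}\|_F$, inserting $p=n/(NJ)$, and dividing by $NJ$ produces the claimed bound $C_1M^2L_{4M^2}\beta_{4M^2}\sqrt{(N+J)/n}\sqrt{1+NJ\log(NJ)/[n(N+J)]}$; the first factor under the square root comes from the variance piece of the Bernstein estimate and the second from its large-deviation piece. All constants produced along the way depend only on $K_+$ and universal quantities, so $C_1$ and $C_2$ are independent of $N$ and $J$ as required.
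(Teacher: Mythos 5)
Your plan is a genuinely different route from the paper's. The paper follows \cite{davenport20141}: it proves a uniform deviation bound for the centered log-likelihood $\bar l_{\Omega,Y}$ over the nuclear-norm ball $G$ (Lemma on concentration), then uses only the likelihood dominance $L(\hat D)\ge L(D^*)$ to bound the \emph{expected} log-likelihood gap, which equals $n$ times an averaged Kullback--Leibler divergence; this is passed to the Hellinger distance and then to the Frobenius norm via Lemma A.2 of \cite{davenport20141}. Because the expectation is taken over the mask as well as over $Y$, the masked sum becomes $(n/NJ)$ times the full sum automatically, so no restricted-isometry step is needed. Your program (score linearization, nuclear/operator duality plus matrix Bernstein, then an RIP step to pass from the mask-weighted to the unweighted Frobenius norm) is the restricted-strong-convexity template and would, if completed, deliver the same rate up to a harmless extra $\sqrt{\log(NJ)}$ factor coming from the large-deviation piece of your operator-norm bound.

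However, your step (a) has a genuine gap. You assert that A4 yields $1/\beta_{4M^2}$-strong convexity of $d\mapsto\ell(d,y)$ on $|d|\le 4M^2$ for each fixed $y\in\{0,1\}$, and you rely on this to absorb the quadratic remainder of the Taylor expansion into the left-hand side of your basic inequality. A4 controls only $f'$ and $f(1-f)$ and says nothing about $f''$; for $y=1$ one has $\partial_d^2\ell(d,1)=(f'(d)/f(d))^2-f''(d)/f(d)$, which can be negative for links satisfying A4, so the realized per-entry loss need not even be convex in $d$. The quantity $1/\beta_{\alpha}$ lower-bounds the curvature of the \emph{expected} loss at the truth (equivalently, the local KL/Hellinger geometry), not the curvature of $\ell(\cdot,y)$ at arbitrary intermediate points $\xi$ between $\hat d_{ij}$ and $d^*_{ij}$. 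The fix is to split $\sum_{ij}\omega_{ij}\bigl[\ell(\hat d_{ij},y_{ij})-\ell(d^*_{ij},y_{ij})\bigr]$ into its conditional expectation given $\Omega$ --- which equals $\sum_{ij}\omega_{ij}D_{KL}\bigl(f(d^*_{ij})\Vert f(\hat d_{ij})\bigr)$ and is bounded below by $(8\beta_{4M^2})^{-1}\sum_{ij}\omega_{ij}(\hat d_{ij}-d^*_{ij})^2$ via the Hellinger comparison of Lemma A.2 of \cite{davenport20141} --- plus a centered empirical-process term. That centered term is the deviation of the full loss difference, not merely the linear score term $\sum_{ij}Z_{ij}(\hat d_{ij}-d^*_{ij})$, so it must be controlled uniformly over the feasible set, e.g.\ by symmetrization and Lipschitz contraction with constant $L_{4M^2}$ over the nuclear-norm ball, which is exactly the content of the paper's concentration lemma. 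With that substitution your ingredients (b) and (c) go through and the stated bound follows.
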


\begin{remark}
If $n > (N+J)\log(NJ)$, then the right side of \eqref{eq:miss}
goes to 0 as $N$ and $J$ grow to infinity, which means ${\|\hat {D}_{N,J}^{\Omega}-D^*_{N,J}\|_F^2} = o_p(NJ)$. Following the discussion in Section~\ref{sec:theory},  $\{\hat{\ttt}_1^{\Omega},...,\hat{\ttt}_N^{\Omega},\hat{\aaaa}_1^{\Omega},...,\hat{\aaaa}_J^{\Omega}\}$
provides a consistent estimate of the ideal point configuration.
This consistency result is summarized in Proposition~\ref{prop:randommissing}, which is a missing-data version of Theorem~\ref{thm:random} under the random design.
\end{remark}

\begin{proposition}\label{prop:randommissing}
Suppose that A0, A4 and A5 are satisfied, and $K_+\geq K,$ and $n > (N+J)\log(NJ)$.
Further suppose that $\ttt_1^*, ...,\ttt_N^*$ and $\aaaa_1^*,...,\aaaa_J^*$ are independent and identically distributed samples from distributions  $P_1$ and $P_2$,
where $P_1$ and $P_2$ have positive and continuous density functions within  a ball $G \subset B^K_{\bf 0}(M).$
Then the loss function
\begin{equation*}
\min_{F\in \mathcal A_{K_+}}\frac{\sum_{i=1}^N \Vert \ttt_i^+ - F(\hat \ttt_i^{\Omega})\Vert^2}{N} + \frac{\sum_{j=1}^J \Vert \aaaa_j^+ - F(\hat \aaaa_j^{\Omega})\Vert^2}{J}
\end{equation*}
goes to 0 in probability as $N$ and $J$ grow to infinity.
\end{proposition}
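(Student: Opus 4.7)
The plan is to prove Proposition~\ref{prop:randommissing} by adapting the argument underlying Theorem~\ref{thm:random} to the missing-data setting. Structurally, I would combine three ingredients already established in the excerpt: Proposition~\ref{prop:distancemissing} to control the perturbation of the partial distance matrix, Proposition~\ref{prop:random} to guarantee the anchor-point regularity condition A3 under the random design, and Theorem~\ref{thm:loss} to translate distance recovery into configuration recovery.

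First I would verify that the right-hand side of the bound in \eqref{eq:miss} vanishes under the sample-size assumption $n > (N+J)\log(NJ)$. Since $\sqrt{(N+J)/n} < 1/\sqrt{\log(NJ)} \to 0$, while the second factor $\sqrt{1 + NJ\log(NJ)/(n(N+J))}$ is bounded by $\sqrt{1 + NJ/(N+J)^2} \leq \sqrt{5}/2$ via the AM--GM inequality $(N+J)^2 \geq 4NJ$, the entire bound is of order $O(1/\sqrt{\log(NJ)})$. Hence, conditional on any realization of the true ideal points, with probability at least $1 - C_2/(N+J)$ over the responses and the missingness pattern, $\|\hat D_{N,J}^{\Omega} - D^*_{N,J}\|_F^2/(NJ) \to 0$, so the perturbation is $o(NJ)$ with probability tending to one.

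Next I would invoke Proposition~\ref{prop:random}: under the i.i.d.\ random design on $P_1, P_2$, there is a probability-one event on the underlying sequences $\{\ttt^*_i\}$ and $\{\aaaa^*_j\}$ on which A3 holds for any sufficiently small $\epsilon > 0$. Conditioning on any such realization, I fix an arbitrary $\delta > 0$ and choose $\epsilon$ small enough that $C\epsilon^2 < \delta$, where $C$ is the constant in Theorem~\ref{thm:loss} determined by the fixed anchor-point configuration. On the high-probability event that the perturbation is $o(NJ)$, Theorem~\ref{thm:loss} yields a configuration error at most $C\epsilon^2 < \delta$ for all sufficiently large $N, J$. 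Therefore the conditional probability that the loss exceeds $\delta$ vanishes, and averaging via dominated convergence over the ideal-point distribution yields the unconditional convergence in probability.

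The main obstacle is handling the two distinct sources of randomness carefully---the random draw of the ideal points and the randomness in responses/missingness---so that the measure-one event from Proposition~\ref{prop:random} (an asymptotic statement along the ideal-point sequences) interlocks cleanly with the conditional high-probability bound from Proposition~\ref{prop:distancemissing}. A secondary subtlety is that Theorem~\ref{thm:loss} delivers an asymptotic $\limsup$ bound rather than a uniform tail estimate, so the passage from a deterministic limit to a probabilistic statement must be handled via a standard subsequence or continuity argument applied along the randomness of the data.
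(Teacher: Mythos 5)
Your proposal is correct and follows essentially the same route as the paper: the paper's own proof simply defines the event on which Proposition~\ref{prop:distancemissing} holds, invokes Proposition~\ref{prop:random} to secure A3 almost surely, and applies Theorem~\ref{thm:loss} on that event. Your write-up is in fact more careful than the paper's (which is only a few lines and even contains a typo, asserting $\Pr(\mathcal A_{N,J}) \to 0$ where $\to 1$ is meant), particularly in verifying that the bound in \eqref{eq:miss} vanishes under $n > (N+J)\log(NJ)$ and in separating the two sources of randomness.
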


\section{Simulation Studies}\label{sec:simu}

In what follows, simulation studies are conducted to verify our theoretical results. Specifically, we consider a random design where the true ideal points are generated from distributions. All the analyses in this section, as well as those in Section~\ref{sec:real}, are based on our implementation of Algorithm~\ref{algo:estimate} in statistical software R.


\subsection{Study I}

\paragraph{Setting.}
We first consider a setting where $K_+$ is chosen to be exactly $K$.
 We consider MDU in a two-dimensional latent space, i.e., $K = K_+ = 2$.
Diverging sequences of $J$ and $N$ are considered, by letting $J = 200,400,...,1000$ and $N=20J$.
For given $N$ and $J$, 100 independent datasets are generated.  For each dataset, we first sample $\ttt_i^*$s and $\aaaa_j^*$s uniformly from $B^2_{\bf0}(1)$, a ball in $\mathbb R^2$ with
center $\bf0$ and radius $1$. Then given the ideal points, response data $Y_{ij}$ are generated under the link function $f(x) = 2/(1+\exp(x+0.1))$. It can be easily verified that condition A4 is satisfied for this link function.

For each dataset, we obtain an estimate of the ideal points, by applying Algorithm~\ref{algo:estimate} ten times with
random starting points and then choosing the result that gives the largest likelihood function value.
The use of multiple starting points substantially reduces the risk of the algorithm  converging to bad local minima.
In the application of Algorithm \ref{algo:estimate}, the constraint $M$ is set to 1.5.


\paragraph{Results.}
We first check the obtained likelihood function values for the 100 datasets. As we point out in Remark~\ref{rmk:nonconvex}, Proposition \ref{prop:distance} and Theorems \ref{thm:main} and \ref{thm:random} still hold as long as the estimate satisfies  \eqref{eq:localopt}, even if the global solution to the optimization~\eqref{eq:optimization} is not obtained.
It is found that by using ten random starting points, the likelihood function at the estimated parameters is always larger than that at the true parameters for all the 100 datasets.

We then present the average squared Frobenius loss for the recovery of the partial distance matrix, $\|\hat D_{N,J} - D_{N,J}^*\|_F^2/(NJ)$. These results are given in Table~\ref{tab:loss_d_K22} which presents the 25\%, 50\%, and 75\% quantiles of the loss based on the 100 datasets. From this table, we see that the loss tends to decrease as the sample size increases, supporting the result of Proposition~\ref{prop:distance}.

\begin{table}
\centering
\begin{tabular}{r|r|r|r|r|r}
  \hline
 & $J = 200$ & $J = 400$ & $J = 600$ & $J= 800$ & $J = 1000$ \\
  \hline
25\% & 0.0630 & 0.0323 & 0.0218 & 0.0164 & 0.0131 \\
  median & 0.0647 & 0.0328 & 0.0222 & 0.0167 & 0.0134 \\
  75\% & 0.0666 & 0.0336 & 0.0227 & 0.0170 & 0.0135 \\
   \hline
\end{tabular}
\caption{Simulation Study I: The average squared Frobenius loss of partial distance when $J$ increases from 200 to 1000. For each $J$, the table shows the 25\%, 50\% and 75\% quantiles of the loss based on 100 independent experiments.}
\label{tab:loss_d_K22}
\end{table}

%

Table~\ref{tab:loss_c_K22} presents the results on loss
\eqref{eq:loss} for configuration recovery, where the
best isometry mapping $F$ in \eqref{eq:loss} is obtained by solving an optimization problem given the true and estimated ideal points. Similar to the results on partial distance matrix recovery, the loss \eqref{eq:loss} also decreases towards 0 as $J$ grows large, which is consistent with the result of Theorem~\ref{thm:random}.

 \begin{table}
\centering
\begin{tabular}{r|r|r|r|r|r}
  \hline
 & $J= 200$ & $J = 400$ & $J = 600$ & $J = 800$ & $J = 1000$ \\
  \hline
25\% & 0.0158 & 0.0079 & 0.0053 & 0.0040 & 0.0032 \\
  median & 0.0160 & 0.0080 & 0.0053 & 0.0040 & 0.0032 \\
  75\% & 0.0162 & 0.0080 & 0.0054 & 0.0040 & 0.0032 \\
   \hline
\end{tabular}
\caption{Simulation Study I: The average loss for  configuration  recovery when $J$ increases from 200 to 1000. For each $J$, the table shows the 25\%, 50\% and 75\% quantiles of the loss based on 100 independent experiments.}
\label{tab:loss_c_K22}
\end{table}

Finally, the computation time on a standard desktop machine\footnote{ All the computation is conducted on a single Intel\circledR Gold 6130 core.} for solving \eqref{eq:optimization} is shown in
Table~\ref{tab:time_K22}. It is worth pointing out that since the update of person and item parameters in each iteration of Algorithm~\ref{algo:estimate} can be run in parallel, the computation can be further speeded up substantially  by parallel computing.

\begin{table}
\centering
\begin{tabular}{r|r|r|r|r|r}
  \hline
 & $J = 200$ & $J = 400$ & $J = 600$ & $J = 800$ & $J = 1000$ \\
  \hline
25\% & 98.2 & 109.6 & 144.1 & 191.5 & 254.6 \\
  median & 113.8 & 120.1 & 156.0 & 201.5 & 272.6 \\
  75\% & 128.9 & 138.0 & 176.9 & 213.9 & 286.7 \\
   \hline
\end{tabular}
\caption{Simulation Study I: The computation time of optimization \eqref{eq:optimization} when $J$ increases from 200 to 1000. For each $J$, 25\%, 50\% and 75\% quantiles of the computation time from 100 independent experiments are shown.}
\label{tab:time_K22}
\end{table}

\subsection{Study II}

\paragraph{Setting.} We now consider a setting where $K_+ > K$. We take the same setting as in Study I, except that we set $K_+ = 3$ when fitting the MDU model.
The same as Study I, for each pair of $N$ and $J$, 100 independent datasets are generated. For each dataset, Algorithm~\ref{algo:estimate} is applied similarly, using 10 random starting points and constraint parameter $M = 1.5$.


\paragraph{Results.} The results are given in Tables~\ref{tab:loss_d_K23} through \ref{tab:time_K23}.
Similar to  Tables~\ref{tab:loss_d_K22}--\ref{tab:time_K22}, these three tables also show the results on partial distance matrix recovery, configuration recovery, and computation time, respectively.
Comparing with the results of Study I, we see that both losses for the recovery of partial distance matrix and configuration tend to be larger. This is due to the overfitting brought by adding unnecessary parameters in the model.
 The computation time also increases compared with that of Study I.



\begin{table}
\centering
\begin{tabular}{r|r|r|r|r|r}
  \hline
 & $J = 200$ & $J = 400$ & $J = 600$ & $J = 800$ & $J = 1000$ \\
  \hline
25\% & 0.0734 & 0.0384 & 0.0261 & 0.0198 & 0.0159 \\
  median & 0.0758 & 0.0390 & 0.0265 & 0.0200 & 0.0161 \\
  75\% & 0.0780 & 0.0398 & 0.0269 & 0.0204 & 0.0163 \\
   \hline
\end{tabular}
\caption{Simulation Study II: The average squared Frobenius loss of partial distance when $J$ increases from 200 to 1000. For each $J$, the table shows the 25\%, 50\% and 75\% quantiles of the loss based on 100 independent experiments.}
\label{tab:loss_d_K23}
\end{table}

\begin{table}
\centering
\begin{tabular}{r|r|r|r|r|r}
  \hline
 & $J = 200$ & $J = 400$ & $J = 600$ & $J = 800$ & $J= 1000$  \\
  \hline
25\% & 0.0853 & 0.0568 & 0.0452 & 0.0386 & 0.0343 \\
  median & 0.0862 & 0.0573 & 0.0455 & 0.0390 & 0.0345 \\
  75\% & 0.0877 & 0.0580 & 0.0459 & 0.0392 & 0.0346 \\
   \hline
\end{tabular}
\caption{Simulation Study II: The average loss for  configuration  recovery when $J$ increases from 200 to 1000. For each $J$, the table shows the 25\%, 50\% and 75\% quantiles of the loss based on 100 independent experiments.}
\label{tab:loss_c_K23}
\end{table}

\begin{table}
\centering
\begin{tabular}{r|r|r|r|r|r}
  \hline
 & $J = 200$ & $J = 400$ & $J = 600$ & $J = 800$ & $J = 1000$ \\
  \hline
25\% & 106.3 & 264.0 & 639.3 & 1294.0 & 2302.5 \\
  median & 110.0 & 286.5 & 698.8 & 1407.7 & 2480.8 \\
  75\% & 112.9 & 308.9 & 793.8 & 1551.0 & 2841.1 \\
   \hline
\end{tabular}
\caption{Simulation Study II: The computation time of optimization \eqref{eq:optimization} when $J$ increases from 200 to 1000. For each $J$, 25\%, 50\% and 75\% quantiles of the computation time from 100 independent experiments are shown.}
\label{tab:time_K23}
\end{table}



\section{Real Examples}\label{sec:real}

\subsection{Example I: Movie Data}\label{subsec:real1}

\paragraph{Background.} We apply MDU to a movie rating dataset from the famous MovieLens project \citep[see e.g.,][]{harper2016movielens}. The dataset analyzed in this paper is a subset of a benchmark MovieLens dataset collected during a seven-month period from September, 1997 through April, 1998\footnote{The dataset can be downloaded from \url{https://grouplens.org/datasets/movielens/100k/}}. This subset contains 943 users and 338 movies,
obtained by selecting movies that have been rated by at least 100 users. Unlike many analyses of MovieLens data that focus on the rating scores, we consider to unfold the rating behavior itself (i.e., rated/not rated) which may also reveal the users' preference patterns. More precisely, we let $Y_{ij} = 1$ if movie $j$ has been rated by user $i$ and $Y_{ij} = 0$ otherwise.

\paragraph{Analysis.} For visualization purpose, we unfold the data onto a two-dimensional space.
To apply the MDU model introduced in this paper, we need to specify the link function $f$. We assume $f$ to take the logistic form $f(x) = 2/(1+\exp(x+\delta))$, where $\delta$ is a pre-specified small positive constant. For any $\delta > 0$, it is easy to check that the regularity condition A4 is satisfied. The results presented below are based on the choice $\delta = 0.1$, but we point out that other choices of $\delta$ ($\delta = 0.05, 0.15, 0.2$) have also been tried which all lead to very similar results.  The constraint constant $M$ is set to 3.5 when applying Algorithm~\ref{algo:estimate}. After obtaining the estimate, we transform the estimated ideal points by an isometry mapping, so that the $x$-axis corresponds to the dimension along which the estimated movie ideal points have the highest variance. As will be described in the sequel, under this isometry mapping of the estimated ideal points, both the $x$- and $y$-axes receive good interpretations.


\paragraph{Results.} The results from the MDU analysis are presented in Figures \ref{fig:whole} through \ref{fig:user_xy}. Figure~\ref{fig:whole} jointly visualizes the estimated movie and user points. As we can see, the movies and the users tend to form two giant clusters that only slightly overlap.

\begin{figure}
\centering
\includegraphics[scale = 0.55]{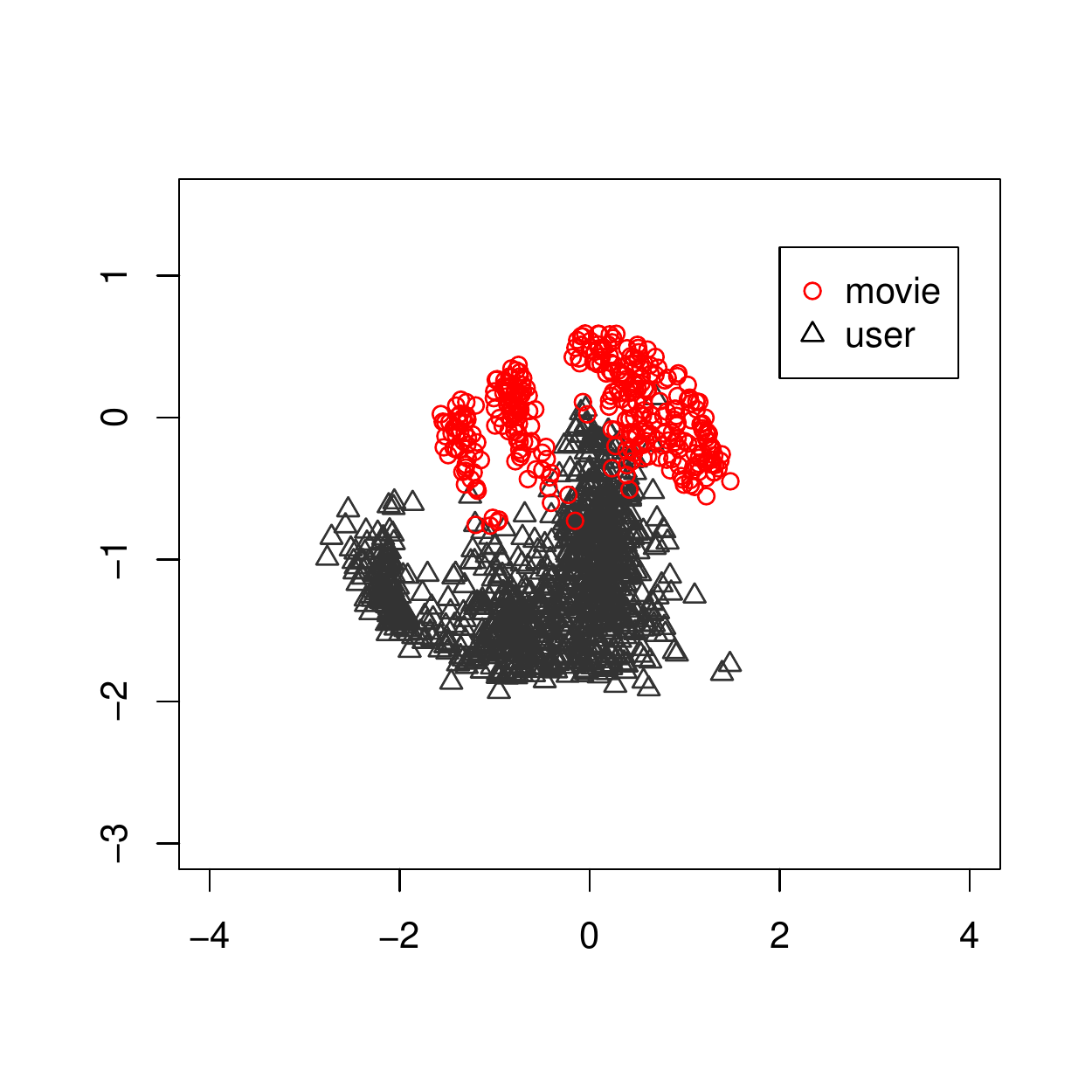}
\caption{Analysis of movie rating data: Simultaneous visualization of the estimated movie and user points.}
\label{fig:whole}
\end{figure}

We investigate the movie points. First, the y-axis of the space largely indicates, if not perfectly,  the popularity of the movies. The movies with a smaller $\hat a_{j2}$ value tends to be rated more frequently. Roughly speaking,
the shorter the average distance from a movie to the user points, the more often the movie is rated.
In fact, the Kendall's tau rank correlation between $\hat a_{j2}$s and the numbers of ratings received by the movies is $-0.66$. This phenomenon is further reflected by panel (a) of Figure~\ref{fig:movie_xy}, where movies are stratified by the numbers of ratings they received into four categories. These four categories tend to be ordered along the $y$-axis. We list four movies as examples, as indicated
in panel (a) of Figure~\ref{fig:movie_xy}. From the top to the bottom, they are Batman Forever (1995), Golden Eye (1995), Get Shorty (1995) and The Godfather (1972), respectively. Based on our interpretation of the $y$-axis, these four movies are ordered from the least popular to the most popular. 

Second, the $x$-axis of the space seems to indicate the release time of the movies. The Kendall's tau correlation
between $\hat a_{j1}$s and the release dates of the movies is -0.70. As shown in panel (b) of Figure~\ref{fig:movie_xy}, where the movies are stratified into three categories, namely ``before 1995", ``1995-1996", and ``1997-1998".  According to this figure, the clustering patten of the movies can be largely explained by the three categories based on the movie release dates. From the right to the left of the space, the points correspond to movies from the relatively older ones to the relatively more recent ones. For example, the three movies indicated in panel (b) of Figure~\ref{fig:movie_xy} are, from left to right, Citizen Kane (1941), Twelve Monkeys (1995) and The Devil's Own (1997), respectively.

\begin{figure}
  \centering
    \begin{subfigure}[b]{0.4\textwidth}
        \centering
        \includegraphics[scale = 0.5]{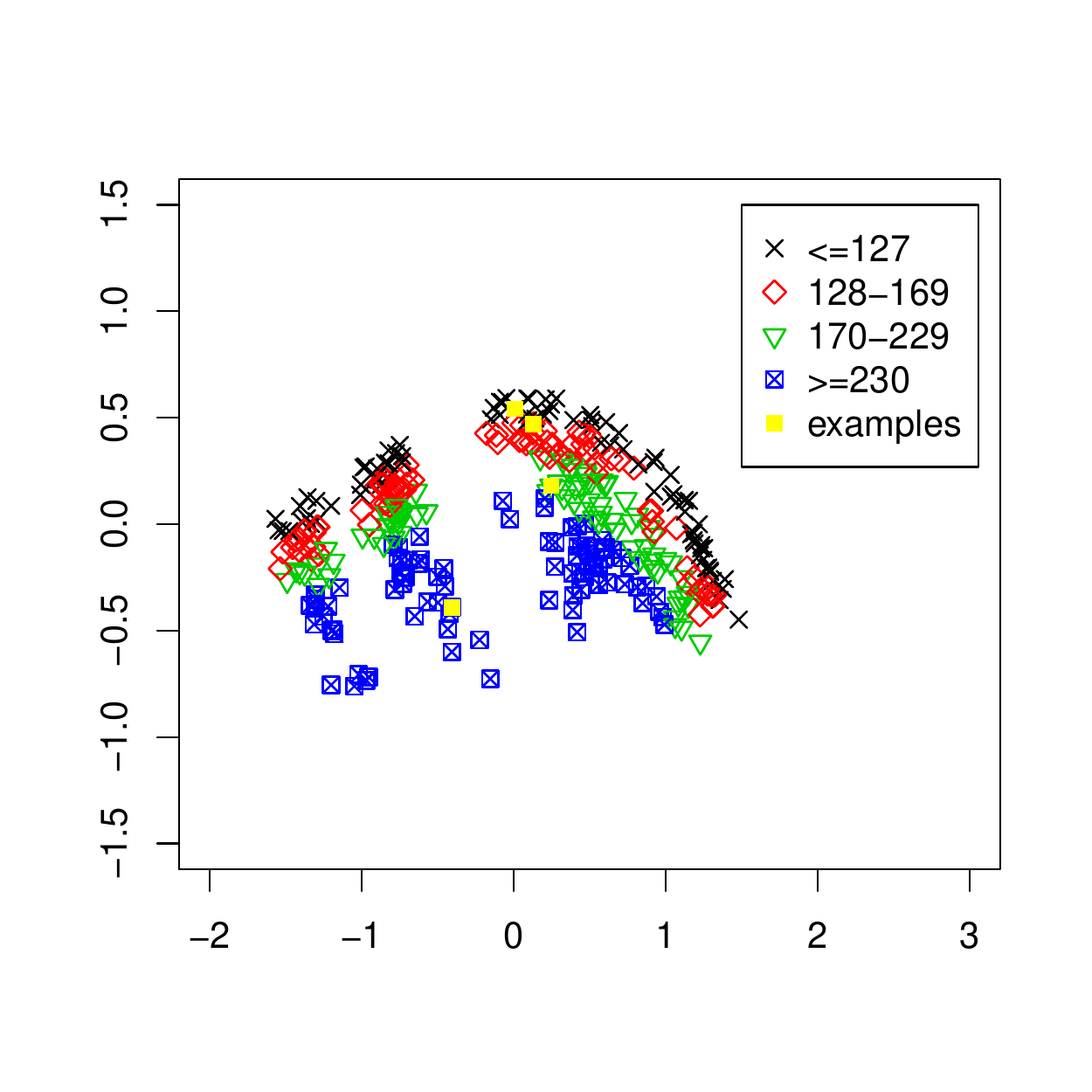}
        \caption{}
    \end{subfigure}
    \begin{subfigure}[b]{0.4\textwidth}
        \centering
        \includegraphics[scale = 0.5]{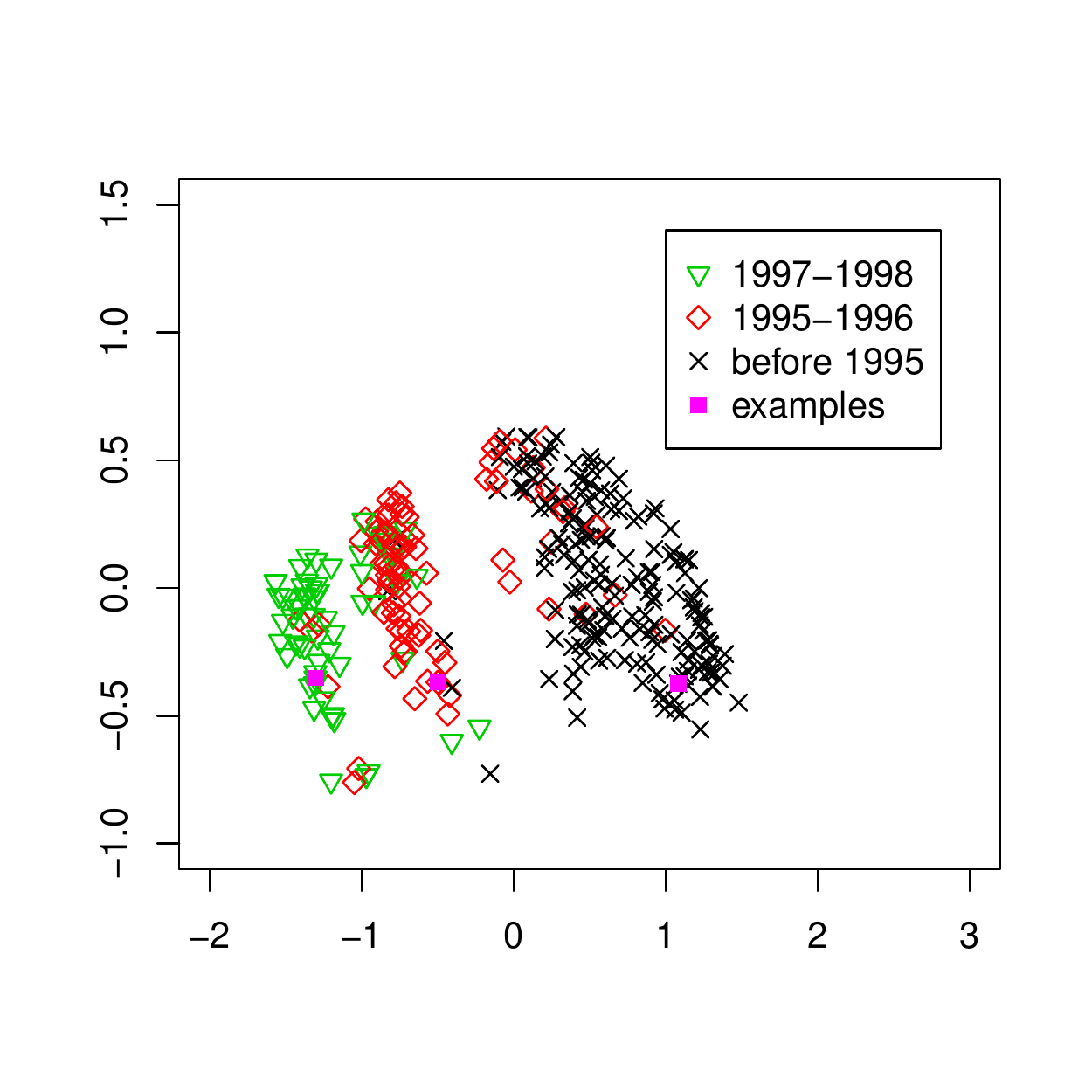}
        \caption{}
    \end{subfigure}
  \caption{Analysis of movie rating data. Panel (a): Visualization of movie points, with movies stratified into four equal-size categories based on the numbers of rating. Movies with numbers of rating less than 127, 128-169, 170-229 and more than 230 are indicated by black, red, green and blue points, respectively. Yellow points represent example movies. Panel (b): Visualization of movie points, with movies stratified into three categories based on their release time. Movies released in 1997-1998, 1995-1996, and before 1995 are indicated by green, red and black points, respectively. Purple points represent example movies.}
 \label{fig:movie_xy}
\end{figure}


\begin{figure}
 \centering
        \includegraphics[scale = 0.6]{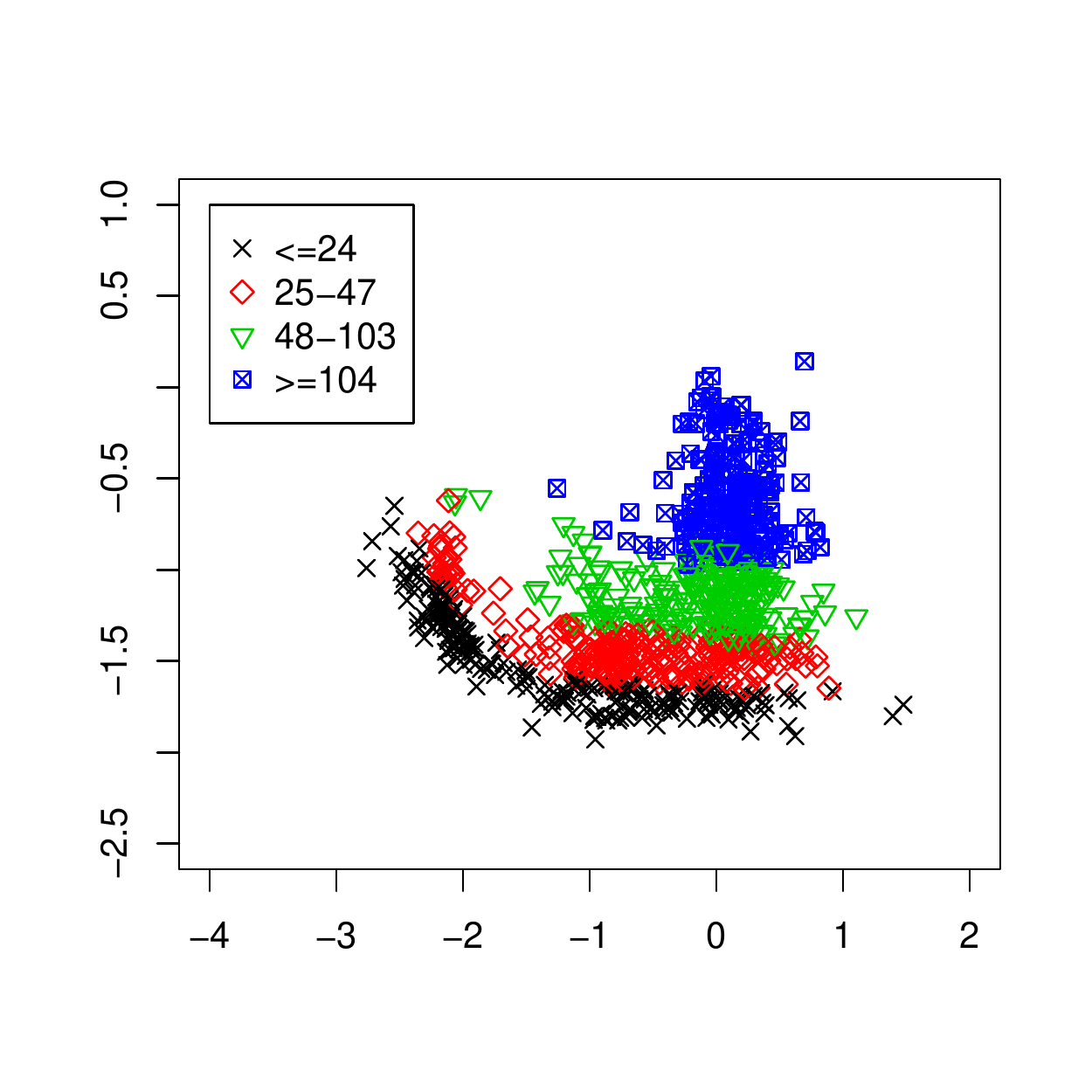}
        \caption{Analysis of movie rating data: Visualization of user points, with users classified into four equal-size categories based on the numbers of rating. Users who rated less than 24, , 25-47, 48-103 and more than 104 movies are indicated by
        black, red, green and blue points, respectively.}
 \label{fig:user_xy}
\end{figure}

The interpretation of the latent space based on movies facilitates the interpretation of the user points. First, the $y$-axis corresponds to the users' activeness. Roughly speaking, the shorter the average distance from a user point to the movies points, the more active the user is.
 The Kendall's tau rank correlation between $\hat \theta_{i2}$s and the numbers of ratings given by the users is 0.73. This is further shown via Figure~\ref{fig:user_xy}, where users are classified into four equal-size groups depending on the number of movies they rated. These groups of users, from the  most active one to the least active one, lie from the top to the bottom. Second, based on the alignment of movies along the $x$-axis, the user points from right to left may be interpreted as the ones who tend to more frequently rate relatively older movies to the ones who tend to more frequently rate relatively more recent ones.

\subsection{Example II: Senate Roll Call Voting Data}\label{subsec:real2}
\paragraph{Background.}
We now analyze a senate roll call voting dataset from the 108th congress. This dataset contains the voting records from 100 senators to 675 roll calls in years 2003 and 2004\footnote{The dataset can be downloaded from \url{https://legacy.voteview.com/dwnl.htm}.}.  Among the 100 senators, there are 48 from the Democratic party, 51 from the Republican party, and one independent politician.
For each roll call $j$, the vote of senator $i$ is recorded in three ways, ``Yea'', ``Nay'' and ``Not Voting'', treated as $Y_{ij} = 1, 0, $ and missing, respectively.

\paragraph{Analysis.}
Similar analysis as the previous one is conducted. Specifically, we unfold the data into a two-dimensional space. The same link function $f$ and constraint constant $M$ are adopted as in the analysis of movie data.
After getting the estimate, we transform the estimated ideal points by an isometry mapping, so that the $x$-axis corresponds to the dimension along which the estimated senate ideal points have the highest variance.

\paragraph{Results.}
The results are presented in Figures \ref{fig:vote108_whole} through \ref{fig:vote108_bill}. In Figure~\ref{fig:vote108_whole}, the ideal points of both roll calls and senators are visualized simultaneously. As we can see, most of the roll calls and all the senators tend to lie around a one-dimensional line.
This visualization is still valid, 
in the sense that even when the true latent dimension is one, 
according to Theorem~\ref{thm:random},
unfolding the data in a two-dimensional space is still consistent.

This phenomenon of degeneration is quite consistent with the overall unidimensional patten in the congress voting data throughout the history. It has been well recognized in the political science literature \citep{poole1991dimensionalizing,poole1991patterns} that senate voting behavior is essentially unidimensional, though slightly different latent space models are used in that literature. For example, \cite{poole1991dimensionalizing} concluded that ``to the extent that congressional voting can be described by a spatial model, a unidimensional model is largely (albeit not entirely) sufficient."

We first interpret the senators. In Figure~\ref{fig:vote108_senate}, all the senator points are visualized with their party membership indicated by different point types. In Table~\ref{table:senator}, we rank the senators based on their value of $\hat \theta_{i1}$, which is presented along the $x$-axis. According to this table, the Democrats tend to lie on the left and the Republicans tend to be on the right. In fact, this ranking is largely consistent with \textit{National Journal}'s liberalness ranking of the senators in 2003.
\textit{National Journal}'s ranking result, which is replicated in  \cite{clinton2004most},
is obtained by unfolding the senators' votes on 62 key roll calls using a model given in \cite{clinton2004statistical}. The Kendall's tau rank correlation between the result in Table~\ref{table:senator} and that given by the \textit{National Journal} is 0.79. In fact, Senator John Kerry is ranked the most liberal by both our model and by \textit{National Journal} and Senator Craig L. Thomas, who is the most conservative senator according to the ranking of \textit{National Journal}, is the third most conservative senator given by our model.

\begin{figure}
\centering
\includegraphics[scale = 0.5]{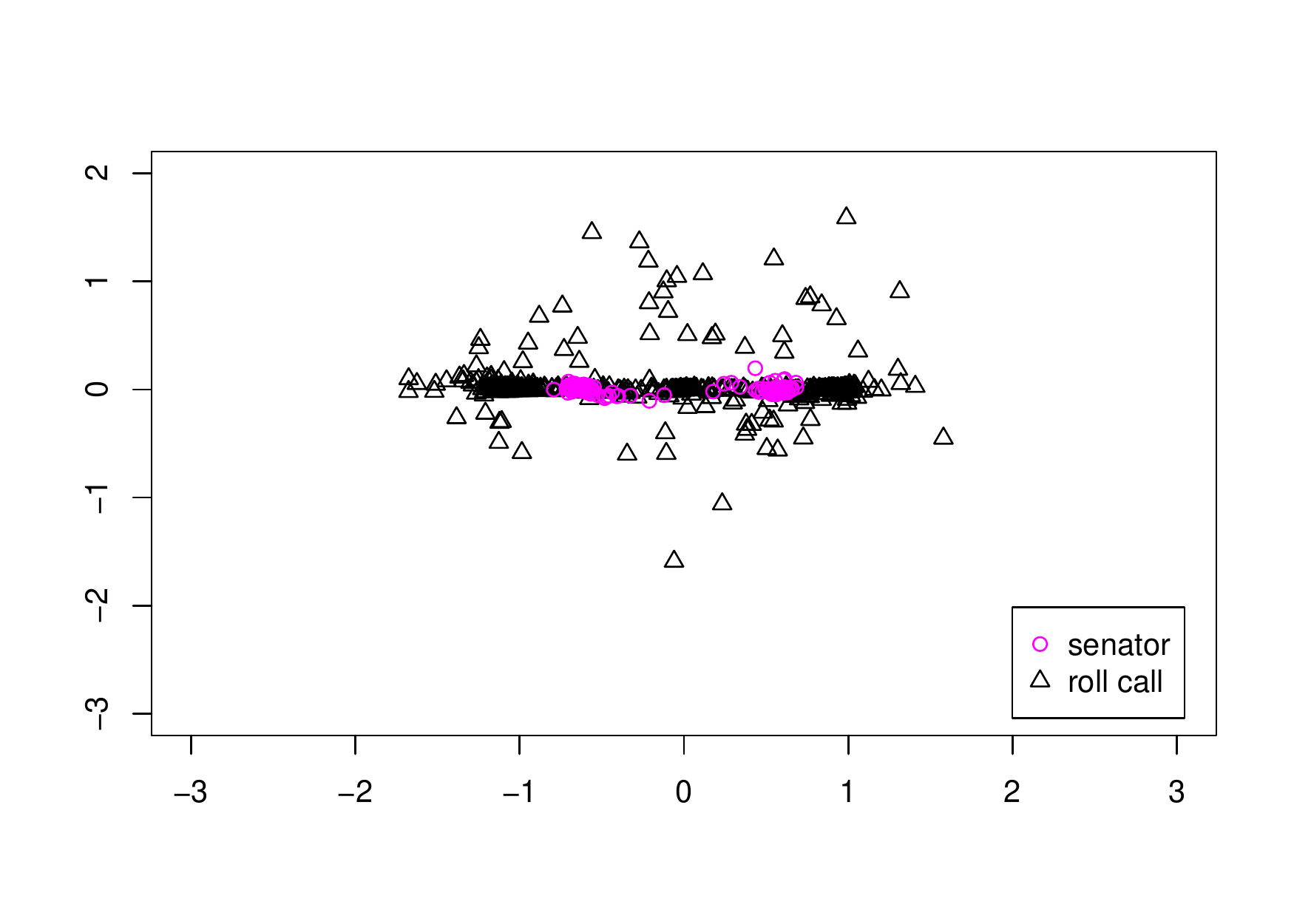}
\caption{Analysis of senator roll call data: Simultaneous visualization of the estimated senator and roll call ideal points.}
\label{fig:vote108_whole}
\end{figure}

\begin{figure}
\centering
\includegraphics[scale = 0.5]{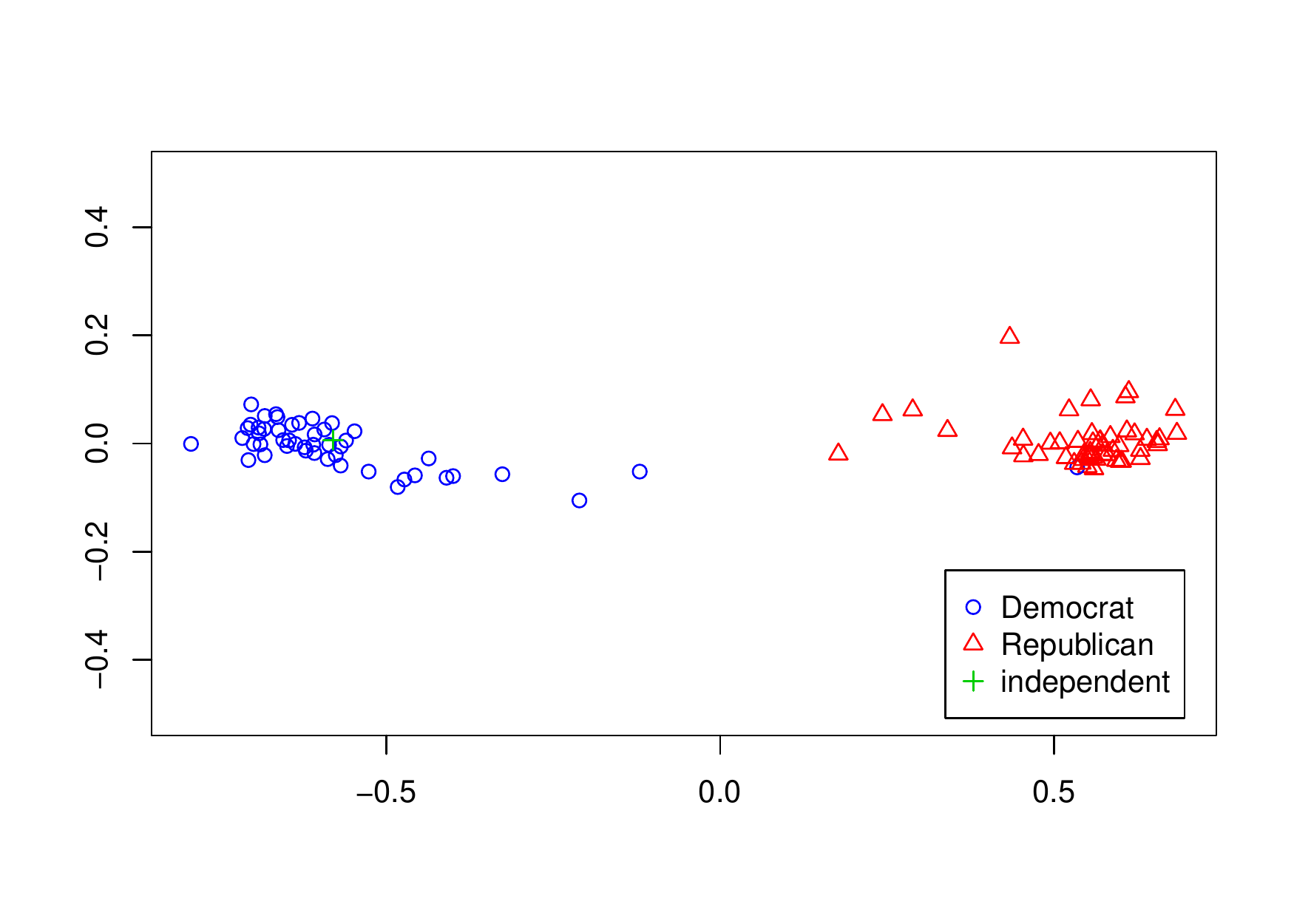}
\caption{Analysis of senator roll call data: Visualization of senator points, where senators are classified by their party membership. Specifically,
The Democrats, Republicans and an independent politician are indicated by blue, red, and green, respectively.}
\label{fig:vote108_senate}
\end{figure}


\begin{table}
  \centering
  \begin{tabular}{lll|lll|lll}
    \hline
      & Name & State&  & Name & State & & Name & State \\
      \hline
 1 & Kerry   & D-MA  & 35 & Johnson     & D-SD &69 & Grassley    & R-IO \\
  2 & Sarbanes    & D-MD &  36 & Lieberman   & D-CT &70 & Bond        & R-MO \\
  3 & Reed        & D-RH & 37 & Bingaman    & D-NM &71 & Roberts     & R-KA \\
  4 & Harkin      & D-IO & 38 & Nelson   & D-FL  &72 & Gregg       & R-NH \\
  5 & Graham   & D-FL & 39 & Dorgan      & D-ND  &73 & Allen       & R-VI \\
  6 & Lautenberg  & D-NJ &40 & Conrad      & D-ND  &74 & Domenici    & R-NM  \\
  7 & Edwards     & D-NC &41 & Carper      & D-DE &75 & Bennett     & R-UT \\
  8 & Kennedy   & D-MA  &42 & Pryor       & D-AR  &76 & Dole        & R-NC \\
  9 & Durbin      & D-IL &43 & Bayh        & D-IN &77 & Frist       & R-TN \\
  10 & Levin   & D-MI & 44 & Lincoln     & D-AR &78 & Brownback   & R-KA   \\
  11 & Akaka       & D-HA &45 & Landrieu    & D-LO &79 & Hatch       & R-UT \\
  12 & Byrd   & D-WE &46 & Baucus      & D-MT  &80 & Cochran     & R-MS  \\
  13 & Boxer       & D-CA &47 & Breaux      & D-LO &81 & Graham      & R-SC \\
  14 & Corzine     & D-NJ &48 & Nelson   & D-NE & 82 & Alexander   & R-TN   \\
  15 & Clinton     & D-NY &49 & Chafee      & R-RH &83 & Lott        & R-MS \\
  16 & Leahy       & D-VE &50 & Snowe       & R-ME &84 & Chambliss   & R-GE  \\
  17 & Dodd        & D-CT &51 & Collins     & R-ME  &85 & Burns       & R-MT \\
  18 & Stabenow    & D-MI &52 & Specter     & R-PE &86 & Bunning     & R-KE  \\
  19 & Mikulski    & D-MD &53 & Mccain      & R-AZ &87 & Crapo       & R-ID \\
  20 & Feingold    & D-WI &54 & Dewine      & R-OH  &88 & Mcconnell   & R-KE  \\
  21 & Rockefeller & D-WE &55 & Campbell    & R-CO &89 & Ensign      & R-NV   \\
  22 & Hollings    & D-SC &56 & Smith   & R-OR &90 & Cornyn      & R-TX \\
  23 & Kohl        & D-WI &57 & Coleman     & R-MN &91 & Sununu      & R-NH \\
  24 & Inouye      & D-HA &58 & Warner      & R-VI &92 & Santorum    & R-PE \\
  25 & Schumer     & D-NY &59 & Murkowski   & R-AK &93 & Craig       & R-ID   \\
  26 & Cantwell    & D-WA &60 & Voinovich   & R-OH  &94 & Inhofe      & R-OK  \\
  27 & Dayton      & D-MN &61 & Hutchison   & R-TX &95 & Allard      & R-CO \\
  28 & Murray      & D-WA &62 & Lugar       & R-IN &96 & Enzi        & R-WY \\
  29 & Wyden       & D-OR &63 & Miller      & D-GE &97 & Sessions    & R-AL  \\
  30 & Daschle     & D-SD &64 & Fitzgerald  & R-IL &98 & Thomas      & R-WY  \\
  31 & Biden       & D-DE &65 & Talent      & R-MO &99 & Kyl         & R-AZ \\
  32 & Feinstein   & D-CA &66 & Hagel       & R-NE & 100 & Nickles  & R-OK \\
  33 & Jeffords    & I-VE &67 & Stevens     & R-AK  \\
  34 & Reid        & D-NV &68 & Shelby      & R-AL \\

    \hline
  \end{tabular}
  \caption{Analysis of senator roll call data: Ranking of senators based on $\hat \theta_{i1}$.}\label{table:senator}
\end{table}

From Figure~\ref{fig:vote108_senate} and Table~\ref{table:senator}, it is also worth noting that there is a Democrat whose estimated ideal point is mixed together with those of the Republicans.
This senator is Zell Miller from the state of Georgia. He is a conservative Democrat and in fact, he supported Republican President George W. Bush against the Democratic nominee John Kerry in the presidential election in 2004.

In this congress, there is an independent senator, Jim Jeffords from the state of Vermont, who does not belong to either of the two major parties.
As we can see from both Figure~\ref{fig:vote108_senate} and Table~\ref{table:senator}, his ideal point lies on the left, mixed with many ideal points of the Democrats. This is also consistent with Senator Jim Jeffords' political standing. In fact, he left Republican party to become an independent and began caucusing with the Democrats since 2001.

We now investigate the roll calls. The value of $\hat{a}_{j1}$, i.e., the roll calls' coordinate on the $x$-axis, seems to represent the roll calls' liberalness-conservativeness. The more liberal roll calls lie on the left and the more conservative ones lie on the right. This interpretation is further confirmed by the voting records for the roll calls. In particular, for each roll call, we calculate the proportion of Republicans among the senators who voted ``Yea". A larger value of this proportion indicates that the roll call is more conservative. 
As we can see from panel (a) of Figure~\ref{fig:vote108_bill}, for roll calls from the left to the right, this proportion increases. In fact, the Kendall's tau rank correlation between $\hat a_{j1}$s and the proportions of ``Yea" from Republicans is as high as 0.88. We present the content of three roll calls as representative examples. As indicated in panel (a) of Figure~\ref{fig:vote108_bill}, these roll calls have substantially different coordinates along the $x$-axis. From left to right, they are
(1) ``To improve the availability of contraceptives for women",
(2)``Confirmation Thomas J. Ridge, of Pennsylvania, to be Secretary of Homeland Security",
and (3)``To provide financial security to family farm and small business owners by ending the unfair practice of taxing someone at death".


Although most of the roll calls lie near the $x$-axis (i.e., $\hat a_{j2} \approx 0$), there are still quite a few roll calls which spread out on the $y$-axis. It seems that the voting on such roll calls is heterogeneous within both parties. 
Specifically, we measure heterogeneity of voting within each party by a cross entropy measure, defined as
$$\mbox{CE}^{(i)}_j = - p_{jy}^{(i)}\log p_{jy}^{(i)} - p_{jn}^{(i)}\log p_{jn}^{(i)} - p_{jm}^{(i)}\log p_{jm}^{(i)},$$
where $i = 1, 2$ indicate Democrat and Republican, respectively, and $p_{jy}^{(i)}$, $p_{jn}^{(i)}$, and $p_{jm}^{(i)}$
denote the proportions of ``Yea", ``Nay", and ``Not voting" within the party for the $j$th roll call.   Cross entropy is a commonly used measure of heterogeneity \citep[Chapter 9,][]{friedman2001elements}. The larger the cross entropy, the more heterogeneous voting behavior within a party. In panel (b) of Figure~\ref{fig:vote108_bill}, we present the box plots of $\min\{\mbox{CE}^{(1)}_j, \mbox{CE}^{(2)}_j\}$, for roll calls lying near the $x$-axis ($|\hat a_{j2} | \leq 0.05$) and for those spreading out along the $y$-axis ($|\hat a_{j2} | > 0.05$). According to panel (b) of Figure~\ref{fig:vote108_bill}, the roll calls in the latter group ($|\hat a_{j2} | > 0.05$) tend to have a larger value of $\min\{\mbox{CE}^{(1)}_j, \mbox{CE}^{(2)}_j\}$, implying that the voting tends to be more heterogeneous within both parties for these roll calls.  The latter group contains roll calls, such as ``To provide for the distribution of funds under the infrastructure performance and maintenance program", ``To enhance the role of Congress in the oversight of the intelligence and intelligence-related activities of the United States Government",
and ``To strike provisions relating to energy tax incentives". Many of such roll calls may be explained by constituency specific factors.

\begin{figure}
  \centering
    \begin{subfigure}[b]{0.4\textwidth}
        \centering
        \includegraphics[scale = 0.4]{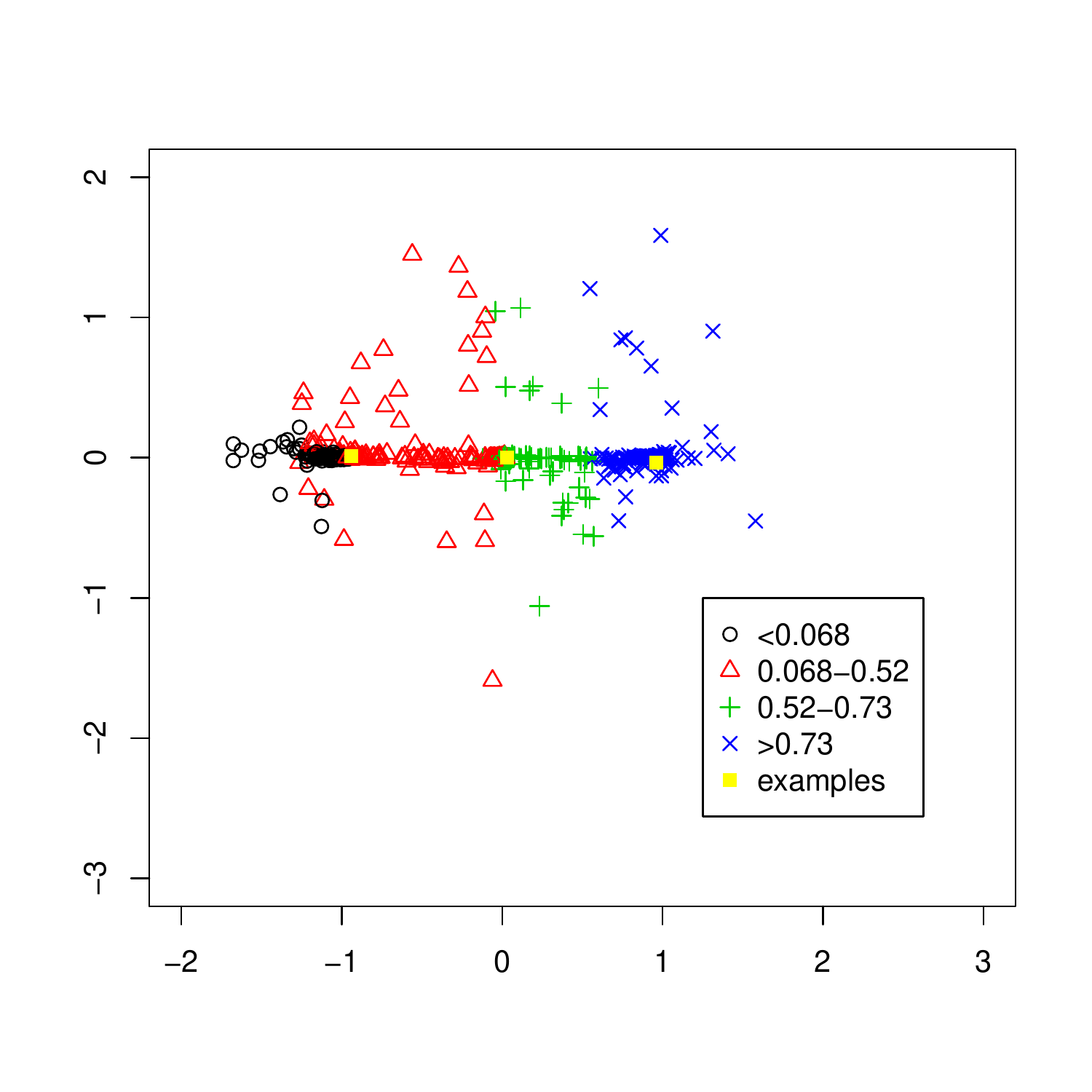}
        \caption{}
    \end{subfigure}
    \begin{subfigure}[b]{0.4\textwidth}
        \centering
        \includegraphics[scale = 0.4]{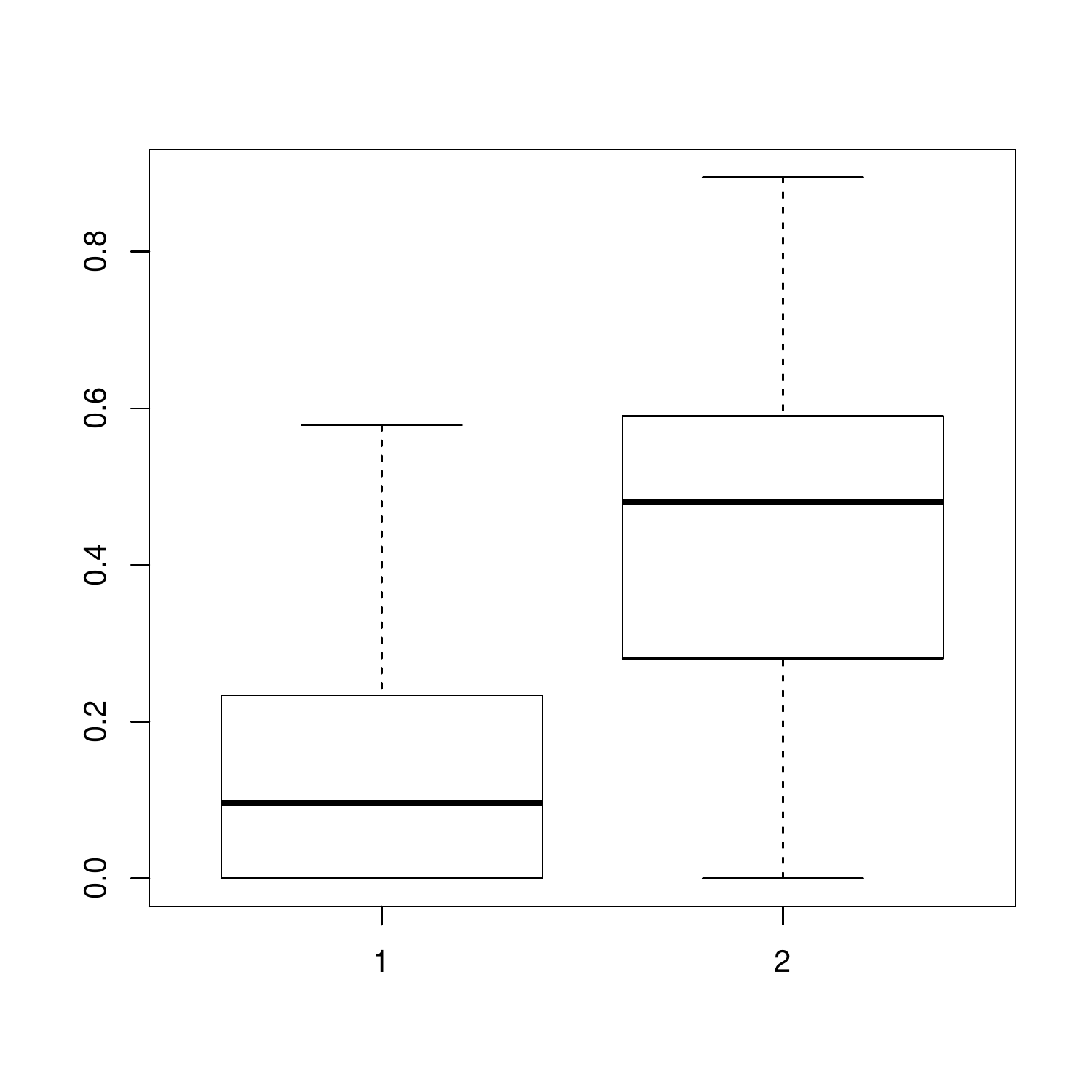}
        \caption{}
    \end{subfigure}
\caption{Analysis of senator roll call data. Panel (a): Visualization of roll call points, where roll calls are classified by the proportion of ``Yea" from Republicans. Specifically, roll calls who have the proportions less than 0.068, 0.068-0.52,0.52-0.73 and larger than 0.73 are indicated by black, red, green and blue points, respectively. The yellow solid points are example roll calls to be discussed. Panel (b): Box plots of $\min\{\mbox{CE}^{(1)}_j, \mbox{CE}^{(2)}_j\}$, for roll calls lying near the $x$-axis ($|\hat a_{j2} | \leq 0.05$) one the left and for those spreading out along the $y$-axis ($|\hat a_{j2} | > 0.05$) on the right.}
\label{fig:vote108_bill}
\end{figure}

\section{Concluding Remarks}\label{sec:disc}

In this paper, 
we provide a statistical framework for studying  unfolding-model-based visualization.
An estimator, together with an algorithm for its computation, is proposed, whose performance is examined by simulation studies. Under reasonable conditions, we
provide asymptotic results for the recovery of ideal-point configuration.
The proposed method is applied to two datasets, one on movie rating and the other on senator voting, for which interpretable results are obtained.

The ideal points obtained from the proposed method can be used in further analysis. For example, one can use the estimated person points as covariates in regression analysis. For another example, one may further conduct cluster analysis on the respondents and items, for example, by applying the K-means algorithm \citep{macqueen1967some}. In fact, as discussed in the supplementary material, there is a connection between our unfolding model and the stochastic co-blockmodel \citep{choi2014co,rohe2016co} for bi-cluster analysis. When data follow a stochastic co-blockmodel, then our consistency result for the unfolding model further guarantees the consistency of bi-cluster analysis.

The current analysis may be extended along multiple directions. First, the current analysis keeps the latent dimension $K$ fixed. In fact, the theoretical results established in this paper can be generalized to a setting where $K$ also diverges, a more appropriate setting for data of a very large scale.
Second, it is possible to make statistical inference about the person and item ideal points, such as testing
whether a person point is closer to one item point than another. Making statistical inference under our model is closely related to statistical inference for low-rank matrix completion \citep[see e.g.,][]{chen2019inference,xia2019statistical}, but
the non-linear link function in our model brings more challenges and thus methods and theory remain to be developed. Third, although we focus on binary data in this paper, the proposed modeling framework, theory and computational algorithm can be extended to other types of data, such as ratings and rankings.
Finally, it may also be of interest to extend the current framework to the modeling and analysis of large-scale preferential choice data with  informatively missing data entries.


%

\bigskip\bigskip

\appendix

\noindent
{\Large Appendix}

\section{Bi-Cluster Analysis}

The applications of multidimensional scaling, including multidimensional unfolding as a special case, are often followed by cluster analysis \citep[e.g.,][]{kruskal1978multidimensional,borg2005modern} for better understanding and interpretation of the data visualization. In our context, it is often of interest to cluster the respondents and the items, respectively. This task is known as bi-clustering or co-clustering \citep{hartigan1972direct,dhillon2001co}, which is often studied statistically under the stochastic co-blockmodel \citep{choi2014co,rohe2016co}, an extension of the widely used stochastic blockmodel \citep{holland1983stochastic}.

Following multidimensional unfolding, it is natural to bi-cluster the respondents and the items based on the estimated ideal points, using the Euclidian distance as a natural measure of dissimilarity. In particular, we
use the K-means algorithm \citep{macqueen1967some} to
cluster the respondents and the items into $k_1$ and $k_2$ clusters, respectively, for some pre-specified numbers of clusters $k_1$ and $k_2$. This two-step procedure for bi-cluster analysis is described in Algorithm~\ref{algo:cluster}. 

\begin{algorithm}[Two-step procedure for bi-cluster analysis]\label{algo:cluster}
~

\begin{itemize}
  \item[] \textbf{Step 1}: Apply Algorithm~\ref{algo:estimate} and obtain estimates $\{\hat \ttt_1,...,\hat \ttt_N,\hat \aaaa_1,...,\hat \aaaa_J\}.$
  \item[] \textbf{Step 2}: Perform the K-means algorithm to $\{\hat \ttt_1,...,\hat \ttt_N\}$ and $\{\hat \aaaa_1,...,\hat \aaaa_J\}$
  given $k_1$ and $k_2$ clusters, respectively.
 \item[] \textbf{Output}: The cluster membership of respondents $\hat \vartheta_i \in \{1, ..., k_1\}$ and cluster membership of items $\hat \upsilon_j \in \{1, ..., k_2\}\ (i=1, ..., N;\ j=  1, ..., J)$.
\end{itemize}

\end{algorithm}
We provide a connection between the multidimensional unfolding model studied in this paper and the
stochastic co-blockmodel.  Consider a special case under the multidimensional unfolding model, where there are finite possible locations for the respondent ideal points and also for the item ideal points, independent of $N$ and $J$.
We denote the possible locations for the respondent ideal points as $\{\bb_1^*,...,\bb_{k_1}^*\}$ and denote those for the item ideal points as $\{\cc_1^*,...,\cc_{k_2}^*\}$.
Under this setting, there exist $k_1$ respondent latent classes and $k_2$ item latent classes, regarding two respondents/items as from the same latent class when they have the same location.
We denote $\vartheta_i^* \in \{1, ..., k_1\}$ and $\upsilon_j^* \in \{1, ..., k_2\}$ the true latent class memberships of respondent $i$ and item $j$, respectively.
In this sense,
the model becomes a stochastic co-blockmodel, for which
the distribution of $Y_{ij}$ is only determined by the latent class memberships of respondent $i$ and item $j$ and $Y_{ij}$s are conditionally independent given all the latent memberships of the respondents and items.
In what follows, we show that the proportions of misclassified respondents and items converge to 0 in probability, when both $N$ and $J$ grow to infinity, if the $K$-means algorithm in Algorithm~\ref{algo:cluster} has converged to the global optima.

\begin{theorem}\label{thm:cluster}
Suppose A0, A3 and A4 are satisfied, and $K_+\geq K.$ Further suppose the multidimensional unfolding model degenerates to a stochastic co-blockmodel, satisfying
$\ttt_i^* \in \{\bb_1^*, ..., \bb_{k_1}^*\}$ and $\aaaa_j^* \in \{\cc_1^*, ..., \cc_{k_2}^*\}.$ If both $K$-means algorithms in Algorithm~\ref{algo:cluster} converge to the global optima, then the clustering result satisfies
\begin{equation}\label{eq:cluster}
\min\left\{\max_{\zeta \in \mathcal B_{k_1}}\frac{\sum_{i=1}^{N} 1_{\{\hat\vartheta_i = \zeta(\vartheta^*_i)\}}}{N}, \max_{\zeta \in \mathcal B_{k_2}}\frac{\sum_{j=1}^{J} 1_{\{\hat\upsilon_j = \zeta(\upsilon^*_j)\}}}{J}\right\}
\end{equation}
goes to 1 in probability as both $N$ and $J$ grow to infinity,
where $\mathcal B_{k}$ denotes the set of all permutations on $\{1, ..., k\}$, for $k = k_1, k_2$.
\end{theorem}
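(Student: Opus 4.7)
The plan is to leverage the configuration-recovery consistency already established in Theorem~\ref{thm:main} and then perform a perturbation analysis for the $K$-means objective. The starting observation is that, under the degenerate stochastic co-blockmodel, the true ideal points $\ttt_i^*$ and $\aaaa_j^*$ take only finitely many values $\bb_1^*,\ldots,\bb_{k_1}^*$ and $\cc_1^*,\ldots,\cc_{k_2}^*$ with positive limiting frequencies by A3. Hence Theorem~\ref{thm:main} applies, giving the existence (with probability tending to $1$) of an isometry $F^*\in\mathcal A_{K_+}$ such that, writing $\tilde\ttt_i=F^*(\hat\ttt_i)$ and $\tilde\aaaa_j=F^*(\hat\aaaa_j)$, both averaged squared errors $\tfrac{1}{N}\sum_i\|\tilde\ttt_i-\ttt_i^+\|^2$ and $\tfrac{1}{J}\sum_j\|\tilde\aaaa_j-\aaaa_j^+\|^2$ converge to $0$ in probability. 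Since the $K$-means objective is invariant under isometries of the ambient space, the clusters produced by $K$-means on $\{\hat\ttt_i\}$ coincide with those produced on $\{\tilde\ttt_i\}$, so I can work with the latter, which are close (on average) to the true, finitely-supported locations.

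Next, I would run the perturbation argument for $K$-means on respondents (the argument for items is identical). Let $V_N^\star=\min_{\mu_1,\ldots,\mu_{k_1}}\tfrac{1}{N}\sum_i\min_l\|\tilde\ttt_i-\mu_l\|^2$ be the global optimum, and let $\Delta=\tfrac{1}{4}\min_{l\ne l'}\|\bb_l^*-\bb_{l'}^*\|>0$. Plugging the true locations in as centers gives $V_N^\star\le \tfrac{1}{N}\sum_i\|\tilde\ttt_i-\ttt_i^+\|^2=o_p(1)$. On the other hand, for any fixed $\delta<\Delta$, by Markov's inequality the proportion of $i$ with $\|\tilde\ttt_i-\bb_{\vartheta_i^*}^*\|\ge\delta$ is $o_p(1)$; combined with the positive limiting mass $p_l>0$ at each $\bb_l^*$, this yields $\ge p_l N/2$ points within distance $\delta$ of each $\bb_l^*$ with high probability.

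The crux is a pigeonhole step: if some $\bb_l^*$ has no global $K$-means center $\hat\mu_m$ within $\Delta$ of it, then the at least $p_l N/2$ points concentrated near $\bb_l^*$ each contribute $\ge(\Delta-\delta)^2$ to the objective, forcing $V_N^\star\ge \tfrac{1}{2}p_l(\Delta-\delta)^2$, a contradiction. Since there are $k_1$ centers and $k_1$ well-separated true locations, this produces (with probability tending to $1$) a bijection $\zeta$ with $\|\hat\mu_{\zeta(l)}-\bb_l^*\|<\Delta$. Taking $\delta<\Delta/2$ ensures that every point within $\delta$ of $\bb_{\vartheta_i^*}^*$ is closer to $\hat\mu_{\zeta(\vartheta_i^*)}$ than to any other estimated center, so $\hat\vartheta_i=\zeta(\vartheta_i^*)$ for all but $o_p(N)$ many $i$. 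This yields $\max_{\zeta\in\mathcal B_{k_1}}\tfrac{1}{N}\sum_i 1_{\{\hat\vartheta_i=\zeta(\vartheta_i^*)\}}\to 1$ in probability, and the identical argument for items gives the analogous statement; the minimum of the two in \eqref{eq:cluster} then converges to $1$ in probability.

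The step I expect to be the main obstacle is the $K$-means perturbation/pigeonhole argument in the third paragraph above. The delicate point is ruling out degenerate global optima in which two centers collapse near the same true location while leaving another true location uncovered; this pathology is precisely excluded by the combination of $V_N^\star=o_p(1)$ with the uniformly positive masses $p_l,q_m>0$ guaranteed by A3, but the accounting has to be done carefully to track the interaction between the tolerance $\delta$, the separation $\Delta$, and the vanishing fraction of far-outlier points. Everything else (consistency of ideal-point recovery, isometry invariance of $K$-means, and the passage from cluster-center recovery to label recovery) is either invoked from Theorem~\ref{thm:main} or follows by elementary arguments.
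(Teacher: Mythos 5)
Your proposal is correct and follows essentially the same route as the paper's proof, which proceeds via Lemma~\ref{lem:kmeans}: invoke the configuration-recovery consistency of Theorem~\ref{thm:main} (strengthened to an $o_p(1)$ average loss because A3 holds for arbitrarily small $\epsilon$ under the co-blockmodel), compare the global $K$-means optimum with the value at the true centers, use the positive cluster masses from A3 together with a pigeonhole/separation argument (the paper's Lemmas~\ref{lem:perm} and \ref{lem:temp}) to match estimated centers bijectively to the true locations, and then convert center recovery into label recovery up to a permutation. The only differences are bookkeeping ones: you bound the optimal objective value directly and apply Markov to the point-to-location distances, whereas the paper applies a triangle inequality to bound the assigned-centroid-to-location distances and then applies Chebyshev/Markov to those.
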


\begin{remark}
To handle ``label switching indeterminacy" in clustering, in the loss function \eqref{eq:cluster} we find permutations that best match the true latent class memberships and their estimates for both the respondents and the items.
\end{remark}

\section{Proof of Theoretical Results}
\subsection{Definitions and Notations}
In this appendix, we use $c, C, C_1, C_2$ to represent constants which do not depend on $N,J,$ the values of which may vary according to the context. With a little abuse of notation, we use $\AAA_{N,J}$ to denote the specified events, which may differ in different proofs.
For $\x\in \R^K,$ we use $B_{\bf \x}^{K}(C)$ to denote the closed ball in $\mathbb R^K$ centered at $\x$ with radius $C$.
Unless otherwise specified, all balls in the appendix is assumed to be closed.
For a set $G \subset \R^K,$ let $\text{int}(G)$ denote the set of all its interior points.
For a positive integer $n,$ we denote $[n] := \{1,...,n\}.$
We start with some notions which will be used in the proof of theorems, propositions and lemmas.

\begin{definition}
For points $\x_i, \x_i' \in \mathbb{R}^K, i = 1, ..., n$, we write $(\x_1,...,\x_{n})\sim (\x_1',...,\x_{n}')$, if there exists an isometry $F \in \AAA_K$, such that  $\x'_i = F(\x_i), i = 1,...,n$. 
\end{definition}

\begin{remark}
  It is easy to show that ``$\sim$" is an equivalence relation.
\end{remark}

\begin{definition}[Configuration]
We define an $n$-point configuration as an equivalence class. That is, we define a configuration
$$[\x_1,...,\x_n] := \{(\x_1', ..., \x_n'): (\x_1', ..., \x_n') \sim (\x_1, ..., \x_n)\}$$
as the equivalence class of $(\x_1, ..., \x_n)$.
\end{definition}

\begin{remark}
By the property of isometry mapping,  it is easy to see that all the elements in the same configuration have the same distance matrix.
\end{remark}

We now consider the space of all $n$-point configurations in $\mathbb R^{K}$, denoted by
$$\mathcal H_{n,K} := \left\{[\x_1,...,\x_n]: \x_i \in \mathbb R^{K}, i = 1, ..., n\right\}.$$

For two configurations $\tau_1 = [\x_1,...,\x_n], \tau_2 = [\y_1,...,\y_n] \in \mathcal H_{n,K}$, we define
\begin{equation*}
d(\tau_1,\tau_2) := \inf\limits_{F\in \mathcal A_K} \sqrt{\sum_{1\leq i \leq n} \|F(\x_i)-\y_i\|^2}.
\end{equation*}
First, we note that $d(\cdot, \cdot)$ is a well-defined mapping from $\mathcal H_{n,K} \times \mathcal H_{n,K}$ to $\mathbb R$. That is, for any $(\x_1', ..., \x_n') \in [\x_1, ..., \x_n]$ and $(\y_1', ..., \y_n') \in [\y_1, ..., \y_n]$,
$$\inf\limits_{F\in \mathcal A_K} \sqrt{\sum_{1\leq i \leq n} \|F(\x_i)-\y_i\|^2} = \inf\limits_{F\in \mathcal A_K} \sqrt{\sum_{1\leq i \leq n} \|F(\x_i')-\y_i'\|^2}.$$
Second, we notice that $d(\cdot, \cdot)$ is a metric on $\mathcal H_{n,K}$, as summarized in Lemma \ref{lem:metric} below.
\begin{lemma}\label{lem:metric}
$d(\cdot, \cdot)$ is a metric on $\mathcal H_{n,K}.$ 
\end{lemma}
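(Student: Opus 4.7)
The plan is to verify the four defining properties of a metric (non-negativity, symmetry, identity of indiscernibles, triangle inequality) on $\mathcal H_{n,K}$, after first checking that $d$ is well-defined on equivalence classes. The well-definedness follows immediately from the group structure of $\mathcal A_K$: if $(\x_i') = (G(\x_i))$ and $(\y_i') = (H(\y_i))$ for $G, H \in \mathcal A_K$, then ranging over $F \in \mathcal A_K$ in $\sum_i \|F(\x_i') - \y_i'\|^2$ is the same as ranging over $H^{-1} \circ F \circ G \in \mathcal A_K$ in $\sum_i \|F(\x_i) - \y_i\|^2$, giving the same infimum. Non-negativity is immediate.

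For symmetry, I would use that $F \in \mathcal A_K$ implies $F^{-1} \in \mathcal A_K$, and since an isometry preserves norms, $\|F(\x_i) - \y_i\| = \|\x_i - F^{-1}(\y_i)\|$, so the two infima agree. For the triangle inequality, given representatives $(\x_i), (\y_i), (\z_i)$ of $\tau_1, \tau_2, \tau_3$ and $\epsilon > 0$, I would pick $F_{12}, F_{23} \in \mathcal A_K$ that are $\epsilon$-optimal for $d(\tau_1, \tau_2)$ and $d(\tau_2, \tau_3)$ respectively, consider the composition $F_{23} \circ F_{12} \in \mathcal A_K$, and apply Minkowski's inequality to the sequence of vectors $F_{23}(F_{12}(\x_i)) - \z_i = (F_{23}(F_{12}(\x_i)) - F_{23}(\y_i)) + (F_{23}(\y_i) - \z_i)$, using that $F_{23}$ preserves the first group of norms. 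Letting $\epsilon \to 0$ yields the inequality.

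The one step that needs care is showing $d(\tau_1, \tau_2) = 0 \Rightarrow \tau_1 = \tau_2$, since $\mathcal A_K$ is not compact (translations are unbounded), so a minimizing sequence need not converge a priori. The plan is to take a minimizing sequence $F_m(\x) = O_m \x + \bb_m$ with $\sum_i \|F_m(\x_i) - \y_i\|^2 \to 0$. In particular $F_m(\x_1) \to \y_1$, i.e., $O_m \x_1 + \bb_m \to \y_1$. Since $O(K)$ is compact, pass to a subsequence with $O_{m_k} \to O \in O(K)$; then $\bb_{m_k} \to \y_1 - O\x_1 =: \bb$, so $F_{m_k} \to F$ pointwise for the isometry $F(\x) = O\x + \bb$. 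By continuity, $\sum_i \|F(\x_i) - \y_i\|^2 = 0$, so $F(\x_i) = \y_i$ for all $i$, giving $(\x_1, \dots, \x_n) \sim (\y_1, \dots, \y_n)$ and $\tau_1 = \tau_2$.

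The compactness reduction in the identity-of-indiscernibles step is the only nontrivial obstacle; the other three properties are formal manipulations with isometries and Minkowski's inequality.
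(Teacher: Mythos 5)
Your proposal is correct, and it is more complete than — and genuinely different from — the paper's own argument. The paper introduces the auxiliary quantity $\tilde d(\tau_1,\tau_2)=\min_{F\in\mathcal A_K}\max_i\Vert F(\y_i)-\x_i\Vert$, observes $\tilde d\leq d\leq \sqrt{n}\,\tilde d$, and then verifies the triangle inequality only for $\tilde d$; it does not explicitly treat symmetry, the identity of indiscernibles, or the attainment of the minimum. You instead prove the triangle inequality for $d$ itself by composing $\epsilon$-optimal isometries and applying Minkowski's inequality in $(\mathbb R^{K})^n$, which is the cleaner route: the two-sided bound between $d$ and $\tilde d$ only transfers a quasi-triangle inequality with constant $\sqrt{n}$ to $d$, so the paper's reduction does not by itself yield the triangle inequality for $d$ as stated, whereas your direct argument does. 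Your treatment of the identity of indiscernibles is also the right way to handle the non-compactness of $\mathcal A_K$: the observation that $F_m(\x_1)\to\y_1$ forces the translation parts to be bounded once one passes to a convergent subsequence of the rotations (using compactness of $O(K)$, which is closed, so the limit is again orthogonal) is exactly the step needed to justify that the infimum is attained and that $d(\tau_1,\tau_2)=0$ implies $\tau_1=\tau_2$; the paper tacitly assumes attainment by writing ``$\min$''. The symmetry argument via $\Vert F(\x_i)-\y_i\Vert=\Vert F^{-1}(\y_i)-\x_i\Vert$ and the well-definedness via the group structure of $\mathcal A_K$ are both correct. In short, the paper's approach is shorter because $\tilde d$ is easier to manipulate, but your approach is the one that actually delivers all the metric axioms for $d$ without gaps.
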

\begin{remark}\label{rmk:higher dim}
For $[\x_1,...,\x_n] \in \HH_{n,K},$ we have $[(\x_1^\top,0)^\top,...,(\x^\top_n,0)^\top] \in \HH_{n,K+1}$ in which sense
we can say $\HH_{n,K} \subset \HH_{n,K+1}.$ Thus $\HH_{n,K_1} \subset \HH_{n,K_2}$ if $K_1 \leq K_2.$
For $\tau_1 = [\x_1,...,\x_n] \in \HH_{n,K_1}, \tau_2 = [\y_1,...,\y_n] \in \HH_{n, K_2},$ the $d(\tau_1, \tau_2)$ is defined in the same way by seeing both $\tau_1$ and $\tau_2$ as elements in $\HH_{n,\max\{K_1,K_2\}}.$ 
\end{remark}

We further denote $\mathcal P_{a,b,K}$
as the set of $a \times b$ partial distance matrices for configurations in $\R^K:$
$$\mathcal P_{a,b,K} :=\left\{(\Vert\x_i -\y_j \Vert^2)_{a\times b}:  [\x_1, ..., \x_a, \y_1, ..., \y_b] \in \mathcal H_{a+b,K}\right\}.$$
It is easy to check that $\PP_{a,b,K} \subset \PP_{a,b,K+1}.$

For $A_1,...,A_{n} \subset \mathbb{R}^K$, denote $[A_1,...,A_{n}]$ as a subset of $\mathcal H_{n,K}:$
$$[A_1,...,A_{n}] := \{[\x_1, ..., \x_n]: \x_i \in A_i, i = 1, ..., n\}.$$

For $A,B \subset \mathcal H_{n,K}$, the distance between $A$ and $B$ is defined as
\begin{equation}\label{eq:dist_set}
d(A,B) := \inf\limits_{\tau_1 \in A, \tau_2 \in B}d(\tau_1,\tau_2).
\end{equation}

We further denote
$$\mathcal H_{n,K,C} := \{[\x_1, ..., \x_n] \in \HH_{n,K}: \Vert \x_i\Vert  \leq C\}$$
as a compact subset of $\mathcal H_{n,K},$ and
\begin{equation}\label{eq:P abKC def}
\mathcal P_{a,b,K,C} :=\left\{(\Vert\x_i -\y_j \Vert^2)_{a\times b}:  [\x_1, ..., \x_a, \y_1, ..., \y_b] \in \mathcal H_{a+b,K,C}\right\}
\end{equation}
as a compact subset of $\PP_{a,b,K}.$
We consider a mapping defined as following:
\begin{equation*}
\begin{aligned}
\Phi_{a,b,K}:  \R^{(a+b)\times K} &\rightarrow \mathcal P_{a,b,K},\\
 (\x_1,....,\x_{a+b})^\top &\mapsto D,
\end{aligned}
\end{equation*}
where $D$ is the $a \times b$ partial distance matrix of $\{(\x_1,...,\x_{a}),(\x_{a+1},...,\x_{a+b})\}$.
It is not difficult to check that $\Phi_{a,b,K}$
is invariant with respect to isometry. Then, for $\tau = [\x_1,...,\x_{a+b}],$ we denote $$
\begin{aligned}
\Phi_{a,b,K}(\tau) &:= \Phi_{a,b,K}(X),
\end{aligned}
$$ where $X^\top = (\x_1,...,\x_{a+b}).$

Having introduced the notions above, we give the following lemma, which is crucial to the proof of Theorem \ref{thm:loss}. It
essentially shows that for any partial distance matrix $D' \in \PP_{k_1,k_2,K_+,M}$ that approximates to another partial distance matrix $D \in \PP_{k_1,k_2,K,M},$ whose configuration $\tau$ contains a collection of anchor points,
then any configuration $\tau'$ of $D'$ will also approximate to $\tau.$



\begin{lemma}\label{lem:anchor_continuous}
For compact subsets $\BBB_1,...,\BBB_{k_1},\CCC_1,...,\CCC_{k_2}\subset B_{\0}^K(M),$ let $$\BBB =[\BBB_1,...,\BBB_{k_1},\CCC_1,...,\CCC_{k_2}].$$ Suppose that for any $(\x_1,...,\x_{k_1+k_2}) \in \BBB_1 \times \cdots \times \BBB_{k_1} \times \CCC_1 \times \cdots \times \CCC_{k_2},$ $\{\x_1,...,\x_{k_1}\}$ and $\{\x_{k_1+1},...,\x_{k_1+k_2}\}$ are a collection of anchor points in $\R^K.$
Then, for any $\epsilon_c > 0,$ there exists $\epsilon_d >0$ such that for any
$\tau' \in \HH_{k_1+k_2,K_+,M}$ and $\tau \in \BBB$ satisfying $$
\|\Phi_{k_1,k_2,K_+}(\tau')-\Phi_{k_1,k_2,K_+}(\tau)\|_F < \epsilon_d,
$$ we have $$
d(\tau',\tau) < \epsilon_c.
$$
\end{lemma}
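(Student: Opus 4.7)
The plan is a proof by contradiction that combines compactness of the configuration space with continuity of the partial distance map $\Phi_{k_1,k_2,K_+}$. Suppose the conclusion fails. Then there exist $\epsilon_c>0$ and sequences $\tau'_n\in\HH_{k_1+k_2,K_+,M}$ and $\tau_n\in\BBB$ with $\|\Phi_{k_1,k_2,K_+}(\tau'_n)-\Phi_{k_1,k_2,K_+}(\tau_n)\|_F\to 0$ while $d(\tau'_n,\tau_n)\ge\epsilon_c$ for every $n$. The target contradiction will come from identifying a common limit $\tau_*$ that forces $d(\tau'_*,\tau_*)=0$.

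\textbf{Step 1 (compactness, subsequence extraction).} First I would verify that the relevant spaces are compact in the metric $d$. The space $\HH_{k_1+k_2,K_+,M}$ is the image of the compact set $(B^{K_+}_{\0}(M))^{k_1+k_2}$ under the continuous quotient by the isometry group $\AAA_{K_+}$, hence compact. The subset $\BBB$ is the image of the compact product $\BBB_1\times\cdots\times\BBB_{k_1}\times\CCC_1\times\cdots\times\CCC_{k_2}\subset(B^K_{\0}(M))^{k_1+k_2}$ under the analogous quotient, so $\BBB$ is compact and, via the embedding $\HH_{n,K}\hookrightarrow\HH_{n,K_+}$ from Remark~\ref{rmk:higher dim}, lies compactly in $\HH_{k_1+k_2,K_+,M}$. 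Passing to subsequences therefore gives $\tau'_n\to\tau'_*\in\HH_{k_1+k_2,K_+,M}$ and $\tau_n\to\tau_*\in\BBB$.

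\textbf{Step 2 (continuity and the limiting equation).} The maps $\tau\mapsto\Phi_{k_1,k_2,K_+}(\tau)$ and $(\sigma,\tau)\mapsto d(\sigma,\tau)$ are continuous in the $d$-topology (the former because pairwise squared distances are continuous in the coordinates and isometry-invariant; the latter by the definition of $d$ on $\HH_{k_1+k_2,K_+}$ as an infimum over the compact orthogonal group). Hence $\Phi_{k_1,k_2,K_+}(\tau'_*)=\Phi_{k_1,k_2,K_+}(\tau_*)$ and $d(\tau'_*,\tau_*)\ge\epsilon_c>0$.

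\textbf{Step 3 (rigidity at the anchor configuration).} To complete the contradiction I would show $\tau'_*=\tau_*$ in $\HH_{k_1+k_2,K_+}$. Since $\tau_*\in\BBB$, its points form a collection of anchor points in $\R^K$, so conditions A1 and A2 hold at $\tau_*$. Picking representatives $(\x_1,\ldots,\x_{k_1},\y_1,\ldots,\y_{k_2})\in(\R^K)^{k_1+k_2}$ and $(\x'_1,\ldots,\x'_{k_1},\y'_1,\ldots,\y'_{k_2})\in(\R^{K_+})^{k_1+k_2}$, and translating so that $\x_1=\x'_1=\0$, the identity $\|\x'_i-\y'_j\|^2=\|\x_i-\y_j\|^2$ yields $\|\y'_j\|=\|\y_j\|$ together with the linear relation
\begin{equation*}
\langle\x'_i,\y'_j\rangle=\langle\x_i,\y_j\rangle+c_i,\qquad c_i=\tfrac12\bigl(\|\x'_i\|^2-\|\x_i\|^2\bigr).
\end{equation*}
Because $\{\y_j\}$ affinely spans $\R^K$ by A2, the vectors $\{\y_j-\y_1\}$ span $\R^K$; taking differences in $j$ eliminates $c_i$ and forces $\x'_i$ to have the same projection onto the span of $\{\y'_j-\y'_1\}$ as $\x_i$ onto the span of $\{\y_j-\y_1\}$, which, combined with $\|\y'_j\|=\|\y_j\|$ and the symmetric argument with the roles of respondents and items interchanged, shows that all points of $\tau'_*$ lie in a single $K$-dimensional affine subspace of $\R^{K_+}$. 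An isometry of $\R^{K_+}$ then places this subspace into $\R^K\times\{\0\}$, so $\tau'_*$ can be represented in $\HH_{k_1+k_2,K}$ with the same partial distance matrix as $\tau_*$, and A1 gives $\tau'_*=\tau_*$.

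\textbf{Main obstacle.} The hardest part is the rigidity argument in Step 3: condition A1 only guarantees uniqueness of configurations in $\R^K$, whereas $\tau'_*$ lives in the larger space $\R^{K_+}$ and could in principle exploit the extra dimensions to produce a spurious realization of the anchor partial distance matrix. The technical content is therefore the reduction to $\R^K$---showing that the affine-spanning property A2 of the anchor points, together with the linear inner-product identity derived above, forces the affine hull of the perturbed points to be at most $K$-dimensional inside $\R^{K_+}$, so that A1 can finally be invoked.
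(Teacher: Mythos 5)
Your Steps 1 and 2 reproduce the paper's argument essentially verbatim: argue by contradiction, use compactness of $\HH_{k_1+k_2,K_+,M}$ and $\BBB$ to extract convergent subsequences, and pass to the limit to obtain $\tau'_*\in\HH_{k_1+k_2,K_+,M}$ and $\tau_*\in\BBB$ with the same partial distance matrix but $d(\tau'_*,\tau_*)\geq\epsilon_c>0$. At this point the paper stops: it invokes condition A1 in the definition of anchor points, reading ``$D$ has a unique configuration'' as uniqueness among \emph{all} realizations in the ambient space $\HH_{k_1+k_2,K_+,M}$, not merely among realizations in $\R^K$ (this global reading is also what the appeal to Alfakih's uniqueness theorem in Lemma~\ref{lem:anchor_exist_new} supplies). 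Under that reading your Step 3 is unnecessary, and the lemma is immediate.

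The genuine problem is that your Step 3, which you correctly flag as the hard part, asserts something false: condition A2 together with the identities $\langle\x'_i,\y'_j-\y'_1\rangle=\langle\x_i,\y_j-\y_1\rangle$ does \emph{not} force the points of $\tau'_*$ into a $K$-dimensional affine subspace of $\R^{K_+}$. Take $K=1$, $K_+=2$, respondents $\{0,1\}$ and items $\{2,3\}$: both sets affinely span $\R^1$ and the partial distance matrix $\bigl(\begin{smallmatrix}4&9\\1&4\end{smallmatrix}\bigr)$ has a unique configuration in $\R^1$, yet the planar points $\x'_1=(0,0)$, $\x'_2=(2,1)$, $\y'_1=(2,0)$, $\y'_2=\bigl(2+2/\sqrt{5},\,1-4/\sqrt{5}\bigr)$ realize exactly the same partial distance matrix and are not collinear, hence not isometric to the line configuration. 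The flaw in your derivation is twofold: the identity you obtain controls only the component of $\x'_i$ along $\mathrm{span}\{\y'_j-\y'_1\}$ and says nothing about the orthogonal component, and the Gram matrix of $\{\y'_j-\y'_1\}$ is itself not pinned down because item--item distances are absent from the partial distance matrix. So the collapse to $K$ dimensions cannot be derived from A2; it is precisely the content of the uniqueness hypothesis A1 (read in dimension $K_+$), which must be assumed rather than proved here. If you insist on reading A1 as uniqueness only within $\R^K$, the lemma as stated is actually false, as the same example shows.
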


We end this section by the following lemma, which will also be used in the proof of Theorem \ref{thm:loss}.
\begin{lemma}\label{lem:another}
Suppose $\{\bb^*_1,...,\bb^*_{k_1}\}, \{\cc^*_1,...,\cc^*_{k_2}\} \subset B_{\bf 0}^K(C)$ are a collection of anchor points in $\R^K$. Then, for any $\x \in B_{\bf 0}^K(C),$ the $\{\x,\bb^*_1,...,\bb^*_{k_1}\}, \{\cc^*_1,...,\cc^*_{k_2}\}$ are also a collection of anchor points in $\R^K.$
\end{lemma}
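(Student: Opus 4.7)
The plan is to verify conditions A1 and A2 in Definition \ref{def:anchor} for the augmented pair of sets $\{\x, \bb^*_1, \ldots, \bb^*_{k_1}\}$ and $\{\cc^*_1, \ldots, \cc^*_{k_2}\}$. Condition A2 is immediate: the set $\{\bb^*_1, \ldots, \bb^*_{k_1}\}$ already affinely spans $\R^K$ by the original assumption, so adjoining $\x$ preserves this property, and the second set is unchanged. The entire work therefore concentrates on A1, i.e., showing that small perturbations of the new $(k_1+1)\times k_2$ partial distance matrix $D^{**}$ still admit a unique configuration in $\HH_{k_1+1+k_2, K}$.

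The strategy for A1 is to \emph{split off the extra row}. Let $D^*\in \R^{k_1\times k_2}$ be the original anchor distance matrix and $\eta$ be the constant from condition A1 for $\{\bb_i^*\},\{\cc_j^*\}$. Suppose $\tilde D^{**}$ satisfies $\|\tilde D^{**}-D^{**}\|_F<\eta'$ for some $\eta' \leq \eta$ to be chosen, and let $\tau_1,\tau_2 \in \HH_{k_1+1+k_2,K}$ be two configurations both realizing $\tilde D^{**}$. Dropping the coordinate corresponding to $\x$ yields sub-configurations $\tau_1^{\mathrm{sub}}, \tau_2^{\mathrm{sub}}\in \HH_{k_1+k_2,K}$ whose partial distance matrix $\tilde D \in \R^{k_1\times k_2}$ is the first $k_1$ rows of $\tilde D^{**}$, which satisfies $\|\tilde D - D^*\|_F < \eta'\leq \eta$. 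Applying A1 for the original anchor points gives $\tau_1^{\mathrm{sub}}=\tau_2^{\mathrm{sub}}$, so there is a single isometry $F\in \AAA_K$ mapping some representative $(\tilde \bb_1,\ldots,\tilde \bb_{k_1},\tilde \cc_1,\ldots,\tilde \cc_{k_2})$ of $\tau_1^{\mathrm{sub}}$ to the chosen representative of $\tau_2^{\mathrm{sub}}$.

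It remains to argue that the isometry extends uniquely to the extra point. If $\tilde \x$ (from $\tau_1$) and $\tilde \x'$ (from $\tau_2$) are the extra points in the chosen representatives, then both $F(\tilde \x)$ and $\tilde \x'$ share identical squared distances $\tilde d_{k_1+1,j}$ to every $\tilde \cc_j'$. Expanding the squared distances gives $(F(\tilde \x)-\tilde \x')^\top \tilde \cc_j'$ constant in $j$, and subtracting the $j=1$ equation shows $F(\tilde \x)-\tilde \x'$ is orthogonal to $\tilde \cc_j' - \tilde \cc_1'$ for every $j$. Thus the conclusion $F(\tilde \x)=\tilde \x'$ follows once we know that $\{\tilde \cc_1',\ldots,\tilde \cc_{k_2}'\}$ affinely spans $\R^K$.

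The main obstacle, and the step I would handle last, is guaranteeing this affine spanning property under perturbation. This is where I would invoke Lemma \ref{lem:anchor_continuous}: since affine spanning is an open condition on tuples of points, and since $\{\cc_1^*,\ldots,\cc_{k_2}^*\}$ affinely spans $\R^K$ by assumption, I can find $\epsilon_c>0$ such that any $(\tilde \cc_1',\ldots,\tilde \cc_{k_2}')$ within distance $\epsilon_c$ of $(\cc_1^*,\ldots,\cc_{k_2}^*)$ still affinely spans. Lemma \ref{lem:anchor_continuous}, applied with $\BBB_i,\CCC_j$ chosen as small balls around the original anchor points (which by Remark \ref{rk:anchor} remain anchor points), then produces an $\epsilon_d$ such that $\|\tilde D - D^*\|_F<\epsilon_d$ forces $d(\tau_1^{\mathrm{sub}},\tau^*)<\epsilon_c$ with $\tau^*$ the original anchor configuration; choosing a suitable representative puts $\tilde \cc_j'$ within $\epsilon_c$ of $\cc_j^*$. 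Taking $\eta' = \min(\eta, \epsilon_d)$ closes the argument and delivers $\tau_1=\tau_2$, completing the verification of A1.
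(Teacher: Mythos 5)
Your proposal is correct and follows the same two-step decomposition as the paper's own proof: condition A2 is dismissed immediately, and A1 is verified by first pinning down the sub-configuration $[\bb_1,\ldots,\bb_{k_1},\cc_1,\ldots,\cc_{k_2}]$ via A1 for the original anchor points and then locating the extra point from its distances to the $\cc_j$'s using the affine-spanning condition A2. The differences are only in presentation: where the paper invokes Proposition~\ref{prop:configuration} and asserts that $\x=\x'$ "is not difficult to see," you carry out the orthogonality computation explicitly, and where the paper compresses the passage from perturbed distance matrices to exact anchor collections into an unargued "it suffices to show" reduction, you handle the perturbation head-on via Lemma~\ref{lem:anchor_continuous} and the openness of affine spanning -- arguably a more careful rendering of the same argument.
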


\subsection{Proof of Theorems}
\begin{proof}[Proof of Theorem \ref{thm:loss}]
We first show the proof of \eqref{eq:theorem}.
For $\epsilon$ which is given in condition A3, there exist constant 
$p_{\epsilon} \in (0, 1)$, and balls of radius $\epsilon$ in $\mathbb R^{K}$, denoted by $\tilde B_1(\epsilon)$, ..., $\tilde B_{k_1}(\epsilon)$, $\tilde G_{1}(\epsilon), ..., \tilde G_{k_2}(\epsilon)$, such that for $N,J$ large enough, 
\begin{equation*}
  \begin{aligned}
  & \frac{\sum_{l=1}^N 1_{\{\ttt_l^*\in B_{\mathbf b_i^*}(\epsilon), \tilde\ttt_l \in \tilde B_i(\epsilon)\}}}{N} > p_{\epsilon}, i = 1, ..., k_1,\\
  & \frac{\sum_{l=1}^J 1_{\{\mathbf a_l^*\in B_{\mathbf c_i^*}(\epsilon), \tilde{\mathbf a}_l \in \tilde G_i(\epsilon)\}}}{J} > p_{\epsilon}, i = 1, ..., k_2.\\
  \end{aligned}
\end{equation*}
This comes straightforwardly from condition A0 and requirement (2) of anchor points in condition A3. Note that the centers of $\tilde B_{k}(\epsilon)$ and $\tilde G_{l}(\epsilon)$ may vary through $N,J$. We also use $B_k^*(\epsilon)$ and $G^*_{l}(\epsilon)$ to denote $B_{\bb^*_k}(\epsilon)$ and $B_{\cc^*_l}(\epsilon),$ respectively.

We first focus on the set of person points $$I_1(\epsilon) := \bigcup_{k=1}^{k_1}\{i \in [N]: \ttt_i^*\in B^*_{k}(\epsilon), \tilde{\ttt}_i \in \tilde B_k(\epsilon)\}$$ and the set of item points
$$I_2(\epsilon) := \bigcup_{l=1}^{k_2}\{j \in [J]: \mathbf a_j^*\in G_{l}^*(\epsilon), \tilde{\mathbf a}_j \in \tilde G_l(\epsilon)\}.$$

Let $\ttt_i^+ = \left((\ttt_i^*)^\top, \0^\top\right)^\top, \aaaa_j^+ = \left( (\aaaa_j^*)^\top, \0^\top \right) \in \R^{K_+}.$
We will show that there exists an isometry mapping $F_{N, J} \in \mathcal A_{K_+}$, under which $F_{N, J}(\tilde \ttt_i) \approx \ttt_i^+$ and
$F_{N, J}(\tilde \aaaa_j) \approx \aaaa_j^+$, for all $i\in I_1(\epsilon)$ and $j \in I_2(\epsilon)$. This is formalized in the following lemma.

\begin{lemma}\label{lem:part1+}
For $N,J$ large enough,
there exists an isometry $F_{N, J} \in \mathcal A_{K_+}$, such that
$$\Vert F_{N, J}(\mathbf x)\Vert \leq 4M, \mbox{~for all~}  \mathbf x \in B_{\0}^{K_+}(M),
$$
and for all $i \in I_1(\epsilon)$ and for all $j \in I_2(\epsilon)$,
$$\Vert F_{N, J}(\tilde \ttt_i) - \ttt_i^+ \Vert \leq 5\epsilon,$$
and
$$\Vert F_{N, J}(\tilde \aaaa_j) - \aaaa_j^+ \Vert \leq 5\epsilon.$$
\end{lemma}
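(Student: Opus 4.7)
The plan is to pick one representative respondent index $i_k$ inside each anchor neighborhood $B_{\bb_k^*}(\epsilon)$ and one representative item index $j_\ell$ inside each $B_{\cc_\ell^*}(\epsilon)$, apply Lemma~\ref{lem:anchor_continuous} to align the perturbed configuration on these $k_1+k_2$ representatives with their zero-padded true values via a single isometry $F_{N,J}\in\mathcal A_{K_+}$, and then propagate this alignment to every $i\in I_1(\epsilon)$ and $j\in I_2(\epsilon)$ by the triangle inequality, using that both the true and the perturbed points sit inside balls of radius $\epsilon$.

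To choose the representatives, let $I_{1,k}(\epsilon)=\{i:\ttt_i^{*}\in B_{\bb_k^{*}}(\epsilon),\ \tilde\ttt_i\in\tilde B_k(\epsilon)\}$ and $I_{2,\ell}(\epsilon)=\{j:\aaaa_j^{*}\in B_{\cc_\ell^{*}}(\epsilon),\ \tilde\aaaa_j\in\tilde G_\ell(\epsilon)\}$. By the setup entering the proof of Theorem~\ref{thm:loss}, these have sizes at least $p_\epsilon N$ and $p_\epsilon J$ for $N,J$ large enough, and by A3 the anchor balls are pairwise disjoint, so the products $I_{1,k}(\epsilon)\times I_{2,\ell}(\epsilon)$ are disjoint subsets of $[N]\times[J]$. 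Drawing $i_k$ uniformly at random from $I_{1,k}(\epsilon)$ and $j_\ell$ uniformly at random from $I_{2,\ell}(\epsilon)$,
\begin{equation*}
\mathbb{E}\!\left[\sum_{k=1}^{k_1}\sum_{\ell=1}^{k_2}(\tilde d_{i_k j_\ell}-d^{*}_{i_k j_\ell})^{2}\right]\le\frac{1}{p_\epsilon^{2}\,NJ}\,\|\tilde D_{N,J}-D^{*}_{N,J}\|_F^{2}=o(1),
\end{equation*}
so I can deterministically fix representatives $\{i_k\},\{j_\ell\}$ for which the displayed sum vanishes as $N,J\to\infty$.

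Next, I apply Lemma~\ref{lem:anchor_continuous} with $\BBB_k=B_{\bb_k^{*}}(\epsilon)$ and $\CCC_\ell=B_{\cc_\ell^{*}}(\epsilon)$, whose required anchor property is exactly A3(1). Setting $\tau^{*}=[\ttt^{+}_{i_1},\ldots,\ttt^{+}_{i_{k_1}},\aaaa^{+}_{j_1},\ldots,\aaaa^{+}_{j_{k_2}}]$ and $\tilde\tau=[\tilde\ttt_{i_1},\ldots,\tilde\ttt_{i_{k_1}},\tilde\aaaa_{j_1},\ldots,\tilde\aaaa_{j_{k_2}}]$ and choosing the target tolerance $\epsilon_c=\epsilon$, the lemma furnishes $\epsilon_d>0$ such that, once $N,J$ are large enough to push the Frobenius bound from the preceding paragraph below $\epsilon_d$, there exists $F_{N,J}\in\mathcal A_{K_+}$ with $\sum_k\|F_{N,J}(\tilde\ttt_{i_k})-\ttt^{+}_{i_k}\|^{2}+\sum_\ell\|F_{N,J}(\tilde\aaaa_{j_\ell})-\aaaa^{+}_{j_\ell}\|^{2}\le\epsilon^{2}$, so each term is bounded by $\epsilon$. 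For $i\in I_{1,k}(\epsilon)$, the points $\tilde\ttt_i$ and $\tilde\ttt_{i_k}$ both lie in $\tilde B_k(\epsilon)$, so $\|\tilde\ttt_i-\tilde\ttt_{i_k}\|\le 2\epsilon$, preserved by $F_{N,J}$, and $\|\ttt^{+}_i-\ttt^{+}_{i_k}\|=\|\ttt^{*}_i-\ttt^{*}_{i_k}\|\le 2\epsilon$; the triangle inequality then gives $\|F_{N,J}(\tilde\ttt_i)-\ttt^{+}_i\|\le 2\epsilon+\epsilon+2\epsilon=5\epsilon$, and the analogous chain handles items. Writing the isometry as $F_{N,J}(\x)=O\x+\bb$, I bound $\|\bb\|\le \|F_{N,J}(\tilde\ttt_{i_1})\|+\|\tilde\ttt_{i_1}\|\le(M+\epsilon)+M=2M+\epsilon$, whence $\|F_{N,J}(\x)\|\le 3M+\epsilon<4M$ on $B_{\0}^{K_+}(M)$, since $\epsilon<M/10$ by A3.

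The main obstacle is the representative selection: the hypothesis $\|\tilde D_{N,J}-D^{*}_{N,J}\|_F^{2}=o(NJ)$ is only an average over $NJ$ entries, yet Lemma~\ref{lem:anchor_continuous} demands a single $k_1\times k_2$ submatrix of representative pairs whose perturbation vanishes. The disjointness of the anchor balls together with the uniform lower bound $p_\epsilon$ on local density is exactly what turns the average bound into an honest pigeonhole argument, producing one isometry that simultaneously aligns all representatives and hence, by the triangle inequality, every point in $I_1(\epsilon)\cup I_2(\epsilon)$.
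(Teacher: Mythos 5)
Your proposal is correct and follows essentially the same route as the paper's proof: both use the density condition A3(2) together with the averaged Frobenius bound to extract representative configurations in the anchor balls whose partial distance matrices nearly agree, invoke Lemma~\ref{lem:anchor_continuous} to obtain the aligning isometry, and propagate to all of $I_1(\epsilon)$ and $I_2(\epsilon)$ by the triangle inequality (your randomized selection of representatives is just an explicit version of the paper's bound $\gamma_{N,J}^2(p_\epsilon N)(p_\epsilon J)\leq\|\tilde D_{N,J}-D^*_{N,J}\|_F^2$). Your derivation of the $4M$ bound via $\|\bb\|\leq 2M+\epsilon$ is in fact a cleaner justification than the paper's contradiction argument.
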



We then show that for most of the person points $i \notin I_1(\epsilon)$ and for most of the item points $j \notin I_2(\epsilon)$, we still have $F_{N, J}(\tilde \ttt_i) \approx \ttt^+_i$ and
$F_{N, J}(\tilde \aaaa_j) \approx \aaaa^+_j$, under the same isometry mapping $F_{N, J}$ as in Lemma~\ref{lem:part1+}. This is formalized in Lemma~\ref{lem:part2+} below.

\begin{lemma}\label{lem:part2+}
For $N,J$ large enough,
there exists a constant $\kappa > 0,$ such that
for the isometry mapping $F_{N, J}$ defined in Lemma \ref{lem:part1+}, the proportions
$$\lambda_{1, N, J} = \frac{\sum_{i=1}^N 1_{\{\Vert F_{N, J}(\tilde\ttt_i) - \ttt^+_i \Vert > \kappa\epsilon\}}}{N}$$
and
$$\lambda_{2, N, J} = \frac{\sum_{j=1}^J 1_{\{\Vert  F_{N, J}(\tilde \aaaa_j) - \aaaa_j^+ \Vert > \kappa\epsilon \}}}{J}$$
satisfy
\begin{equation}\label{eq:lambda}
\lambda_{k, N, J} \to 0,
\end{equation}
for $k=1,2,$ as $N,J$ grow to infinity.
\end{lemma}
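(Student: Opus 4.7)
The plan is to combine Lemma~\ref{lem:part1+} (which controls $F_{N,J}$ pointwise on the anchor sets $I_1(\epsilon)$ and $I_2(\epsilon)$) with the averaged Frobenius bound $\|\tilde{D}_{N,J} - D^*_{N,J}\|_F^2 = o(NJ)$ to show that most of the \emph{remaining} persons (resp.\ items) are also well-approximated by $F_{N,J}$. I will sketch the argument for $\lambda_{1,N,J} \to 0$; the case of $\lambda_{2,N,J}$ is symmetric, swapping the roles of persons and items. The guiding idea is that for any remaining person $i$, approximate distance equality to many items near each item-anchor $\cc_l^*$ forces $F_{N,J}(\tilde\ttt_i)$ close to $\ttt_i^+$ via the affine-span condition A2 of the anchor points.

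\textbf{Step 1 (two-stage averaging).} Set $R_i := \sum_{j=1}^{J}(\|\tilde\ttt_i - \tilde\aaaa_j\|^2 - \|\ttt_i^* - \aaaa_j^*\|^2)^2$. The Frobenius bound gives $\sum_i R_i = o(NJ)$, so by Markov's inequality the proportion of ``bad'' persons with $R_i > \delta J$ is $o(1)$ for any fixed $\delta>0$; call the rest good. For a good $i$ and each $l \in \{1,\ldots,k_2\}$, define
\begin{equation*}
I_2^{(l)}(\epsilon) := \{j \in [J] : \aaaa_j^* \in B^*_{l}(\epsilon),\ \tilde\aaaa_j \in \tilde G_l(\epsilon)\};
\end{equation*}
the preliminary setup of the proof of Theorem~\ref{thm:loss} (using A3(2)) gives $|I_2^{(l)}(\epsilon)| \geq p_\epsilon J$ for $N,J$ large. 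A second Markov step \emph{inside} $I_2^{(l)}(\epsilon)$, using $R_i \leq \delta J$, then produces at least one $j_l \in I_2^{(l)}(\epsilon)$ with $|\|\tilde\ttt_i - \tilde\aaaa_{j_l}\|^2 - \|\ttt_i^* - \aaaa_{j_l}^*\|^2| \leq \sqrt{\delta / p_\epsilon}$.

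\textbf{Step 2 (anchor reconstruction).} For each such $j_l$, chain four approximations: (i) the isometry identity $\|F_{N,J}(\tilde\ttt_i) - F_{N,J}(\tilde\aaaa_{j_l})\|^2 = \|\tilde\ttt_i - \tilde\aaaa_{j_l}\|^2$; (ii) $\|F_{N,J}(\tilde\aaaa_{j_l}) - \aaaa_{j_l}^+\| \leq 5\epsilon$ from Lemma~\ref{lem:part1+}; (iii) $\|\aaaa_{j_l}^* - \cc_l^*\| \leq \epsilon$ by construction; (iv) $\|F_{N,J}(\tilde\ttt_i)\| \leq 4M$ from Lemma~\ref{lem:part1+}. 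Combining these through the identity $\|\mathbf x - \mathbf y\|^2 - \|\mathbf x - \mathbf z\|^2 = \langle \mathbf z - \mathbf y,\, 2\mathbf x - \mathbf y - \mathbf z\rangle$ yields, for every $l=1,\ldots,k_2$,
\begin{equation*}
\bigl|\,\|F_{N,J}(\tilde\ttt_i) - \cc_l^+\|^2 - \|\ttt_i^* - \cc_l^*\|^2\,\bigr| \leq C_0 M \epsilon + o(1).
\end{equation*}
Writing $F_{N,J}(\tilde\ttt_i) = (\mathbf u_i^\top, \mathbf v_i^\top)^\top \in \R^K \times \R^{K_+-K}$ and $\cc_l^+ = ((\cc_l^*)^\top, \mathbf 0^\top)^\top$, this reads $\|\mathbf u_i - \cc_l^*\|^2 + \|\mathbf v_i\|^2 - \|\ttt_i^* - \cc_l^*\|^2 = O(M\epsilon) + o(1)$. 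Differencing between two anchors $l \neq l'$ cancels the $\|\mathbf v_i\|^2$ term and linearizes to $(\ttt_i^* - \mathbf u_i)^\top (\cc_l^* - \cc_{l'}^*) = O(M\epsilon) + o(1)$. By affine-span condition A2 there are $K$ such differences forming a basis of $\R^K$, which yields $\|\mathbf u_i - \ttt_i^*\| = O(\epsilon) + o(1)$; substituting back into any one anchor equation gives $\|\mathbf v_i\|^2 = O(M\epsilon) + o(1)$, so $\|F_{N,J}(\tilde\ttt_i) - \ttt_i^+\|^2 = O(M\epsilon) + o(1)$ for every good $i$.

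\textbf{Main obstacle.} Choosing $\kappa$ large enough in terms of $M$, $\epsilon$, and the affine-span constant (but independent of $N$ and $J$) then makes $\|F_{N,J}(\tilde\ttt_i) - \ttt_i^+\| \leq \kappa\epsilon$ for every good $i$ once $N,J$ are large, giving $\lambda_{1,N,J} \to 0$. The crux is reconciling the globally averaged Frobenius bound with the pointwise nature of anchor reconstruction: the two-stage Markov scheme (good person first, then a good representative $j_l$ inside each anchor ball) is what converts average control into per-anchor pointwise control. A further subtlety is uniformity of the affine-span constant in $N, J$, which is precisely where A3(1) (stability of the anchor property under small perturbations) is used to keep the inverted linear system well-conditioned.
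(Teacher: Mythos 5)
Your architecture is genuinely different from the paper's. The paper never tries to bound a well-behaved person's error by the size of the distance perturbation: it argues by contrapositive. A ``bad'' person $i$ with $\Vert F_{N,J}(\tilde\ttt_i)-\ttt_i^+\Vert>\kappa\epsilon$ is shown, via Lemma~\ref{lem:add a point} (a \emph{linear} bound in terms of the configuration metric $d(\cdot,\cdot)$, i.e.\ the optimal isometric matching, not the distance-matrix perturbation), to force $d(\tilde\tau,\tau^*)>5\sqrt{k_1+k_2}\,\epsilon$ for a local augmented configuration; Lemmas~\ref{lem:another} and \ref{lem:anchor_continuous} (compactness) then convert this into a \emph{fixed} lower bound $\delta_\epsilon$ on the Frobenius norm of the corresponding $(k_1+1)\times k_2$ sub-block of $\tilde D_{N,J}-D^*_{N,J}$; finally a disjoint-block counting argument shows at most $o(N)$ persons can be bad. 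Your two-stage Markov selection of good persons and good per-anchor representatives is a perfectly reasonable substitute for that counting step, and your direct reconstruction avoids the compactness lemma entirely.

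The gap is in Step 2, and it is quantitative but real. Your chain of approximations controls $\bigl|\Vert F_{N,J}(\tilde\ttt_i)-\cc_l^+\Vert^2-\Vert\ttt_i^*-\cc_l^*\Vert^2\bigr|$ only up to $O(M\epsilon)$, because the anchor images carry unavoidable $O(\epsilon)$ position errors (radius of the anchor balls plus the $5\epsilon$ from Lemma~\ref{lem:part1+}), which enter the squared distances through cross terms of size $M\epsilon$. Differencing kills $\Vert\mathbf v_i\Vert^2$ and recovers $\Vert\mathbf u_i-\ttt_i^*\Vert=O(\epsilon)$, but the orthogonal component is then only pinned down at the level $\Vert\mathbf v_i\Vert^2=O(M\epsilon)$, i.e.\ $\Vert\mathbf v_i\Vert=O(\sqrt{M\epsilon})$. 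Since $\epsilon<M/10$, $\sqrt{M\epsilon}$ is not $O(\epsilon)$, so no constant $\kappa$ independent of $\epsilon$ makes all good persons satisfy $\Vert F_{N,J}(\tilde\ttt_i)-\ttt_i^+\Vert\leq\kappa\epsilon$; a one-dimensional example ($K=1$, $K_+=2$, a person lifted by height $h$ with $h^2\sim M\epsilon$) shows this loss is intrinsic to reconstructing a point from perturbed squared distances to nearly coplanar anchors. Your fix of letting $\kappa$ depend on $\epsilon$ makes the effective threshold $\sqrt{M\epsilon}$, which still yields the consistency statement \eqref{eq:theorem_2} but degrades the rate in \eqref{eq:theorem} from $C\epsilon^2$ to $CM\epsilon$. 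The paper sidesteps this precisely because its linear anchor-reconstruction bound (Lemma~\ref{lem:add a point}) is phrased in the configuration metric, and the lossy conversion from distance-matrix perturbation to configuration distance is only ever invoked in the contrapositive with an unspecified constant $\delta_\epsilon$, which the $o(NJ)$ Frobenius bound beats regardless of how small $\delta_\epsilon$ is.
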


Since by Lemma~\ref{lem:part1+}, we have $F_{N, J}$ maps $B^{K_+}_{\0}(M)$ to $B_{\0}^{K_+}(4M)$, then for all $\tilde{\ttt}_i$ and for all $\tilde{\aaaa}_j$,
$$\Vert F_{N, J}(\tilde\ttt_i) - \ttt_i^+ \Vert \leq 5M$$ and $$\Vert F_{N, J}(\tilde \aaaa_j) - \aaaa_j^+ \Vert \leq 5M.$$
Combining this with Lemma~  
\ref{lem:part2+}, we have
\begin{equation}
\begin{aligned}
& \min_{F\in \mathcal A_{K_+}} \left\{ \frac{\sum_{i=1}^N \Vert F(\tilde\ttt_i) - \ttt_i^+\Vert^2}{N} + \frac{\sum_{j=1}^J \Vert F(\tilde\aaaa_j) - \aaaa_j^+\Vert^2}{J}\right\} \\
\leq &\frac{\sum_{i=1}^N \Vert F_{N,J}(\tilde\ttt_i) - \ttt_i^+ \Vert^2}{N} + \frac{\sum_{j=1}^J \Vert F_{N,J}(\tilde\aaaa_j) - \aaaa_j^+ \Vert^2}{J}\\
\leq & \left(25(M)^2\lambda_{1, N, J} + \kappa^2\epsilon^2\right) + \left(25(M)^2\lambda_{2, N, J} + \kappa^2\epsilon^2\right) \\
\leq & 25(M)^2(\lambda_{1,N,J}+\lambda_{2,N,J}) + 2\kappa^2\epsilon^2
\end{aligned}
\end{equation}
By \eqref{eq:lambda}, \eqref{eq:theorem} holds.
\eqref{eq:theorem_2} holds if $\epsilon$ can be arbitrarily small.
We complete the proof.
\end{proof}

\begin{proof}[Proof of Theorem \ref{thm:main}]
Combining Theorem \ref{thm:loss} and Proposition \ref{prop:distance}, we have the result.
\end{proof}

\begin{proof}[Proof of Theorem \ref{thm:random}]
Theorem \ref{thm:random} is a special case of Proposition \ref{prop:randommissing}. See the proof of Proposition \ref{prop:randommissing}.
\end{proof}

\begin{proof}[Proof of Theorem \ref{thm:cluster}]
For simplicity of writing, we suppose $K_+ = K$ in this proof.
We only prove the result for the respondents.
The proof for the items is the same.
Under the conditions of Theorem~\ref{thm:cluster}, the result of Theorem~\ref{thm:main} is satisfied and with a slight change in the proof, we can get
$$
\max_{F \in \mathcal A_K}\frac{\sum_{i=1}^N \Vert \hat \ttt_i -  F(\mathbf b_{\vartheta_i^*}^*)\Vert}{N}^2 = o_p(1).
$$
Consequently, there exists isometry $F_{N,J}^*$, such that
\begin{equation}\label{eq:person}
\frac{\sum_{i=1}^N \Vert \hat \ttt_i -  F_{N, J}^* (\mathbf b_{\vartheta_i^*}^*)\Vert^2}{N} = o_p(1),
\end{equation}
noting that $\bb_{\vartheta_i^*}^* = \ttt_i^*$.

\begin{lemma}\label{lem:kmeans}
Under the same conditions as Theorem \ref{thm:cluster},
suppose that
$$
\frac{\sum_{i=1}^N \Vert \hat \ttt_i -  F_{N, J}^* (\mathbf b_{\vartheta_i^*}^*)\Vert^2}{N} = o_p(1).
$$
Then we have
$$
\max_{\zeta \in \mathcal B_{k_1}}\frac{\sum_{i=1}^{N} 1_{\{\vartheta_i^* = \zeta(\hat \vartheta_i)\}}}{N} = o_p(1). 
$$
\end{lemma}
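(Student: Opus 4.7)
The plan is to translate the $L^2$ closeness of $\hat\ttt_i$ to the isometry-mapped true centers into a matching between the $k_1$ estimated K-means centroids and the true centers, and then to show that almost every respondent is assigned to its matched centroid. To set up, I would first use condition A3 together with the fact that the anchor partial distance matrix has strictly positive entries to extract a constant $3\delta := \min_{j \neq j'} \|\bb_j^* - \bb_{j'}^*\| > 0$, and asymptotic lower bounds $\liminf_N N^{-1}\sum_{i=1}^N 1_{\{\vartheta_i^* = j\}} \geq p_j > 0$ for each $j \in \{1,\ldots,k_1\}$. Because $F_{N,J}^*$ is an isometry, the mapped centers $\mathbf{m}_j := F_{N,J}^*(\bb_j^*)$ are pairwise separated by $3\delta$ for every $N$ and $J$.

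Next I would upper bound the global K-means objective. Using $\{\mathbf{m}_j\}_{j=1}^{k_1}$ as trial centroids and assigning each $\hat\ttt_i$ to centroid index $\vartheta_i^*$, the objective value is exactly $\sum_{i=1}^N \|\hat\ttt_i - \mathbf{m}_{\vartheta_i^*}\|^2$, which is $o_p(N)$ by hypothesis. Since $(\hat\vartheta_i,\hat\mu_1,\ldots,\hat\mu_{k_1})$ minimize the K-means objective globally, the attained optimum also equals $o_p(N)$. An application of Markov's inequality to the hypothesis gives that, with probability tending to one, the fraction of indices $i$ with $\|\hat\ttt_i - \mathbf{m}_{\vartheta_i^*}\| > \delta/4$ is $o_p(1)$; call the remaining indices \emph{good}.

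The key step, and the main obstacle, is to produce a permutation $\zeta_N \in \mathcal{B}_{k_1}$ such that $\max_j \|\hat\mu_{\zeta_N(j)} - \mathbf{m}_j\| \leq \delta/4$ with probability tending to one. I plan to argue by contradiction: if some $\mathbf{m}_{j_0}$ had every $\hat\mu_\ell$ at distance $\geq \delta$, then the at least $(p_{j_0} - o(1)) N$ good indices with $\vartheta_i^* = j_0$ would each contribute at least $(3\delta/4)^2$ to the K-means objective (their distance to every $\hat\mu_\ell$ is $\geq \delta - \delta/4$), forcing the objective to be $\Omega(N)$ and contradicting the $o_p(N)$ bound. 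Hence each $\mathbf{m}_j$ has some $\hat\mu_{\ell(j)}$ within distance $\delta$; a pigeonhole argument exploiting the $3\delta$-separation of the $\mathbf{m}_j$'s rules out two centroids being captured by the same true center, producing a bijection $\zeta_N$. Tightening the radius from $\delta$ to $\delta/4$ uses a second contradiction argument with a smaller bad-neighbourhood radius.

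Finally, for every good index $i$, the triangle inequality combined with $\|\hat\mu_{\zeta_N(j)} - \mathbf{m}_j\| \leq \delta/4$ and the $3\delta$-separation of the $\mathbf{m}_j$'s forces the K-means assignment to satisfy $\hat\vartheta_i = \zeta_N(\vartheta_i^*)$: the centroid $\hat\mu_{\zeta_N(\vartheta_i^*)}$ lies within $\delta/2$ of $\hat\ttt_i$, while every other $\hat\mu_\ell$ lies at distance at least $3\delta - \delta/4 - \delta/4 - \delta/4 > \delta/2$. Summing over the $N - o_p(N)$ good indices and evaluating the stated supremum at this $\zeta_N$ delivers the conclusion, which when read together with the accompanying consistency statement in Theorem~\ref{thm:cluster} gives the claimed vanishing behaviour of the best-permutation mismatch proportion.
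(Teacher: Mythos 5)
Your proposal is correct and follows essentially the same route as the paper's proof: bound the K-means objective via global optimality against the trial centroids $F^*_{N,J}(\bb^*_{\vartheta_i^*})$, use the separation $d=\min_{j\neq j'}\|\bb^*_j-\bb^*_{j'}\|$ of the mapped true centers together with the A3 proportion lower bounds to match the estimated centroids to the true centers by a permutation (the paper's Lemmas~\ref{lem:perm} and~\ref{lem:temp}), and finish with a triangle-inequality argument on the set of well-approximated indices. The only differences are cosmetic --- the paper tracks $\|\boldsymbol\mu_{\hat\vartheta_i}-F^*_{N,J}(\bb^*_{\vartheta_i^*})\|$ directly rather than splitting into point-to-center and centroid-to-center distances --- and your ``tightening to $\delta/4$'' step is both unnecessary (radius $\delta$ already suffices given the $3\delta$ separation) and, as stated, only yields any radius strictly larger than $\delta/4$.
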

With Lemma \ref{lem:kmeans}, we complete the proof for the respondents.
\end{proof}

\subsection{Proof of Propositions}

\begin{proof}[Proof of Proposition \ref{prop:configuration}]
It suffices to prove in the case when $\sum_{i=1}^n \x_i = \0$ and $\sum_{i=1}^n \y_i = \0.$ Denote $D = (d_{ij})_{n\times n},$ where $d_{ij} = \|\x_i -\x_j \|^2 = \|\y_i - \y_j\|^2$ and let $B = (b_{ij})_{n\times n} = -\frac{1}{2}JDJ,$ where $J = I_n - 1_n 1_n^\top / n.$ Then $B$ is inner product matrix of both $\{\x_1,...,\x_n\}$ and $\{\y_1,...,\y_n\}.$ That is,
$b_{ij} = \x_i^\top \x_j = \y_i \y_j^\top,$ for $1\leq i, j \leq n.$ We refer readers to \cite{critchley1988certain} for the relation between inner product matrix and distance matrix. So if we denote $$P_1 = (\x_1,...,\x_n)^\top, \quad P_2 = (\y_1,...,\y_n)^\top,$$ then we have $$P_1P_1^\top = P_2P_2^\top = B.$$ Let $$P_1^\top = Q_1R_1, \quad P_2^\top = Q_2R_2$$ be the QR decomposition (see \cite{cheney2009linear}) of $P_1, P_2,$ where $Q_1, Q_2$ are $k\times k$ orthogonal matrix and $R_1, R_2$ are $k\times n$ upper-triangular matrix with non-negative diagonal entries. Since $\x_i^\top\x_j = \y_i^\top\y_j,$ for $1\leq i,j \leq n,$ it is not difficult to check that $R_1 = R_2.$ If we define $O = Q_2Q_1^\top,$ then $$OP_1^\top = OQ_1R_1 = Q_2Q_1^\top Q_1R_1 = Q_2R_1 = Q_2R_2 = P_2^\top,$$ which means $O\x_i = \y_i,$ for $1\leq i\leq n.$ We complete the proof.
\end{proof}

\begin{proof}[Proof of Proposition \ref{prop:random}]
We first introduce a lemma as following.
\begin{lemma}\label{lem:anchor_exist_new}
There exists a collection of anchor points $\{\bb^*_1,...,\bb^*_{k_2}\}, \{\cc^*_1,...,\cc^*_{k_2}\} \subset \text{int}(G),$ where $G$ is the ball defined in Proposition \ref{prop:random}.
\end{lemma}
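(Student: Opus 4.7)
\textbf{The plan} is to construct an explicit collection of points inside $\text{int}(G)$ in ``general position'' and verify that it satisfies conditions A1 and A2. Since $G$ is a ball of positive radius, $\text{int}(G)$ is a nonempty open subset of $\R^K$. First I would fix integers $k_1, k_2 \ge K+1$ and pick $\bb_1^*, \ldots, \bb_{k_1}^* \in \text{int}(G)$ so that any $K+1$ of them are affinely independent, and similarly pick $\cc_1^*, \ldots, \cc_{k_2}^* \in \text{int}(G)$; I would also arrange $\bb_i^* \ne \cc_j^*$ for all $i, j$. Such a choice exists because the set of configurations failing these constraints (affinely dependent $(K+1)$-subsets, coincident points, or, as discussed below, non-rigid arrangements) is a finite union of real algebraic varieties of positive codimension in $(\text{int}(G))^{k_1+k_2}$, hence nowhere dense. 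Condition A2 is then immediate from the $(K+1)$-wise affine independence, and the positivity of all entries of $D^*$ follows from disjointness of the two point sets.

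To verify A1, I would proceed in two steps. First, uniqueness of the configuration realizing $D^*$ is established by a trilateration argument: using the isometry freedom to place $\bb_1^*, \ldots, \bb_{K+1}^*$ in a fixed standard position, each $\cc_j^*$ is determined by its $K+1$ distances to these affinely independent centers, since the intersection of $K+1$ spheres in $\R^K$ centered at affinely independent points contains at most one point when the system is consistent. Each remaining $\bb_i^*$ (for $i > K+1$) is then determined analogously by its distances to an affinely independent subset of $\{\cc_j^*\}$. Second, for any $D$ with $\|D - D^*\|_F < \eta$ and any configuration realizing $D$, I would run the same reconstruction procedure on $D$; the sphere-intersection equations are smooth in $D$, and by a Jacobian computation their solution exists, is unique, and depends continuously on $D$ in a neighborhood of $D^*$. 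This yields both existence and uniqueness of the configuration for small perturbations, giving A1.

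\textbf{The main obstacle} is justifying the non-singularity of the trilateration Jacobian (infinitesimal rigidity) at the chosen configuration. This is the nontrivial technical content, since without it one cannot rule out coincidental degeneracies producing a continuous family of distinct configurations realizing the same perturbed distance matrix. I would handle this either by invoking classical results on generic rigidity of bipartite bar frameworks in $\R^K$ when both vertex sets affinely span, or by direct calculation, showing that the Jacobian of the map $(\bb_1, \ldots, \bb_{k_1}, \cc_1, \ldots, \cc_{k_2}) \mapsto (\|\bb_i - \cc_j\|^2)_{i,j}$, modulo the $\binom{K+1}{2}$-dimensional isometry group, has full column rank at a generic point. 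The existence of this dense generic stratum inside the open set $\text{int}(G)$ is precisely what allows the initial ``general position'' choice to succeed, completing the construction.
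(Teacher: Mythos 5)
There is a genuine gap, and it sits exactly where the difficulty of the unfolding problem lies. Your trilateration step assumes that an arbitrary realization of $D^*$ can be brought, by an isometry, into a ``standard position'' for $\bb_1^*,\ldots,\bb_{K+1}^*$, after which each $\cc_j^*$ is pinned down by its distances to these $K+1$ points. But $D^*$ is a \emph{partial} distance matrix: the within-set distances $\|\bb_i^*-\bb_{i'}^*\|$ and $\|\cc_j^*-\cc_{j'}^*\|$ are structurally missing, so a second realization of $D^*$ need not place the $\bb$'s in an arrangement congruent to the original one, and there is no data-determined standard position to normalize to. This is precisely the degeneracy illustrated in Figure~\ref{fig:example}; more pointedly, a generic realization of the complete bipartite framework $K_{3,3}$ in the plane (i.e., $k_1=k_2=K+1=3$) is minimally rigid, hence not globally rigid, and admits non-congruent realizations with identical bipartite distances --- so the choice ``$k_1,k_2\geq K+1$ in general position'' does not by itself yield condition A1.

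The second issue is that your fallback --- infinitesimal rigidity via a non-singular Jacobian, or generic rigidity of bipartite bar frameworks --- only delivers \emph{local} uniqueness: the implicit function theorem controls the solution branch near the reference configuration, whereas A1 requires that every partial distance matrix in a Frobenius neighborhood of $D^*$ has a globally unique configuration. Moreover, the genericity/codimension argument you use to select the points presupposes that the set of ``bad'' configurations is a proper subvariety, which fails when there are too few points (e.g., $K_{4,4}$ is generically flexible in $\R^3$, so $k_1=k_2=K+1$ can fail even local rigidity); establishing that \emph{some} choice of $k_1,k_2$ and some configuration is uniquely completable is exactly the nontrivial existence statement. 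The paper imports this wholesale: it invokes Theorem 3.3 of Alfakih (2003) to obtain point sets in $\mathrm{int}(G)$ whose partial distance matrix has a unique configuration, notes that this persists under small perturbations of the points, and then upgrades the statement to condition A1 via the compactness argument of Lemma~\ref{lem:anchor_continuous}. To repair your proof you would need to replace trilateration by a genuine global-rigidity or unique-completability result for complete bipartite frameworks, with the correct lower bounds on $k_1$ and $k_2$.
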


We fix such collection of anchor points. For any $\epsilon >0,$ we denote $B_k^*(\epsilon), G_l^*(\epsilon),$ for $1\leq k\leq k_1$ and $1\leq l\leq k_2,$
as balls centered at $\bb_k^*$ and $\cc^*_l,$ respectively. For sufficiently small $\epsilon >0,$ it is easy to see that for any $$\bb_1 \in B_1^*(\epsilon),...,\bb_{k_1} \in B_{k_1}^*(\epsilon),\cc_1 \in G_1^*(\epsilon),...,\cc_{k_2} \in G_{k_2}^*(\epsilon),$$ the $\{\bb_1,...,\bb_{k_1}\}, \{\cc_1,...,\cc_{k_2}\}$ are a collection of anchor points in $\R^K.$
Therefore, the (1) of A3 holds.
We define $$\beta_\epsilon := \frac{1}{2}\min_{\substack{1\leq k \leq k_1\\ 1\leq l \leq k_2}} \left\{ P_1B_k^*(\epsilon), P_2G_l^*(\epsilon) \right\}$$
and use $\AAA_{N,J}$ to denote the following event
\begin{equation}\label{eq:hoeffding2}
\begin{aligned}
&\left| \frac{1}{N} \sum_{i=1}^N 1_{\{\ttt_i^* \in B^*_k(\epsilon)\}} - P_1B_k^*(\epsilon) \right| \leq \beta_\epsilon, \quad k=1,...,k_1,\\
&\left| \frac{1}{J} \sum_{j=1}^J 1_{\{\aaaa_j^* \in G^*_l(\epsilon)\}} - P_2G_l^*(\epsilon) \right| \leq \beta_\epsilon, \quad l=1,...,k_2,
\end{aligned}
\end{equation}
where $P_1B_k^*(\epsilon), P_2G_l^*(\epsilon)$ represent the probability measure of $B_k^*(\epsilon), G^*_l(\epsilon)$ with respect to $P_1$ and  $P_2,$ respectively.
By Hoeffding's inequality, we have
\begin{equation}\label{eq:prob2}
\Pr\left(\eqref{eq:hoeffding2} \text{ holds }\right) \geq 1-2k_1\exp(-\frac{1}{2}N\beta_\epsilon^2) - 2k_2\exp(-\frac{1}{2}J\beta_\epsilon^2).
\end{equation}
So we have $$\Pr(\AAA_{N,J}) \to 1$$ as $N,J$ grow.
On $\AAA_{N,J},$ we have
\begin{equation}\label{eq:number anchor2}
\begin{aligned}
&\frac{1}{N} \sum_{i=1}^N 1_{\{\ttt_i^* \in B_k^*(\epsilon)\}}  \geq \beta_\epsilon, \quad 1\leq k\leq k_1,\\
&\frac{1}{J} \sum_{j=1}^J 1_{\{\aaaa_j^* \in G_l^*(\epsilon)\}} \geq \beta_\epsilon, \quad 1 \leq l \leq k_2.
\end{aligned}
\end{equation}
On $\AAA_{N,J},$ \eqref{eq:number anchor2} holds. Then, the (2) of A3 holds almost surely.
\end{proof}

\begin{proof}[Proof of Proposition \ref{prop:distance}]
Proposition \ref{prop:distance} is a special case of Proposition \ref{prop:distancemissing}. See the proof of Proposition \ref{prop:distancemissing}.
\end{proof}

\begin{proof}[Proof of Proposition \ref{prop:distancemissing}]
The proof of Proposition \ref{prop:distancemissing} is similar to Theorem 1 of \cite{davenport20141}. We only state the main steps. 

We denote $D$ as the partial distance matrix of $(\ttt_1,...,\ttt_N)$ and $(\aaaa_1,...,\aaaa_J)$ (to simplify the notation, we ignore the subscripts $N$ and $J$ for $D$). Since the likelihood function depends on $(\ttt_1,...,\ttt_N)$ and $(\aaaa_1,...,\aaaa_J)$ only through their partial distance matrix, we re-parameterize the likelihood function by $D$. We denote
$$l_{\Omega,Y}(D) = \log L^{\Omega}(\ttt_1,...,\ttt_N,\aaaa_1,...,\aaaa_J),$$
where the subscripts $\Omega = (\omega_{ij})_{N\times J}$ and $Y = (Y_{ij})_{N\times J}$ indicate the random variables in the likelihood function and $D$ contains the parameters.

Let
\begin{equation}
\bar{l}_{\Omega, Y}(D) = l_{\Omega, Y}(D) - l_{\Omega, Y}(\bf{0}),
\end{equation}
where $\bf{0}$ represents an $N\times J$ matrix whose elements are all 0 and let
\begin{equation}
G = \left\{D \in \mathbb R^{N \times J}: \|D\|_{*} \leq 4M^2 \sqrt{(K_++2)NJ}  \right\}.
\end{equation}
\begin{lemma}
Under the same conditions as Proposition \ref{prop:distancemissing}, there exist constant $C_1$ and $C_2$ such that
\begin{equation*}
\begin{aligned}
&\Pr\left(    \sup \limits_{D \in G} | \bar{l}_{\Omega, Y}(D) - E\bar{l}_{\Omega, Y}(D)  | \geq 4 M^2 C_1  L_{4M^2} \sqrt{K_++2} \sqrt{n(N+J)+NJ\log(NJ)}   \right) \\
\leq &\frac{C_2}{N+J}.
\end{aligned}
\end{equation*}
\label{lem:concentration}
\end{lemma}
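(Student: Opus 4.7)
The plan is to adapt the one-bit matrix completion argument of \cite{davenport20141} to our weighted (distance-matrix) parameterization. The key ingredients are symmetrization, the Ledoux--Talagrand contraction principle applied to the Lipschitz link, nuclear/operator-norm duality, a non-commutative Bernstein bound for a sparse Rademacher--Bernoulli matrix, and finally a bounded-differences concentration to pass from the expected supremum to a high-probability statement.

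Concretely, writing
\begin{equation*}
\bar l_{\Omega,Y}(D) = \sum_{i,j}\omega_{ij}\Bigl\{Y_{ij}\log\tfrac{f(d_{ij})}{f(0)} + (1-Y_{ij})\log\tfrac{1-f(d_{ij})}{1-f(0)}\Bigr\},
\end{equation*}
I would first observe, via condition~A4, that on the entry-wise slice $|d_{ij}|\le 4M^2$ each summand is $L_{4M^2}$-Lipschitz in $d_{ij}$ and bounded by $4M^2 L_{4M^2}$. The standard symmetrization inequality then gives
\begin{equation*}
E\sup_{D\in G}\bigl|\bar l_{\Omega,Y}(D) - E\bar l_{\Omega,Y}(D)\bigr| \;\le\; 2\,E\sup_{D\in G}\Bigl|\sum_{i,j}\varepsilon_{ij}\omega_{ij}\,h_{ij}(d_{ij})\Bigr|,
\end{equation*}
where $\varepsilon_{ij}$ are independent Rademacher variables and $h_{ij}$ denotes the $(i,j)$ summand without the $\omega_{ij}$ weight. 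The contraction principle then peels off the Lipschitz $h_{ij}$, leaving (up to a factor of order $L_{4M^2}$) $E\sup_{D\in G}|\langle \mathcal{E}, D\rangle|$ with $\mathcal{E}_{ij}=\varepsilon_{ij}\omega_{ij}$. Nuclear/operator-norm duality and the definition of $G$ bound this by $4M^2\sqrt{(K_++2)NJ}\cdot E\|\mathcal{E}\|_{\mathrm{op}}$.

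The operator-norm bound is the quantitative heart of the argument. Since the entries of $\mathcal{E}$ are independent, mean zero, uniformly bounded, and of variance $n/(NJ)$, a non-commutative Bernstein (or Bandeira--van Handel) inequality for rectangular matrices yields $E\|\mathcal{E}\|_{\mathrm{op}} \le C\bigl(\sqrt{n(N+J)/(NJ)} + \log(NJ)/\sqrt{NJ}\bigr)$, so that multiplying through produces an expected deviation bound of the advertised order $M^2 L_{4M^2}\sqrt{K_++2}\sqrt{n(N+J)+NJ\log(NJ)}$. To upgrade this to a high-probability statement, I would invoke a bounded-differences argument: swapping a single pair $(Y_{ij},\omega_{ij})$ changes $\bar l_{\Omega,Y}(D)$ by at most $4M^2 L_{4M^2}$, so McDiarmid's inequality (or, more sharply, Talagrand's inequality for empirical processes) yields a Gaussian tail that, with deviation of order $M^2 L_{4M^2}\sqrt{NJ\log(NJ)}$, is absorbed into the same rate and produces the stated probability bound $C_2/(N+J)$.

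The main obstacle will be twofold. First, obtaining the precise scaling $\sqrt{n(N+J)+NJ\log(NJ)}$ rather than looser alternatives such as $\sqrt{n\max(N,J)}+\log(NJ)\sqrt{NJ}$ requires carefully balancing the Bernstein variance term and the entry-wise bound term in the operator-norm estimate, paying attention to the Bernoulli sparsity $n/(NJ)$ and the rectangular aspect ratio $N\ne J$. Second, the contraction step demands entry-wise control on $d_{ij}$, whereas $G$ is only nuclear-norm constrained; handling $d_{ij}$ outside $[-4M^2,4M^2]$ requires either folding an entry-wise truncation into $G$ (as in \cite{davenport20141}) or a localization argument showing that the relevant optimum stays within that slice. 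Matching constants and logarithmic factors cleanly through all these steps is the delicate bookkeeping.
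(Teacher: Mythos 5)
Your outline reproduces exactly the argument the paper relies on: the paper's proof of Lemma~\ref{lem:concentration} simply states that it is ``similar to Lemma A.1 of \cite{davenport20141}'', and that lemma is established by the same chain you describe --- symmetrization, Ledoux--Talagrand contraction to peel off the $L_{4M^2}$-Lipschitz link, nuclear/operator-norm duality against the constraint $\|D\|_{*} \leq 4M^2\sqrt{(K_++2)NJ}$, an operator-norm bound for the sparse Rademacher--Bernoulli mask, and a final concentration step. Your caveat that the contraction step needs entrywise control on $d_{ij}$ while $G$ carries only a nuclear-norm constraint is a fair and correct observation, and it is resolved exactly as you suggest, since the set $H$ actually used downstream satisfies $\|D\|_\infty \leq 4M^2$.
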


Let
$H = \{   D: d_{ij}=\|\ttt_i-\aaaa_j\|^2, \text{ where } \|\ttt_i\|, \|\aaaa_j\| \leq M, i = 1,..., N, j = 1, ..., J \}.$
It is easy to check that  $H \subset G$. Consequently,
\begin{align*}
 & \ \ \Pr\left(    \sup \limits_{D \in H} | \bar{l}_{\Omega, Y}(D) - E\bar{l}_{\Omega, Y}(D)  | \geq 4C_1M^2L_{4M^2} \sqrt{K_++2}  \sqrt{n(N+J)+NJ\log(NJ)}   \right)\\
\leq & \ \ \Pr\left(    \sup \limits_{D \in G} | \bar{l}_{\Omega, Y}(D) - E\bar{l}_{\Omega, Y}(D)  | \geq 4C_1M^2L_{4M^2} \sqrt{K_++2}  \sqrt{n(N+J)+NJ\log(NJ)}   \right)\\
\leq & \ \ \frac{C_2}{N+J}.
\end{align*}

Given the above development, Proposition \ref{prop:distancemissing} is implied by the following lemma.

\begin{lemma} \label{lem:bddist}
Under the same conditions as Proposition \ref{prop:distancemissing},
$$\frac{1}{NJ} \|D^*_{N,J}-\hat{D}_{N,J}\|_F^2 \leq \frac{16}{n} \beta_{4M^2} \sup \limits_{D \in H} \vert \bar{l}_{\Omega,Y}(D)- E\bar{l}_{\Omega,Y}(D) \vert.$$
\end{lemma}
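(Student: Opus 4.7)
The argument is a standard oracle-inequality M-estimation template (cf.\ Davenport et al.\ for $1$-bit matrix completion): sandwich the expected log-likelihood gap between $D^*_{N,J}$ and $\hat D_{N,J}$ between a uniform deviation bound (the right-hand side of the claim) and a quadratic in $\|D^*_{N,J}-\hat D_{N,J}\|_F^2$. The easy side comes from the definition of the constrained MLE. Since $D^*_{N,J}\in H$ by A0, and $\hat D_{N,J}$ maximizes $l_{\Omega,Y}$ on $H$, we have $\bar l_{\Omega,Y}(\hat D_{N,J})\ge \bar l_{\Omega,Y}(D^*_{N,J})$. Adding and subtracting $E\bar l_{\Omega,Y}$ at both $D^*_{N,J}$ and $\hat D_{N,J}$, and using that both matrices lie in $H$, yields
\[
E\bar l_{\Omega,Y}(D^*_{N,J})-E\bar l_{\Omega,Y}(\hat D_{N,J})\;\le\;2\sup_{D\in H}\bigl|\bar l_{\Omega,Y}(D)-E\bar l_{\Omega,Y}(D)\bigr|.
\]

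The next step is to identify the left-hand side with a weighted sum of Bernoulli Kullback--Leibler divergences. Because the $\omega_{ij}$'s are i.i.d.\ Bernoulli($n/(NJ)$) independent of $Y$, the $l_{\Omega,Y}(\mathbf 0)$ terms appearing in the two $\bar l$'s cancel, and a direct computation gives
\[
E\bar l_{\Omega,Y}(D^*_{N,J})-E\bar l_{\Omega,Y}(\hat D_{N,J})\;=\;\frac{n}{NJ}\sum_{i,j}\mathrm{KL}\!\bigl(f(d^*_{ij})\,\|\,f(\hat d_{ij})\bigr).
\]
By A0 and the constraint in $H$, every $d^*_{ij}$ and $\hat d_{ij}$ lies in $[0,4M^2]$, so condition A4 is applicable to each term.

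The core technical step is then a quadratic lower bound on the KL term. Fixing $i,j$ and writing $\phi(d):=\mathrm{KL}(f(d^*_{ij})\,\|\,f(d))$, one checks that $\phi(d^*_{ij})=0$, $\phi'(d^*_{ij})=0$, and that $\phi''(d)$ is controlled from below by $|f'(d)|^2/[f(d)(1-f(d))]$; condition A4 then gives $\phi''(d)\ge 1/\beta_{4M^2}$ for all $d\in[0,4M^2]$. A Taylor-with-remainder argument (mirroring Lemma A.2 of Davenport et al., whose hypotheses A4 is designed to reproduce) yields
\[
\mathrm{KL}\!\bigl(f(d^*_{ij})\,\|\,f(\hat d_{ij})\bigr)\;\ge\;\frac{1}{8\,\beta_{4M^2}}\,(d^*_{ij}-\hat d_{ij})^2
\]
uniformly. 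Inserting this into the KL identity and combining with the uniform deviation inequality gives
\[
\frac{n}{8\,NJ\,\beta_{4M^2}}\,\|D^*_{N,J}-\hat D_{N,J}\|_F^2\;\le\;2\sup_{D\in H}\bigl|\bar l_{\Omega,Y}(D)-E\bar l_{\Omega,Y}(D)\bigr|,
\]
which rearranges to the claim (the constant $16=2\cdot 8$ arising from the symmetric peeling step and the KL minorant).

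The main obstacle is the quadratic lower bound on KL. Condition A4 delivers only the Fisher-information identity $\phi''(d^*_{ij})=|f'(d^*_{ij})|^2/[f(d^*_{ij})(1-f(d^*_{ij}))]\ge 1/\beta_{4M^2}$, which controls $\phi$ locally around $d^*_{ij}$; upgrading this to a \emph{global} quadratic minorant of $\phi$ on the full interval $[0,4M^2]$ with an explicit multiplicative constant is the delicate point. The bound $L_{4M^2}<\infty$ in A4 (which the Taylor-remainder estimate also uses) is exactly what prevents the bound from degenerating over this interval, which is why the pair $(L_{4M^2},\beta_{4M^2})$ appearing in A4 matches the one in the proposition.
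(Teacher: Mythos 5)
Your overall architecture matches the paper's: the oracle inequality $E\bar l_{\Omega,Y}(D^*_{N,J})-E\bar l_{\Omega,Y}(\hat D_{N,J})\le 2\sup_{D\in H}|\bar l_{\Omega,Y}(D)-E\bar l_{\Omega,Y}(D)|$, the identification of the left side with $\tfrac{n}{NJ}\sum_{i,j}D_{KL}(f(d^*_{ij})\Vert f(\hat d_{ij}))$, and then a quadratic minorant of the KL divergence in $(d^*_{ij}-\hat d_{ij})^2$ with constant $1/(8\beta_{4M^2})$, giving $16=2\cdot 8$. The first two steps are correct and are exactly what the paper does.

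The gap is in the third step, and you have correctly diagnosed but not closed it. Writing $\phi(d)=D_{KL}(f(d^*_{ij})\Vert f(d))$, one has $\phi'(d)=f'(d)\,\frac{f(d)-f(d^*_{ij})}{f(d)(1-f(d))}$, so $\phi''(d)$ contains the term $f''(d)\,\frac{f(d)-f(d^*_{ij})}{f(d)(1-f(d))}$ in addition to the Fisher-information term; the identity $\phi''(d)=|f'(d)|^2/[f(d)(1-f(d))]$ holds \emph{only} at $d=d^*_{ij}$, and A4 places no bound on $f''$, so the asserted global bound $\phi''(d)\ge 1/\beta_{4M^2}$ on $[0,4M^2]$ does not follow from the stated hypotheses. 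Your Taylor-with-remainder argument therefore does not go through as written. The paper closes this step by a different route: it first uses the elementary pointwise inequality $d_H^2(p,q)\le D_{KL}(p\Vert q)$ to pass from KL to the squared Hellinger distance, and then applies Lemma A.2 of \cite{davenport20141}, which gives $d_H^2(f(x),f(y))\ge (x-y)^2/(8\beta_{4M^2})$ by the mean value theorem applied to $\sqrt{f}$ and $\sqrt{1-f}$ --- an argument that uses only $f'$ and hence only the quantities that A4 actually controls. If you insert the Hellinger step, the rest of your proof (and your constant) is exactly the paper's.
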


Therefore, with probability at least $1-{C_2}/{(N+J)},$
\begin{equation*}
\frac{1}{NJ} \|D^*_{N,J}-\hat{D}_{N,J}\|_F^2 \leq   64C_1M^2L_{4M^2}\beta_{4M^2} \sqrt{K_++2} \sqrt{\frac{N+J}{n}}\sqrt{1+\frac{NJ\log(NJ)}{n(N+J)}}.
\end{equation*}
We complete the proof by absorbing $64\sqrt{K_++2}$ into $C_1$.
\end{proof}

\begin{proof}[Proof of Proposition \ref{prop:randommissing}]
We use $\AAA_{N,J}$ to denote the event that the result in Proposition \ref{prop:distancemissing} holds. By Theorem \ref{thm:loss} and Proposition \ref{prop:random}, on $\AAA_{N,J},$ we have $$
\min_{F\in \mathcal A_{K_+}}\frac{\sum_{i=1}^N \Vert \ttt_i^+ - F(\hat \ttt_i^{\Omega})\Vert^2}{N} + \frac{\sum_{j=1}^J \Vert \aaaa_j^+ - F(\hat \aaaa_j^{\Omega})\Vert^2}{J}
$$ goes to $0,$ as $N,J$ grow to infinity. Since $\Pr(\AAA_{N,J}) \to 0,$ we complete the proof.
\end{proof}

\subsection{Proof of Lemmas}

\begin{proof}[Proof of Lemma \ref{lem:metric}]
Let $\tau_1 = [\x_1,...,\x_n], \tau_2 = [\y_1,...,\y_n], \tau_3 = [\z_1,...,\z_n]$.
Define $$\tilde d(\tau_1,\tau_2) :=  \min_{F\in \AAA_K}\max_i\|F(\y_i) - \x_i \|$$ and it is easy to check that $$\tilde d(\tau_1,\tau_2) \leq d(\tau_1,\tau_2) \leq \sqrt{n}\tilde d(\tau_1,\tau_2).$$ So we just need to verify that function $\tilde d(\cdot, \cdot)$ satisfies the triangle inequality.
Let isometries $F_{21}, F_{31}$ satisfy
\begin{align*}
&\tilde d(\tau_1,\tau_2) = \max_i\|F_{21}(\y_i)-\x_i\| = \|F_{21}(\y_l)-\x_l\|, \\
&\tilde d(\tau_1,\tau_3) = \max_i\|F_{31}(\z_i)-\x_i\| = \|F_{31}(\z_m)-\x_m\|. \\
\end{align*}
Then
\begin{align*}
\tilde d(\tau_2,\tau_3) &\leq \max_{i}\left\{\|F_{31}(\z_i)-F_{21}(\y_i)\|\right\} \\
& \leq\max_{i}\{\|F_{31}(\z_i)-\x_i\| + \|F_{21}(\y_i) - \x_i\|\} \\
& \leq \|F_{21}(\y_l)-\x_l\| + \|F_{31}(\z_m) - \x_m\| \\
& = \tilde d(\tau_1,\tau_2) + \tilde d(\tau_1,\tau_3).
\end{align*}
We complete the proof.
\label{proof:metric}
\end{proof}

\begin{proof}[Proof of Lemma \ref{lem:anchor_continuous}]
Otherwise there exist $\epsilon_0>0$ and sequences $\{\tau_1^{(n)}\}_{n=1}^{\infty} \subset \mathcal \HH_{k_1+k_2,K_+,M}$, and $\{\tau_2^{(n)}\}_{n=1}^{\infty} \subset \BBB$ such that $$\left\|\Phi_{k_1,k_2,K_+}(\tau_1^{(n)})-\Phi_{k_1,k_2,K_+}(\tau_2^{(n)})\right\|_F<\frac{1}{n}$$
and $$d(\tau_1^{(n)},\tau_2^{(n)})>\epsilon_0.$$
Since both $\mathcal H_{k_1+k_2,K_+,M}$ and $\BBB$ are compact, there exists a subsequence $\{n_k\}_{k=1}^{\infty} \subset \mathbb N^+$, such that $\lim_{k\rightarrow \infty}\tau_1^{(n_k)} = \tilde\tau \in \mathcal H_{k_1+k_2,K_+,M}$ and $\lim_{k\rightarrow \infty}\tau_2^{(n_k)} = \tau_0 \in \BBB$. The two configurations $\tilde\tau$ and $\tau_0$ have the same partial distance matrix but $d(\tilde\tau,\tau_0)>\epsilon_0$. This makes a contradiction because $\tau_0 \in \BBB$ is the only configuration of its partial distance matrix, by the requirement of $\BBB.$
\end{proof}

\begin{proof}[Proof of Lemma \ref{lem:another}]
For a collection of points $\{\x,\bb^*_1,...,\bb^*_{k_1}\}, \{\cc^*_1,...,\cc^*_{k_2}\},$ it is not difficult to verify that condition A2 holds. So we only need to verify A1.

To verify A1, it suffices to show that if $\{\bb_1,...,\bb_{k_1}\},\{\cc_1,...,\cc_{k_2}\}$ is a collection of anchor points, then for any $\x \in B^K_{\bf 0 }(C),$
 $[\x,\bb_1,...,\bb_{k_1},\cc_1,...,\cc_{k_2}]$ is the unique configuration corresponding to its $(k_1+1)\times k_2$ partial distance matrix.

Suppose that $
\tau = [\x,\bb_1,...,\bb_{k_1},\cc_1,...\cc_{k_2}]$ and $\tau' = [\x',\bb'_1,...,\bb'_{k_1},\cc'_1,...\cc'_{k_2}]$ satisfy $$\Phi_{k_1+1,k_2,K}(\tau) = \Phi_{k_1+1,k_2,K}(\tau').$$
Then $$
\Phi_{k_1,k_2,K}([\bb_1,...,\bb_{k_2},\cc_1,...,\cc_{k_2}]) = \Phi_{k_1,k_2,K}([\bb'_1,...,\bb'_{k_1},\cc'_1,...\cc'_{k_2}]).
$$ Since $\{\bb_1,...,\bb_{k_1}\}, \{\cc_1,...\cc_{k_2}\}$ are a collection of anchor points, then $[\bb_1,...,\bb_{k_1},\cc_1,...\cc_{k_2}] = [\bb'_1,...,\bb'_{k_1},\cc'_1,...\cc'_{k_2}].$ Without loss of generality, we suppose $\bb_l = \bb_l'$ and $\cc_m = \cc'_m.$ Then,
the two configurations, $[\x,\cc_1,...,\cc_{k_2}]$ and $[\x,\cc'_1,...,\cc'_{k_2}],$ have the same complete distance matrix, which further leads that $$
[\x, \cc_1,...,\cc_{k_2}] = [\x',\cc_1,...,\cc_{k_2}].
$$ Since $\cc_1,...,\cc_{k_2}$ can affine span $\R^K,$ it is not difficult to see that $\x = \x'.$
Then, we get $\tau = \tau',$ and A1 has been verified.
\end{proof}

\begin{proof}[Proof of Lemma \ref{lem:part1+}]
We define
\begin{equation*}
\begin{aligned}
S^*_{N,J}(\epsilon) &=
\left[B_{1}^*(\epsilon),...,B_{k_1}^*(\epsilon),G_{1}^*(\epsilon),...,G_{k_2}^*(\epsilon)\right] \subset \HH_{k_1+k_2,K,M},\\
\tilde S_{N,J}(\epsilon) &=
\left[\tilde B_1(\epsilon), ...,\tilde B_{k_1}(\epsilon), \tilde G_{1}(\epsilon), ...,\tilde G_{k_2}(\epsilon)\right] \subset \HH_{k_1+k_2,K_+, M},
\end{aligned}
\end{equation*}
where $B^*_k(\epsilon),\tilde B_k(\epsilon), G^*_l(\epsilon), \tilde G_l(\epsilon)$ are defined in the proof of Theorem \ref{thm:loss}.
Let
\begin{equation}\label{eq:sigma definition}
\sigma_{N,J} := d(\tilde S_{N,J}(\epsilon),S^*_{N,J}(\epsilon))
\end{equation}
By \eqref{eq:dist_set} and triangle inequality, there exists an iosmetry $F_{N,J} \in \AAA_{K_+},$ such that for all $\x^*_k \in B_k^*(\epsilon), \y_{l}^* \in G_l^*(\epsilon), \tilde \x_k \in \tilde B_k(\epsilon), \tilde \y_l \in \tilde G_l(\epsilon),$
\begin{equation}\label{eq:triangle_bound}
\begin{aligned}
&\|F_{N,J}(\tilde \x_k) - \x_k^+\| \leq 4\epsilon + \sigma_{N,J}, \quad 1\leq k \leq k_1,\\
&\|F_{N,J}(\tilde \y_l) - \y_l^+\| \leq 4\epsilon + \sigma_{N,J}, \quad 1 \leq l \leq k_2.\\
\end{aligned}
\end{equation}

In what follows, we will show that $\sigma_{N,J}\leq \epsilon$ for $N,J$ large enough.
We first define
\begin{align}\label{eq:gamma definition}
\gamma_{N,J} = \inf\{    \|\Phi_{k_1,k_2,K_+}(\tilde \tau) - \Phi_{k_1,k_2, K_+}(\tau^*)\|_F: \tilde \tau \in \tilde S_{N,J}(\epsilon),
\tau^* \in S^*_{N,J}(\epsilon) \}
\end{align}
and 
we have
\begin{equation*}
\gamma_{N,J}^2(p_\epsilon N)(p_\epsilon J) \leq \| \tilde D_{N,J} - D_{N,J}^* \|_F^2 = o(NJ),\end{equation*}
which leads to
\begin{equation}\label{eq:rate_gamma}
\gamma_{N,J} = o(1). 
\end{equation}
By \eqref{eq:sigma definition}, there exist $\tilde \tau \in \tilde S_{N,J}(\epsilon)$ and $\tau^* \in S^*_{N,J}(\epsilon)$ such that
$$  \|\Phi_{k_1,k_2,K_+}(\tilde \tau) - \Phi_{k_1,k_2, K_+}(\tau^*)\|_F \leq 2\gamma_{N,J}.$$ Then by \eqref{eq:sigma definition}, we have
\begin{equation}\label{eq:sigma ineq}
\sigma_{N,J} = d(\tilde S_{N,J}(\epsilon),S^*_{N,J}(\epsilon))\leq d(\tau^*, \tilde\tau).
\end{equation}

As shown in the beginning of proof for Theorem \ref{thm:loss} and according to Definition \ref{def:anchor}, the $\tau^*$ is the unique configuration corresponding to its $k_1 \times k_2$ partial distance matrix. Since $\tilde \tau \in \tilde S_{N,J}(\epsilon) \subset \HH_{k_1+k_2,K_+,M},$ by Lemma \ref{lem:anchor_continuous}, we know $d(\tau^*,\tilde \tau) \to 0$ as $N,J$ grow to infinity, and thus
\begin{equation}\label{eq:sigma bound}
d(\tau^*,\tilde \tau) < \epsilon
\end{equation}
for $N,J$ large enough.

Finally, since $$B_k^*(\epsilon), G_l^*(\epsilon) \subset B^{K}_{\bf 0}(M), \quad \tilde B_k(\epsilon),\tilde G_l(\epsilon) \subset B^{K_+}_{\bf 0}(M),$$ we have, for $N,J$ large enough,
$$\Vert F_{N, J}(\mathbf x)\Vert \leq 4M, \mbox{~for~}  \mathbf x \in B_{\0}^{K_+}(M).$$ To see this, if there exists $\x \in B_{\0}^{K_+}(M)$ such that $\|F_{N,J}(\x)\| > 4M,$ then by simple geometry, $$\min_{\x \in B_{\0}^{K_+}(M)}\| F_{N,J}(\x) - \x \| > M.$$
According to \eqref{eq:triangle_bound} and \eqref{eq:sigma bound}, we will get $$M < \|F_{N,J}(\tilde \x_k) - \x_k^+\| \leq 4\epsilon + \sigma_{N,J} \leq 5\epsilon,$$ which contradicts with the fact that $\epsilon< \frac{1}{10}M \leq \frac{1}{10}M.$

\end{proof}

\begin{proof}[Proof of Lemma \ref{lem:part2+}]
Let $\tilde\cc_1,...,\tilde\cc_{k_2}$ denote the centers of $\tilde G_{1}(\epsilon),....,\tilde G_{k_2}(\epsilon)$ and $\tilde \cc_l^+ = (\tilde\cc_l^\top, \0^\top)^\top \in \R^{K^+}.$
We first give the following lemma. 

\begin{lemma}\label{lem:add a point}
For any $$
\begin{aligned}
\tau_1 &= [\x,\x_1,...,\x_{k_2}] \in [B_{\0}^K(M),G_1^*(\epsilon),...,G_{k_2}^*(\epsilon)],\\
\tau_2 &= [\y,\y_1,...,\y_{k_2}] \in [B_{\0}^{K_+}(M),B_{\tilde \cc_1^+}(\epsilon),...,B_{\tilde \cc_{k_2}^+}(\epsilon)],\\
\end{aligned}
$$ we have $$
\|\x^+-\y\| \leq c\max\left\{ d(\tau_1,\tau_2), \sqrt{\sum_{l=1}^{k_2} \| \x_l^+-\y_l \|^2} \right\},
$$ for a constant $c,$ which only depends on the set $\{\cc_1^*,...,\cc^*_{k_2}\}$ and $M.$
\end{lemma}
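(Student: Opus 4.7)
The plan is as follows. Let $F^* \in \mathcal A_{K_+}$ attain the infimum in the definition of $d(\tau_1,\tau_2)$; existence follows by compactness since any near-optimizer must map the bounded points $\x^+, \x_l^+$ close to $\y, \y_l$. Write $F^*(\z) = O\z + \bb$ with $O$ orthogonal. From the definition,
$$d(\tau_1,\tau_2)^2 = \|F^*(\x^+) - \y\|^2 + \sum_{l=1}^{k_2}\|F^*(\x_l^+) - \y_l\|^2,$$
so each summand is bounded by $d(\tau_1,\tau_2)^2$. Triangle inequality then gives $\|\x^+ - \y\| \leq \|\x^+ - F^*(\x^+)\| + d(\tau_1,\tau_2)$ as well as $\|F^*(\x_l^+) - \x_l^+\| \leq d(\tau_1,\tau_2) + \|\y_l - \x_l^+\|$, whence $\sqrt{\sum_l\|F^*(\x_l^+) - \x_l^+\|^2} \leq \sqrt{2}\bigl(d(\tau_1,\tau_2) + \sqrt{\sum_l\|\y_l - \x_l^+\|^2}\bigr)$, which is at most a constant multiple of the right-hand side of the claimed bound.

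The heart of the proof is therefore to control $\|\x^+ - F^*(\x^+)\|$ by $\sqrt{\sum_l\|F^*(\x_l^+) - \x_l^+\|^2}$ up to a constant depending only on $\{\cc_1^*,\ldots,\cc_{k_2}^*\}$ and $M$. The key identity I would use is
$$(O - I)(\x^+ - \x_1^+) = \bigl[F^*(\x^+) - \x^+\bigr] - \bigl[F^*(\x_1^+) - \x_1^+\bigr],$$
obtained by cancelling $\bb$ in $F^*(\z) - F^*(\z') = O(\z - \z')$. Hence it suffices to bound $\|(O - I)(\x^+ - \x_1^+)\|$ and to add the single term $\|F^*(\x_1^+) - \x_1^+\|$.

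By condition A2 the vectors $\{\cc_l^* - \cc_1^*\}_{l=2}^{k_2}$ span $\R^K$. Since $\x_l$ lies within $\epsilon$ of $\cc_l^*$, for $\epsilon$ small enough the perturbed family $\{\x_l^+ - \x_1^+\}_{l=2}^{k_2}$ also spans the embedded subspace $\R^K \subset \R^{K_+}$, with pseudoinverse norm bounded uniformly by a constant $\tilde c$ determined by the anchor set alone. Hence every $\x^+ - \x_1^+$ with $\|\x\|\leq M$ admits a representation $\sum_{l=2}^{k_2}\alpha_l(\x_l^+ - \x_1^+)$ with $\sqrt{\sum_l \alpha_l^2}$ controlled by a constant depending only on $\{\cc_l^*\}$ and $M$. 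Applying $O - I$ to this representation, invoking the identity above, and using Cauchy--Schwarz yields $\|(O - I)(\x^+ - \x_1^+)\| \leq c'\sqrt{\sum_l\|F^*(\x_l^+) - \x_l^+\|^2}$ for such a constant $c'$. Combining this with the first paragraph delivers the lemma.

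The main obstacle is showing that the pseudoinverse bound is uniform over all admissible perturbations $\x_l \in G_l^*(\epsilon)$ and all $\x \in B_{\0}^K(M)$. This is handled by continuity of the pseudoinverse on the compact set of sufficiently small full-rank perturbations of the anchor matrix, provided $\epsilon$ (itself determined by the anchor configuration) is fixed small enough at the outset so that the minimum singular value of the $K\times(k_2-1)$ matrix with columns $\x_l^+ - \x_1^+$ remains bounded below uniformly; with this in hand the final constant $c$ depends only on $\{\cc_1^*,\ldots,\cc_{k_2}^*\}$ and $M$, as required.
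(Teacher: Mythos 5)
Your proposal is correct and follows essentially the same route as the paper's proof: both exploit the affine-spanning property of the perturbed anchor points $\x_1,\dots,\x_{k_2}$ to represent $\x$ with uniformly bounded coefficients and then transfer the per-anchor errors $\|F^*(\x_l^+)-\x_l^+\|$ to $\|F^*(\x^+)-\x^+\|$. The only cosmetic difference is that the paper writes $\x=\sum_l\alpha_l\x_l$ with $\sum_l\alpha_l=1$ so that the map $\z\mapsto A\z+\bb-\z$ commutes directly with the affine combination, whereas you subtract $\x_1^+$ and use the identity for $(O-I)$ on difference vectors; the two are equivalent.
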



Define
\begin{equation}\label{eq:H1 definition}
H_1(\epsilon) := \{i\in [N]: \|F_{N,J}(\tilde \ttt_i) - \ttt_i^+\| > 5\max(c,1)\sqrt{k_1+k_2}\epsilon\}
\end{equation}
and
\begin{equation}\label{eq:H2 definition}
H_2(\epsilon) := \{j\in [J]: \|F_{N,J}(\tilde \aaaa_j) - \aaaa_j^+\| > 5\max(c,1)\sqrt{k_1+k_2}\epsilon\},
\end{equation}
where $c$ is the constant in Lemma \ref{lem:add a point}.
We set the constant $\kappa$ in Lemma \ref{lem:part2+} to be $5\max(c,1)\sqrt{k_1+k_2}\epsilon.$
and then we have $|H_1(\epsilon )| = N\lambda_{1,N,J}, |H_2(\epsilon )| = J\lambda_{2,N,J}.$
Note that $I_1(\epsilon ) \cap H_1(\epsilon ) = \emptyset, I_2(\epsilon ) \cap H_2(\epsilon ) = \emptyset$ for $N,J$ large.

We choose $i_1,...,i_{k_1} \in I_1(\epsilon)$ and $j_1,...,j_{k_2} \in I_2(\epsilon)$ such that
\begin{equation*}
\begin{aligned}
\ttt^*_{i_k} \in B_k^*(\epsilon), &\quad \tilde \ttt_{i_k} \in \tilde B_k(\epsilon),\\
\aaaa_{j_l}^* \in G_l^*(\epsilon), &\quad \tilde \aaaa_{j_l} \in \tilde G_l(\epsilon)
\end{aligned}
\end{equation*}
for $1\leq k \leq k_1$ and $1\leq l \leq k_2.$
For any $i \in H_1(\epsilon),$ we consider the following configurations
\begin{equation*}
\begin{aligned}
&\tau^* = [\ttt_i^*,\ttt_{i_1}^*,...,\ttt_{i_{k_1}}^*,\aaaa_{j_1}^*,...,\aaaa_{j_{k_2}}^*] \in \HH_{k_1+k_2+1,K,M},\\
&\tilde \tau = [\tilde \ttt_i,\tilde \ttt_{i_1},...,\tilde \ttt_{i_{k_1}},\tilde \aaaa_{j_1},...,\tilde \aaaa_{j_{k_2}}] \in \HH_{k_1+k_2+1,K_+,M}
\end{aligned}
\end{equation*}
and
\begin{equation*}
\begin{aligned}
\tau^*_1 &= [\ttt_i^*,\aaaa_{j_1}^*,...,\aaaa_{j_{k_2}}^*] \in [B^{K}_{\bf{0}}(M),G_{1}^*(\epsilon),...,G_{k_2}^*(\epsilon)], \\
\tilde\tau_1 &= [\tilde \ttt_i,\tilde \aaaa_{j_1},...,\tilde \aaaa_{j_{k_2}}] \in [B^{K_+}_{\bf{0}}(M),\tilde G_1(\epsilon),...,\tilde G_{k_2}(\epsilon)].
\end{aligned}
\end{equation*}
It is obvious that $$d(\tilde \tau, \tau^*) \geq d\left(\tilde\tau_1,\tau^*_1 \right).$$


By Lemma \ref{lem:part1+}, we have $$\sqrt{\sum_{l=1}^{k_2} \|F_{N,J}(\tilde\aaaa_{j_l}) - \aaaa_{j_l}^*\|^2 } \leq 5\sqrt{k_2} \epsilon \leq 5\sqrt{k_1+k_2}\epsilon.$$ Combining it with \eqref{eq:H1 definition} and Lemma \eqref{lem:add a point},
we have $$d(\tilde\tau_1, \tau^*_1) > 5\sqrt{k_1+k_2}\epsilon,$$ which leads to
\begin{equation}\label{eq:temp1}
d(\tilde\tau, \tau^*) > 5\sqrt{k_1+k_2}\epsilon.
\end{equation}


According to Lemma \ref{lem:another}, $\{\ttt_i^*,\ttt_{i_1}^*,...,\ttt_{i_{k_1}}^*\},\{\aaaa_{j_1}^*,...,\aaaa_{j_{k_2}}^*\}$ are a collection of anchor points.
Let $\tilde D, D \in \PP_{k_1+1,k_2,K_+,M}$ be the partial distance matrix of $\tilde \tau$ and $\tau^*,$ respectively. Combining \eqref{eq:temp1} and Lemma \ref{lem:anchor_continuous}, there exists a constant $\delta_\epsilon>0$ such that
\begin{equation}\label{eq:temp2}
\|\tilde D - D\|_F \geq \delta_\epsilon.
\end{equation}
For each $i \in H_1(\epsilon),$ we choose $i_1,...,i_{k_1} \in I_1(\epsilon)$ to form a group $\{i,i_1,...,i_{k_1}\} \subset [N]$ such that $$(\ttt_i^*,\ttt_{i_1}^*,...,\ttt_{i_{k_1}}^*) \in B^{K}_{\bf{0}}(M) \times B_{1}^*(\epsilon)\times \cdots \times B_{k_1}^*(\epsilon)$$ and $$(\tilde\ttt_i,\tilde\ttt_{i_1},...,\tilde\ttt_{i_{k_1}}) \in B^{K_+}_{\bf{0}}(M)\times \tilde B_{1}(\epsilon)\times \cdots \times \tilde B_{k_1}(\epsilon).$$ We could find at least $\min\{\lambda_{1,N,J}, p_{\epsilon}\}\times N$ such groups which are mutually exclusive. We could also find at least $p_\epsilon J$ mutually exclusive groups of $\{j_1,...,j_{k_2}\} \subset [J]$ such that $$(\aaaa_{j_1}^*,...,\aaaa_{j_{k_2}}^*) \in G_{1}^*(\epsilon)\times \cdots \times G_{k_2}^*(\epsilon)$$ and $$(\tilde\aaaa_{j_1},...,\tilde\aaaa_{j_{k_2}}) \in \tilde G_{1}(\epsilon)\times \cdots \times \tilde G_{k_2}(\epsilon).$$
By \eqref{eq:bound} and \eqref{eq:temp2}, we have $$\min\{\lambda_{1,N,J},p_\epsilon\} N p_\epsilon J  \delta^2_\epsilon \leq o(NJ).$$
So $$\min\{\lambda_{1,N,J}, p_\epsilon\} = o(1),$$
which means $\lambda_{1,N,J} \to 0,$ as $N,J$ grow to infinity.
Similar result holds for $\lambda_{2,N,J}$ and we do not repeat it.
\end{proof}

\begin{proof}[Proof of Lemma \ref{lem:kmeans}]
Consider the K-means clustering of the person points in Algorithm~\ref{algo:cluster}. We define a loss function
$$\mathcal L(\vartheta_1, ..., \vartheta_N) = \frac{1}{N} \sum_{i=1}^N \Vert \hat \ttt_i -  \boldsymbol \mu_{\vartheta_i}\Vert^2,$$
as the loss function for K-means clustering,  where $\vartheta_i \in \{1, ..., k_1\}$ represents the cluster membership of person $i$ and
$$\boldsymbol \mu_{k} = \frac{\sum_{i=1}^N \hat \ttt_i 1_{\{ \vartheta_i = k\}}}{\sum_{i=1}^N 1_{\{ \vartheta_i = k\}}}$$
denotes the centroid of the $k$th cluster. Under the conditions of Theorem~\ref{thm:cluster}, the K-means clustering converges to the global optima, which implies that
\begin{equation}\label{eq:cluloss}
\mathcal L(\hat \vartheta_1, ..., \hat\vartheta_N) = \min_{\vartheta_i \in \{1, ..., k_1\}, i = 1, ..., N} \mathcal L(\vartheta_1, ..., \vartheta_N).
\end{equation}
So for any isometry $F \in \mathcal A_K$, $$\sum_{i=1}^N\|\hat\ttt_i-\boldsymbol \mu_{\hat\vartheta_i}\|^2 \leq \sum_{i=1}^N\|\hat\ttt_i-F(\bb^*_{\vartheta_i^*})\|^2.$$
By triangle inequality,
\begin{align*}
\left(\sum_{i=1}^N\|\boldsymbol \mu_{\hat\vartheta_i} - F(\bb^*_{\vartheta_i^*})\|^2\right)^{\frac{1}{2} } &\leq \left(\sum_{i=1}^N( \|\boldsymbol \mu_{\hat\vartheta_i} - \hat\ttt_i \|^2 \right)^{\frac{1}{2}}
+ \left(\sum_{i=1}^{N}\|\hat\ttt_i-F(\bb^*_{\vartheta_i^*})\|^2\right)^{\frac{1}{2}}, \\
& \leq 2 \left(\sum_{i=1}^{N}\|\hat\ttt_i-F(\bb^*_{\vartheta_i^*})\|^2\right)^{\frac{1}{2}}.
\end{align*}
Define $d = \min_{i\neq j}\|\bb_i^*-\bb_j^*\|$ and for $F \in \mathcal A_K$, define $$A_F := \{1\leq i \leq N: \|\boldsymbol \mu_{\hat \vartheta_i} - F(\bb^*_{\vartheta_i^*})\| < \frac{d}{2}\},$$
and denote $A_F^{\mathsf{c}} := \{1,...,N\}/A_F.$

Then
\begin{align}\label{eq:set}
\frac{\sum_{i \in A_{F^*_{N,J}}}1}{N}
&= 1 - \frac{\sum_{i \in A_{F^*_{N,J}}^\mathsf{c}} 1}{N}  \nonumber \\
&\geq 1 - \frac{4}{d^2} \frac{\sum_{i \in A_{F^*_{N,J}}^\mathsf{c}} \|\boldsymbol \mu_{\hat \vartheta_i} - F^*_{N,J}(\bb^*_{\vartheta_i^*})\|^2}{N} \nonumber \\
&\geq 1 - \frac{4}{d^2} \frac{\sum_{i=1}^N \|\boldsymbol \mu_{\hat \vartheta_i} - F^*_{N,J}(\bb^*_{\vartheta_i^*})\|^2}{N}\\
&\geq 1 - \frac{16}{d^2} \frac{\sum_{i=1}^N \|\hat\ttt_i - F^*_{N,J}(\bb^*_{\vartheta_i^*})\|^2}{N} \nonumber \\
&\stackrel{pr}\to 1 \nonumber
\end{align}

\begin{lemma}\label{lem:perm}
Under the same conditions as Lemma \ref{lem:kmeans},
if there exists $\zeta_1 \in \mathcal B_{k_1}$ satisfying $$\| \boldsymbol \mu_{\zeta_1(l)} - F_{N,J}^*(\bb_l^*) \| < \frac{d}{2},$$
where $\boldsymbol \mu_l$ is the centroid of the $l$th cluster, $F_{N,J}^*$ is defined in \eqref{eq:person} and $d$ is defined above,
then there exists $\zeta_2 \in \mathcal B_{k_1}$, such that for all $i \in A_{F^*_{N,J}}, \hat \vartheta_i = \zeta_2(\vartheta^*_i)$.
\end{lemma}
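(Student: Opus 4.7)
The plan is to show that the permutation $\zeta_2=\zeta_1$ already works, so the proof reduces to verifying this directly by a triangle-inequality argument that exploits the separation of the true anchor centers.

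First, I would unpack the two distance bounds available. For any $i\in A_{F^*_{N,J}}$ with true label $\vartheta_i^*=l$, the definition of $A_{F^*_{N,J}}$ gives $\|\boldsymbol\mu_{\hat\vartheta_i}-F^*_{N,J}(\bb_l^*)\|<d/2$. The hypothesis supplies a companion bound: for every $l$, the centroid $\boldsymbol\mu_{\zeta_1(l)}$ is within $d/2$ of $F^*_{N,J}(\bb_l^*)$. These are the only two inequalities I will need.

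Next, I would argue by contradiction. Assume $\hat\vartheta_i\neq\zeta_1(l)$. Since $\zeta_1$ is a permutation of $[k_1]$, there exists a unique $l'\neq l$ with $\zeta_1(l')=\hat\vartheta_i$, so $\boldsymbol\mu_{\hat\vartheta_i}=\boldsymbol\mu_{\zeta_1(l')}$. Combining the $A_{F^*_{N,J}}$ bound with the hypothesis bound for $l'$ via the triangle inequality yields
\begin{equation*}
\|F^*_{N,J}(\bb_l^*)-F^*_{N,J}(\bb_{l'}^*)\|\leq\|\boldsymbol\mu_{\zeta_1(l')}-F^*_{N,J}(\bb_l^*)\|+\|\boldsymbol\mu_{\zeta_1(l')}-F^*_{N,J}(\bb_{l'}^*)\|<d.
\end{equation*}
On the other hand, $F^*_{N,J}\in\mathcal A_K$ is an isometry, so $\|F^*_{N,J}(\bb_l^*)-F^*_{N,J}(\bb_{l'}^*)\|=\|\bb_l^*-\bb_{l'}^*\|\geq d$ by definition of $d$. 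This is the sought contradiction, forcing $\hat\vartheta_i=\zeta_1(l)=\zeta_1(\vartheta_i^*)$.

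There is no real obstacle here beyond a careful bookkeeping of the two $d/2$ bounds and invoking the isometry property of $F^*_{N,J}$; the only subtle point is checking that $\zeta_1$ is indeed a permutation (so that $\hat\vartheta_i\neq\zeta_1(l)$ implies the existence of a distinct $l'$ with $\zeta_1(l')=\hat\vartheta_i$), which follows from $\zeta_1\in\mathcal B_{k_1}$. Setting $\zeta_2:=\zeta_1$ then completes the argument for every $i\in A_{F^*_{N,J}}$ simultaneously.
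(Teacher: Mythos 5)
Your proof is correct and rests on essentially the same argument as the paper's: both use the two $d/2$ bounds together with the isometry of $F^*_{N,J}$ and the separation $\|\bb_l^*-\bb_{l'}^*\|\geq d$ to conclude that each centroid can lie within $d/2$ of at most one point $F^*_{N,J}(\bb_l^*)$. The only difference is organizational — you identify $\zeta_2=\zeta_1$ explicitly and argue by contradiction, whereas the paper constructs $\zeta_2$ implicitly by showing that on $A_{F^*_{N,J}}$ equal true labels force equal estimated labels and distinct true labels force distinct estimated labels.
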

Let $\Omega_{N,J} := \{\omega:\exists \zeta \in \mathcal B_{k_1}, \text{ s.t. } \| \boldsymbol \mu_{\zeta(l)}(\omega)  - F_{N,J}^*(\bb_l^*) \| < \frac{d}{2}, ~ i=1,...,k_1\}$. Notice that $\Omega_{N,J}$ is a subset of the whole probability space.
By Lemma \ref{lem:perm}, for any $\omega \in \Omega_{N,J}$, there exists $\zeta_{N,J} \in \mathcal B_{k_1}$, which corresponds to $\zeta_2$ in Lemma \ref{lem:perm}, such that
\begin{align*}
\max_{\zeta \in \mathcal B_{k_1}}\frac{\sum_{i=1}^{N} 1_{\{\vartheta_i^* = \zeta(\hat \vartheta_i(\omega))\}}}{N} \geq
\frac{\sum_{i=1}^N 1_{\{\hat \vartheta_i = \zeta_{N,J}(\vartheta_i^*)\}}}{N} \geq \frac{\sum_{i \in A_{F^*_{N,J}}}1}{N}
\end{align*}
\begin{lemma}\label{lem:temp}
Under the same conditions as Lemma \ref{lem:kmeans}, we have
$$\lim_{N,J \to \infty}\Pr\left( \Omega_{N,J} \right) = 1,$$
where $\Omega_{N,J}$ is defined above.
\end{lemma}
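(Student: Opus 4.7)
The plan is to leverage the global K-means displacement bound established in the proof of Lemma~\ref{lem:kmeans} immediately preceding Lemma~\ref{lem:temp}, namely
\begin{equation*}
\frac{1}{N}\sum_{i=1}^N\bigl\|\boldsymbol \mu_{\hat\vartheta_i} - F^*_{N,J}(\bb^*_{\vartheta_i^*})\bigr\|^2 \;\longrightarrow\; 0 \quad \text{in probability,}
\end{equation*}
which follows from K-means optimality in \eqref{eq:cluloss} together with the triangle inequality and the hypothesis on $F_{N,J}^*$. This is the only probabilistic input the argument needs; everything afterwards is deterministic on the event that this average is small.

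Next, for each $l \in \{1,\ldots,k_1\}$ set $S_l = \{i : \vartheta_i^* = l\}$. In the degenerate setting of Theorem~\ref{thm:cluster} the person points satisfy $\ttt_i^* \in \{\bb_1^*,\ldots,\bb_{k_1}^*\}$, so choosing $\epsilon < d/2$ in condition A3 part (2) collapses the indicator $1_{\{\|\ttt_l^*-\bb_i^*\|<\epsilon\}}$ to $1_{\{\vartheta_l^*=i\}}$ and forces $\liminf |S_l|/N \geq p_l > 0$ for every $l$. Restricting the global bound to $S_l$ and dividing by $|S_l|$ then yields
\begin{equation*}
\frac{1}{|S_l|}\sum_{i \in S_l}\bigl\|\boldsymbol \mu_{\hat\vartheta_i} - F^*_{N,J}(\bb^*_l)\bigr\|^2 \;\leq\; \frac{N}{|S_l|}\cdot\frac{1}{N}\sum_{i=1}^N\bigl\|\boldsymbol \mu_{\hat\vartheta_i} - F^*_{N,J}(\bb^*_{\vartheta_i^*})\bigr\|^2 \;=\; o_p(1).
\end{equation*}
Because $\boldsymbol\mu_{\hat\vartheta_i}$ only takes values in the finite set $\{\boldsymbol\mu_1,\ldots,\boldsymbol\mu_{k_1}\}$, if all of these $k_1$ centroids were at distance at least $d/2$ from $F^*_{N,J}(\bb^*_l)$ the displayed average would be no smaller than $(d/2)^2$; hence with probability tending to one there exists, for each $l$, an index $\zeta(l) \in \{1,\ldots,k_1\}$ with $\|\boldsymbol\mu_{\zeta(l)} - F^*_{N,J}(\bb^*_l)\| < d/2$.

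It remains to check that the map $\zeta$ is a permutation. If $\zeta(l) = \zeta(l') = k$ for some $l \neq l'$, the triangle inequality and isometry invariance of $F^*_{N,J}$ give
$$\|\bb^*_l-\bb^*_{l'}\| \;=\; \|F^*_{N,J}(\bb^*_l)-F^*_{N,J}(\bb^*_{l'})\| \;\leq\; \|\boldsymbol\mu_k - F^*_{N,J}(\bb^*_l)\| + \|\boldsymbol\mu_k - F^*_{N,J}(\bb^*_{l'})\| \;<\; d,$$
contradicting the definition $d = \min_{i\neq j}\|\bb_i^*-\bb_j^*\|$. Injectivity on the finite set $\{1,\ldots,k_1\}$ then forces $\zeta \in \mathcal B_{k_1}$, which is exactly the defining condition of $\Omega_{N,J}$, so $\Pr(\Omega_{N,J}) \to 1$. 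The main obstacle is the middle step: converting the single global $o_p(1)$ bound into a \emph{simultaneous} pointwise $o_p(1)$ bound for every true class $l$. The key is that condition A3 uniformly lower-bounds each of the class proportions $|S_l|/N$ by a positive constant, preventing any class from being so small that it absorbs the global error; once that is in place, the discreteness of the centroid set automatically upgrades the average bound to a minimum-distance bound, and the separation of the anchor points $\bb_l^*$ gives the permutation property for free.
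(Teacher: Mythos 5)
Your proof is correct and follows essentially the same route as the paper's: both combine the global $o_p(1)$ centroid-displacement bound obtained from K-means optimality, the positive lower bound on each class proportion supplied by condition A3(2), and the $d/2$-separation of the anchor points. The only cosmetic difference is that you work with per-class averages and the discreteness of the centroid set, whereas the paper argues by contraposition through the events $\Gamma_{N,J}^{(\epsilon')}$; you also spell out the injectivity step showing the nearest-centroid map is a permutation, which the paper leaves implicit.
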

By Lemma \ref{lem:temp} and \eqref{eq:set}, we complete the proof.
\end{proof}

\begin{proof}[Proof of Lemma \ref{lem:anchor_exist_new}]
Without loss of generality, we suppose that the ball $G \subset \R^K$ has center at orgin.
By Theorem 3.3 of \cite{alfakih2003uniqueness}, we know there exist $k_1, k_2 \geq K+1$ and two sets of points, $\{\bb^*_1,...,\bb^*_{k_1}\}, \{\cc^*_1,...,\cc^*_{k_2}\} \subset \text{int}(G),$
satisfying condition A2 whose partial distance matrix $D^*$ has unique configuration. Furthermore, points near $\bb^*_i,\cc^*_j$ also have this property. Specifically, there exists $\epsilon >0$ such that for $$\bb_i \in B^K_{\bb_i^*}(\epsilon)  \subset G ,\quad \cc_j \in B^K_{\cc_j^*}(\epsilon) \subset  G,$$
the $\{\bb_1,...,\bb_{k_1}\}, \{\cc_1,...,\cc_{k_2}\}$ satisfy condition A2 and their partial distance matrix $D$ has unique configuration.
Then, by Lemma \ref{lem:anchor_continuous},
condition A1 holds and $\{\bb^*_1,...,\bb^*_{k_1}\}, \{\cc^*_1,...,\cc^*_{k_2}\}$ are anchor points in $\R^K.$
\end{proof}

\begin{proof}[Proof of Lemma \ref{lem:concentration}]
The proof of Lemma \ref{lem:concentration} is similar to Lemma A.1 of \cite{davenport20141}.
\end{proof}

\begin{proof}[Proof of Lemma~\ref{lem:bddist}]
We have
\begin{align*}
0 \leq \bar l_{\Omega,Y}(\hat D_{N,J}) - \bar l_{\Omega,Y}(D^*_{N,J}) =& \bar l_{\Omega,Y}(\hat D_{N,J}) - \mathbb E \bar l_{\Omega,Y}(\hat D_{N,J}) +  \mathbb E \bar l_{\Omega,Y}(\hat D_{N,J}) - \mathbb E \bar l_{\Omega,Y}(D^*_{N,J})\\
&+ \mathbb E \bar l_{\Omega,Y}(D^*_{N,J}) - \bar l_{\Omega,Y}(D^*_{N,J})\\
\leq& \left(\mathbb E \bar l_{\Omega,Y}(\hat D_{N,J}) - \mathbb E \bar l_{\Omega,Y}(D^*_{N,J})\right) + 2\sup_{D \in H}\vert \bar l_{\Omega,Y}(D) - \bar l_{\Omega,Y}(D)  \vert.
\end{align*}
So $$ \mathbb E \left(\bar l_{\Omega,Y}(D^*_{N,J}) - \bar l_{\Omega,Y}(\hat D_{N,J}) \right)  \leq 2\sup_{D \in H}\vert \bar l_{\Omega,Y}(D) - \bar l_{\Omega,Y}(D)  \vert.$$
Notice that
\begin{align*}
\mathbb E \left(\bar l_{\Omega,Y}(D^*_{N,J}) - \bar l_{\Omega,Y}(\hat D_{N,J}) \right) &= \mathbb E \left(l_{\Omega,Y}(D^*_{N,J}) - l_{\Omega,Y}(\hat D_{N,J}) \right)\\
&= \frac{n}{NJ}\sum_{i,j}f(d^*_{ij})\log(\frac{f(d^*_{ij})}{f(\hat d_{ij})}) + (1-f(d^*_{ij})) \log(\frac{1-f(d^*_{ij})}{1-f(\hat d_{ij})})
\end{align*}
For two distributions $\mathcal P$ and $\mathcal Q$, let $D_{KL}(\mathcal P \Vert \mathcal Q)$ denote the  Kullback-Leibler  divergence
$$ D_{KL}(\mathcal P\Vert \mathcal Q) := \int p(x)\log\left(\frac{p(x)}{q(x)}\right)dx,$$ where $p(x)$ and $q(x)$ are the density functions for $\mathcal P$ and $\mathcal Q$, respectively. For $0 < p,q < 1$, we use
$$D_{KL}(p\Vert q) := p\log(\frac{p}{q}) + (1-p)\log(\frac{1-p}{1-q})$$ to denote the Kullback-Leibler divergence between two Bernoulli distributions with parameter $p$ and $q$, respectively. For $P,Q \in (0,1)^{N\times J}$, we define $$D_{KL}(P\Vert Q) := \frac{1}{NJ}\sum_{i,j}D_{KL}(P_{ij}\Vert Q_{ij}).$$
For a partial distance matrix $D_{N,J}$, denote $f(D_{N,J})$ as the matrix $(f(d_{ij}))_{N\times J}$. So from above, we know that $$nD_{KL}(f(D^*_{N,J})\Vert f(\hat D_{N,J}))\leq 2\sup_{D \in H}\vert \bar l_{\Omega,Y}(D) - \bar l_{\Omega,Y}(D)  \vert.$$
Still for $0<p,q<1$, let $$d^2_H(p,q) := (\sqrt{p} - \sqrt{q})^2 + (\sqrt{1-p} - \sqrt{1-q})^2$$ denote the Hellinger distance between two Bernoulli distributions with parameters $p$ and $q$, respectively.
For $P,Q \in (0,1)^{N\times J}$, we define $$d^2_H(P\Vert Q) := \frac{1}{NJ}\sum_{i,j}d^2_H(P_{ij},Q_{ij}).$$
It is easy to check that $d^2_H(p,q) \leq D_{KL}(p\Vert q)$. So $$d^2_H(f(D^*_{N,J}),f(\hat D_{N,J})) \leq \frac{2}{n}\sup_{D \in H}\vert \bar l_{\Omega,Y}(D) - \bar l_{\Omega,Y}(D)  \vert$$
By Lemma A.2 of \cite{davenport20141}, we have
\begin{align*}
\frac{1}{NJ}\|\hat D_{N,J} - D^*_{N,J}\|_F^2  &\leq  8\beta_{4M^2}d^2_H(f(D^*_{N,J}),f(\hat D_{N,J}))\\
&\leq \frac{16}{n}\beta_{4M^2}\sup_{D \in H}\vert \bar l_{\Omega,Y}(D) - \bar l_{\Omega,Y}(D)  \vert.
\end{align*}

\end{proof}

\begin{proof}[Proof of Lemma \ref{lem:add a point}]
Denote $$
\eta := \max\left\{ d(\tau_1,\tau_2), \sqrt{\sum_{l=1}^{k_2} \| \x_l^+-\y_l \|^2} \right\}$$ and then
$d(\tau_1,\tau_2) \leq \eta$ and
\begin{equation}\label{eq:cond1}
\|\x^+_l-\y_l\| \leq \eta, \quad l = 1,...,k_2.
\end{equation}
Therefore there exist $A\in \OO_{K_+}$ and $\bb \in \R^{K_+}$ such that
\begin{equation*}
\sqrt{\|A\x^+ + \bb-\y\|^2+\sum_{l=1}^{k_2}\|A\x^+_l+\bb-\y_l\|^2} \leq \eta,
\end{equation*}
which leads that \begin{equation}\label{eq:cond21}
\| A\x^++\bb-\y\| \leq \eta
\end{equation} and \begin{equation}\label{eq:cond22}
\|A\x^+_l+\bb-\y_l\| \leq \eta,\quad l=1,...,k_2.
\end{equation} Combining \eqref{eq:cond1} and \eqref{eq:cond22}, we get
\begin{equation*}
\|A\x^+_l+\bb-\x^+_l\| \leq 2\eta, \quad l = 1,...,k_2.
\end{equation*}
According to condition A2, $\x_1,...,\x_{k_2}$ can affine span $\R^K$.
Then there exists $\alpha_1, ..., \alpha_{k_1}$ satisfying $\sum_{l=1}^{k_2}\alpha_l = 1,$ such that
$\x = \sum_{l=1}^{k_2} \alpha_l\x_l.$ So we have $$
\begin{aligned}
\|A\x^++\bb-\x^+\|  = \left\|\sum_{l=1}^{k_2}\alpha_l(A\x^+_l+\bb-\x^+_l)\right\| \leq 2(\sum\limits_{j=1}^{k_2}|\alpha_j|) \eta.
\end{aligned}
$$ Combining it with \eqref{eq:cond21}, we have
$\|\x^+-\y\| \leq (2\sum\limits_{j=1}^{k_1}|\alpha_j|+1)\eta.$ We complete the proof by setting the constant $c$ in Lemma \ref{lem:add a point} to be $$\max_{\substack{\x\in B_{\0}^K(M) \\ \x_l \in G_{l}^*(\epsilon)}} \inf_{ \substack{ \sum_{l}\alpha_l = 1\\ \x = \sum_{l}\alpha_l\x_l } }2\sum_{l=1}^{k_2}|\alpha_l|.$$
\end{proof}

\begin{proof}[Proof of Lemma \ref{lem:perm}]
For $i,j \in A_{F^*_{N,J}}$, if $\vartheta_i^* = \vartheta_j^*$ and suppose they are both equal to $k$, then $\|\boldsymbol \mu_{\hat \vartheta_i} - F^*_{N,J}(\bb^*_{k})\| < d/2$ and $\|\boldsymbol \mu_{\hat \vartheta_j} - F^*_{N,J}(\bb^*_{k})\| < d/2$. Given the condition in Lemma \ref{lem:perm}, it is easy to check that there is only one $\boldsymbol \mu_l$ among $\{\boldsymbol \mu_1,...,\boldsymbol \mu_{k_1}\}$ satisfying
$$\|\boldsymbol \mu_{l} - F^*_{N,J}(\bb^*_{k})\| < d/2,$$
then $\hat \vartheta_i = \hat \vartheta_j$.
If $\vartheta_i^* \neq \vartheta_j^*$, then $\|\boldsymbol \mu_{\hat \vartheta_i} - F^*_{N,J}(\bb^*_{\vartheta_i^*})\| < d/2$ and $\|\boldsymbol \mu_{\hat \vartheta_j} - F^*_{N,J}(\bb^*_{\hat \vartheta_j})\| < d/2$. So
\begin{align*}
\| \boldsymbol \mu_{\hat \vartheta_i} -  \boldsymbol \mu_{\hat \vartheta_j} \| &\geq \| F^*_{N,J}(\bb^*_{\vartheta_i^*}) - F^*_{N,J}(\bb^*_{\vartheta_j^*}) \| - \|\boldsymbol \mu_{\hat \vartheta_i} - F^*_{N,J}(\bb^*_{\vartheta_i^*})\| -
\|\boldsymbol \mu_{\hat \vartheta_j} - F^*_{N,J}(\bb^*_{\hat \vartheta_j})\| \\
& > d - \frac{d}{2} - \frac{d}{2} > 0,
\end{align*}
which means $\hat \vartheta_i \neq \hat \vartheta_j$.
So there exists $\zeta_2$ such that for $i \in A_{F^*_{N,J}}, \hat \vartheta_i = \zeta_2(\vartheta^*_i)$.
\end{proof}

\begin{proof}[Proof of Lemma \ref{lem:temp}]
Let $$\Gamma_{N,J}^{(\epsilon')} := \{  \omega: \frac{\sum_{i \in A_{F^*_{N,J}}}1}{N} \geq 1-\epsilon'     \},$$
which is a subset of the whole probability space.
By \eqref{eq:set}, for any $\epsilon' > 0$, we have $$\lim_{N,J \to \infty}\Pr(\Gamma_{N,J}^{(\epsilon')}) = 1.$$
For any $\omega \notin \Omega_{N,J}$, there exists $l$ such that $\|\boldsymbol \mu_m(\omega) - F_{N,J}^*(\bb_l^*)\| \geq d/2,$ for $m = 1,...,k_1.$ So for $i$ satisfying $\vartheta_i^* = l$, we have $\|\boldsymbol \mu_{\hat \vartheta_i}(\omega) - F_{N,J}^*(\bb_{\vartheta_i^*}^*)\| \geq d/2$. According to (2) of condition A3, for sufficiently small $\epsilon'$, if $N,J$ are sufficiently large, then $\omega \notin \Gamma_{N,J}^{(\epsilon')},$ which means $\Gamma_{N,J}^{(\epsilon')} \subset \Omega_{N,J}$. By \eqref{eq:set},
for sufficiently small $\epsilon'$,
$$
\lim_{N,J \to \infty}\Pr(\Omega_{N,J}) \geq \lim_{N,J \to \infty}\Pr(\Gamma_{N,J}^{(\epsilon')}) = 1.
$$
We complete the proof.
\end{proof}

\section{Algorithm-based MDU: Real Data Examples}

To compare the proposed method with classical algorithm-based MDU methods, we apply ordinal MDU \citep{busing2005avoiding} to both real datasets analyzed in the paper.
The application is based on the implementation
in R package \emph{smacof} \citep{smacof}. For both  examples,
the latent dimension is set to two, and all the tuning parameters are set to be the default ones.
The results below show that the ordinal MDU approach provides similar visualization results as the proposed one, especially for the roll call voting data due to its unidimensional nature.
The results for the movie rating dataset are also similar for the two methods, but the interpretable patterns from the ordinal MDU approach is not as clear as the proposed one. 



Figures~\ref{fig:classical_movie} through \ref{fig:classical_user_xy} show the same plots as in Figures \ref{fig:whole} through \ref{fig:user_xy} in Section~\ref{subsec:real1}, respectively, for the movie rating dataset. Figure~\ref{fig:classical_movie} provides
the simultaneous visualization of the movie and user points. Similar to the plot in Figure~\ref{fig:whole} given by our method, the movies and the users tend to form two giant clusters that only slightly overlap.

\begin{figure}
  \centering
        \includegraphics[scale = 0.6]{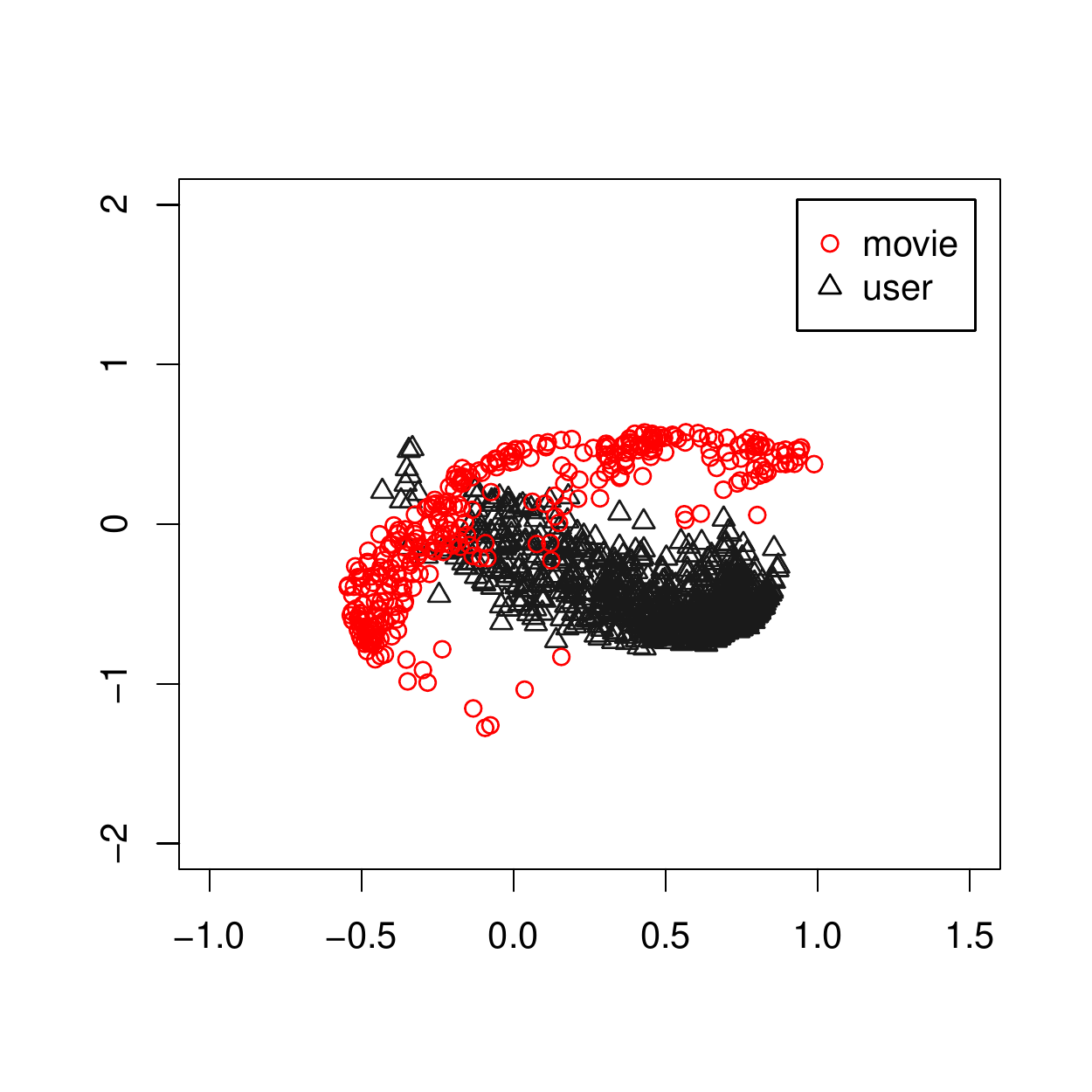}
  \caption{Analysis of movie rating data: Simultaneous visualization of the estimated movie and user points.}
 \label{fig:classical_movie}
\end{figure}

Figure~\ref{fig:classical_movie_xy}
is similar to Figure~\ref{fig:movie_xy}, where the two panels show the same scatter plot for the movie points. In the left panel, the movies are stratified by the the numbers of ratings that they received, where different stratums are marked by different colors. In the right panel, the movies are stratified by their release time. Recall that the patterns of popularity and release time are captured by the proposed method as shown in Figure~\ref{fig:movie_xy}. Figure~\ref{fig:classical_movie_xy} seems also to capture these patterns, but not as clear as those in Figure~\ref{fig:movie_xy}.
According to panel (a) of Figure~\ref{fig:classical_movie_xy}, the more popular movies tend to be located near the origin, while the less popular movies tend to be located away from the origin. According to panel (b) of Figure~\ref{fig:classical_movie_xy}, the clustering patten of the movies can be largely explained by the three categories of release dates. From the left to the right of the space, the points correspond to movies from the relatively older ones to the relatively more recent ones.

\begin{figure}
  \centering
    \begin{subfigure}[b]{0.4\textwidth}
        \centering
        \includegraphics[scale = 0.5]{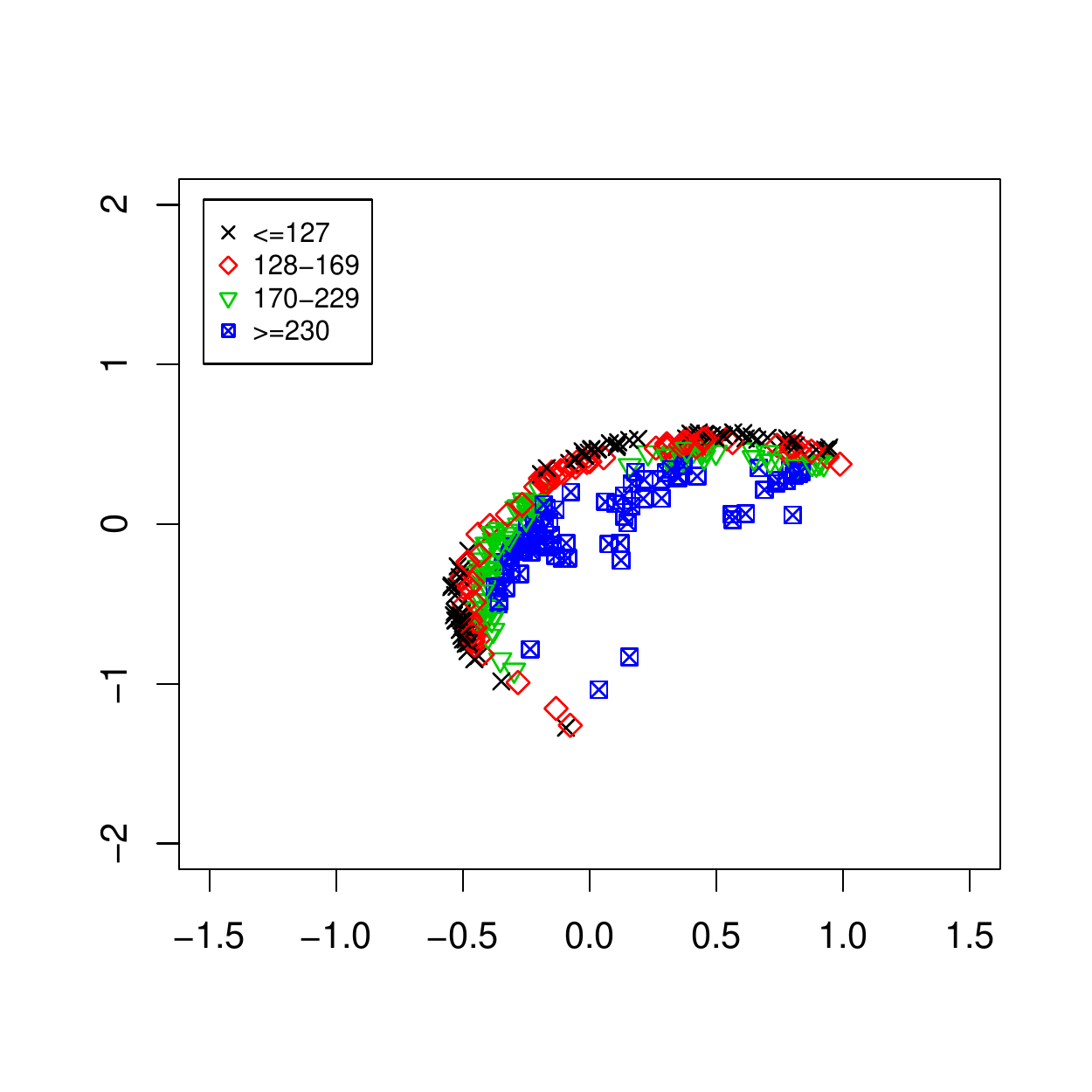}
        \caption{}
    \end{subfigure}
    \begin{subfigure}[b]{0.4\textwidth}
        \centering
        \includegraphics[scale = 0.5]{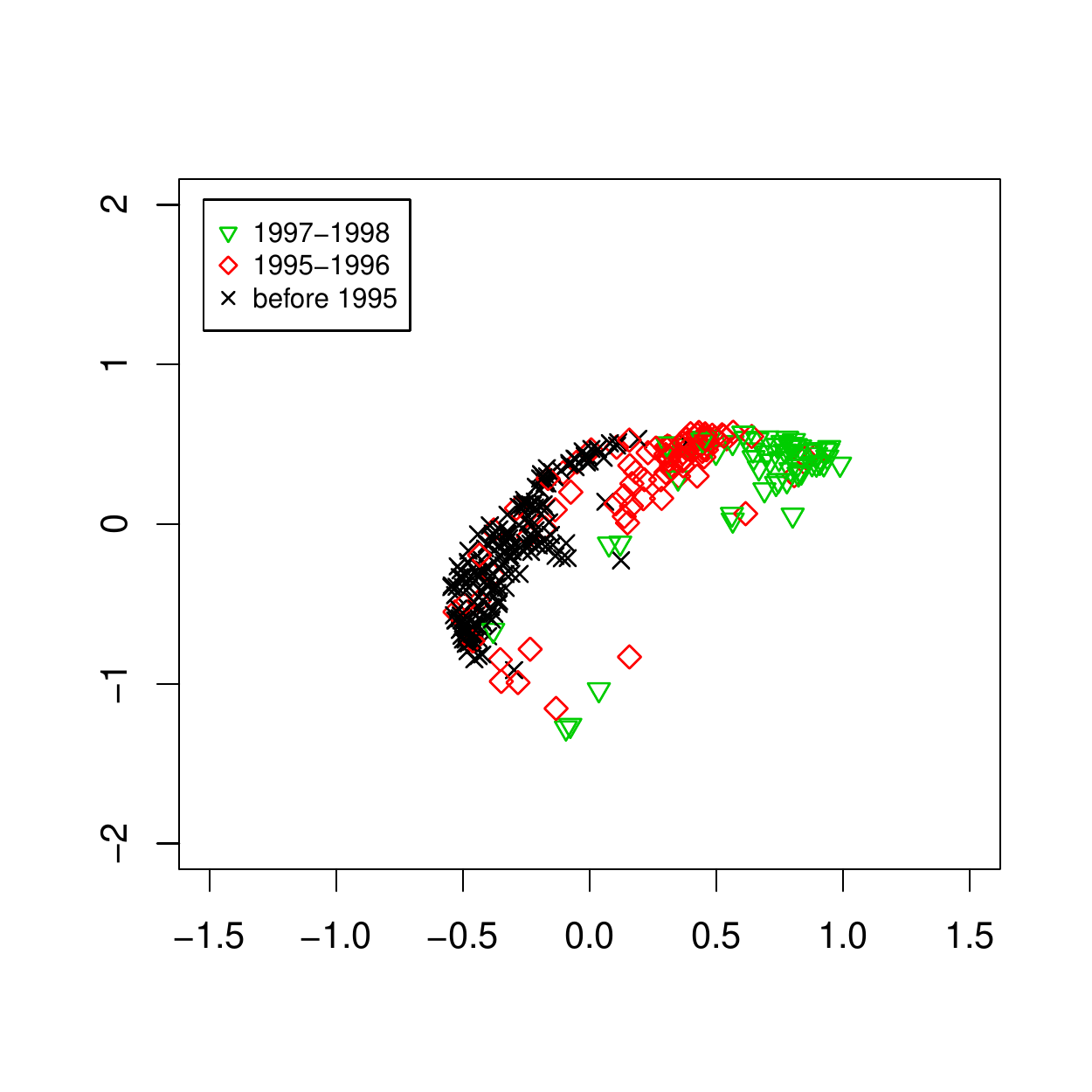}
        \caption{}
    \end{subfigure}
  \caption{Analysis of movie rating data. Panel (a): Visualization of movie points, with movies stratified into four equal-size categories based on the numbers of rating. Movies with numbers of rating less than 127, 128-169, 170-229 and more than 230 are indicated by black, red, green and blue points, respectively. Panel (b): Visualization of movie points, with movies stratified into three categories based on their release time. Movies released in 1997-1998, 1995-1996, and before 1995 are indicated by green, red and black points, respectively.}
 \label{fig:classical_movie_xy}
\end{figure}

Figure~\ref{fig:classical_user_xy} shows the same plots as in Figure~\ref{fig:user_xy}. Similar pattern is shown that the shorter the average distance from a user point to the movies points, the more active the user is. In Figure~\ref{fig:classical_user_xy}, users are classified into four equal-size groups depending on the numbers of movies they rated. These groups of users, from the most active one to the least active one, lie from the top left to the bottom right.


\begin{figure}
 \centering
        \includegraphics[scale = 0.6]{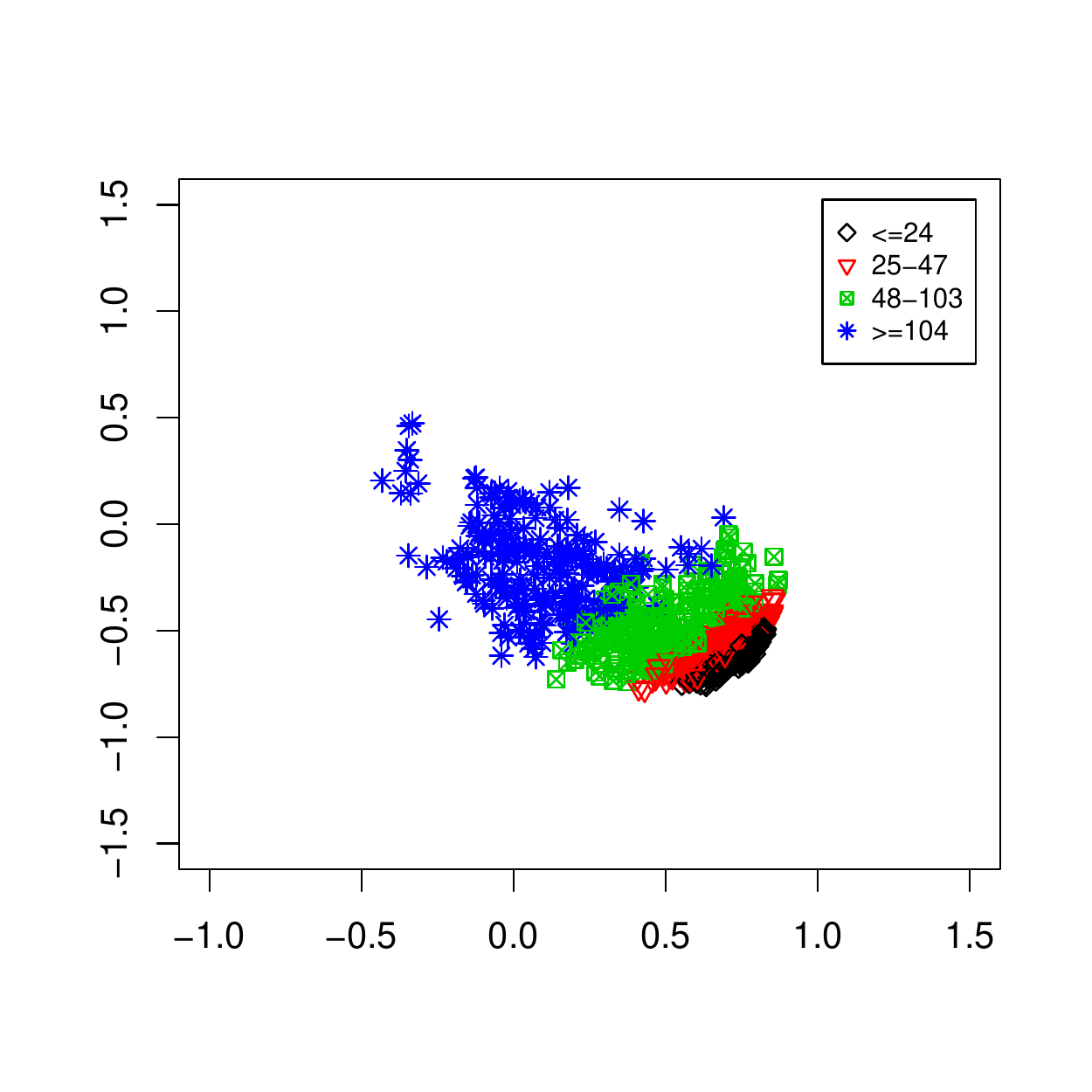}
        \caption{Analysis of movie rating data: Visualization of user points, with users classified into four equal-size categories based on the numbers of rating. Users who rated less than 24, , 25-47, 48-103 and more than 104 movies are indicated by
        black, red, green and blue points, respectively.}
 \label{fig:classical_user_xy}
\end{figure}

Figures \ref{fig:classical_voting} through \ref{fig:classical_voting_bill}  show the same plots as in Figures  \ref{fig:vote108_whole} through \ref{fig:vote108_bill}  in Section \ref{subsec:real2}, respectively, for the roll call voting dataset. Figure  \ref{fig:classical_voting} provides the simultaneous visualization of senators and roll calls. Similar to the plot in Figure~\ref{fig:vote108_whole}, most of the points tend to lie on a straight line.

\begin{figure}
  \centering
        \includegraphics[scale = 0.6]{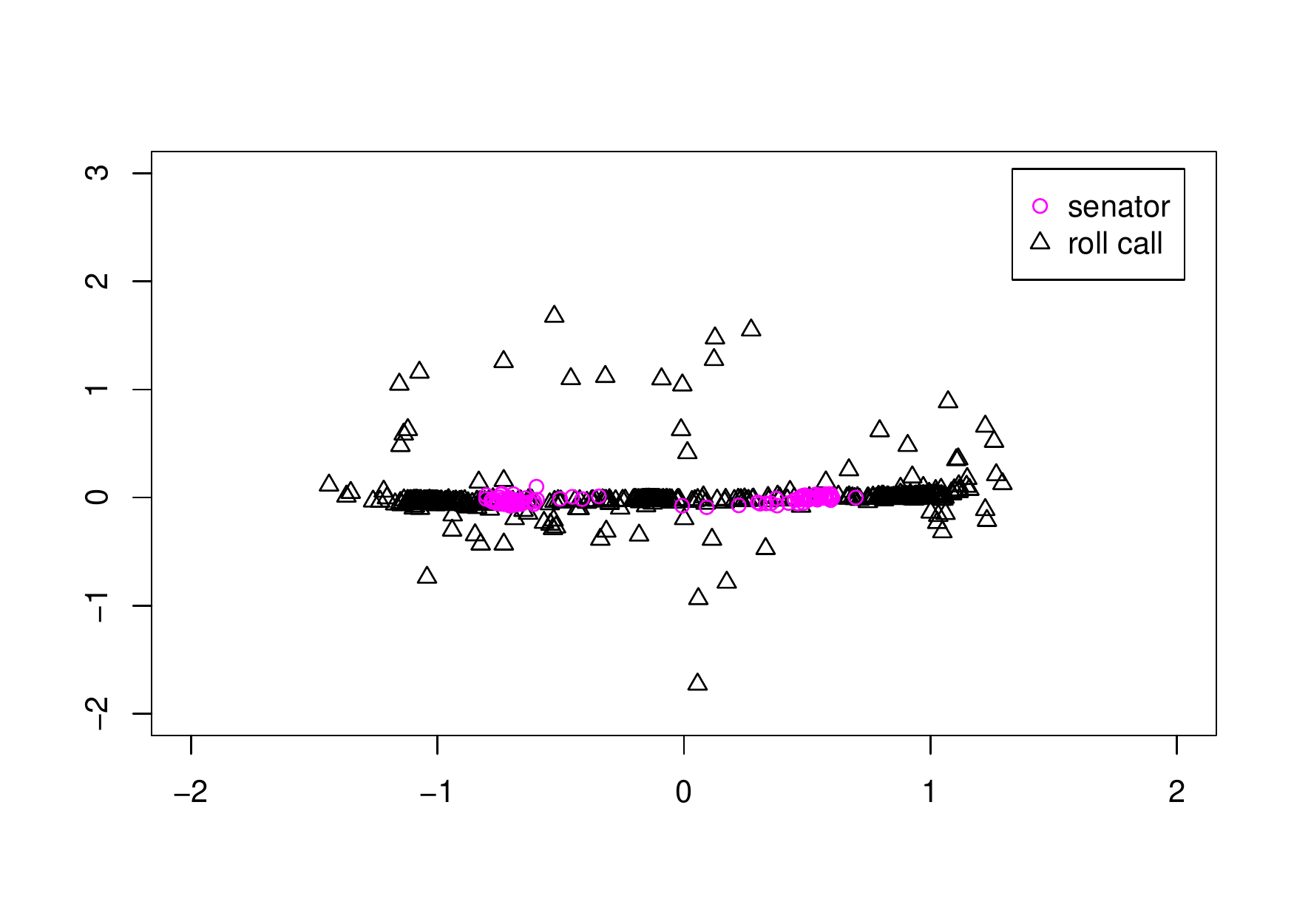}
  \caption{Analysis of senator roll call data: Simultaneous visualization of the estimated senator and roll call ideal points.}
 \label{fig:classical_voting}
\end{figure}

Figure~\ref{fig:classical_voting_senator} provides a scatter plot of the senator points. Similar to Figure~\ref{fig:vote108_senate}, most of the senator points tend to locate around a straight line, with the Democrats on one side and the Republicans on the other side. Also similar to  Figure~\ref{fig:vote108_senate},  the independent senator,  Jim Jeffords from the state of
Vermont, is mixed together with the Democrats, while
the Democrat senator, Zell Miller from the state of Georgia, is mixed together with the Republicans.

\begin{figure}
\centering
\includegraphics[scale = 0.6]{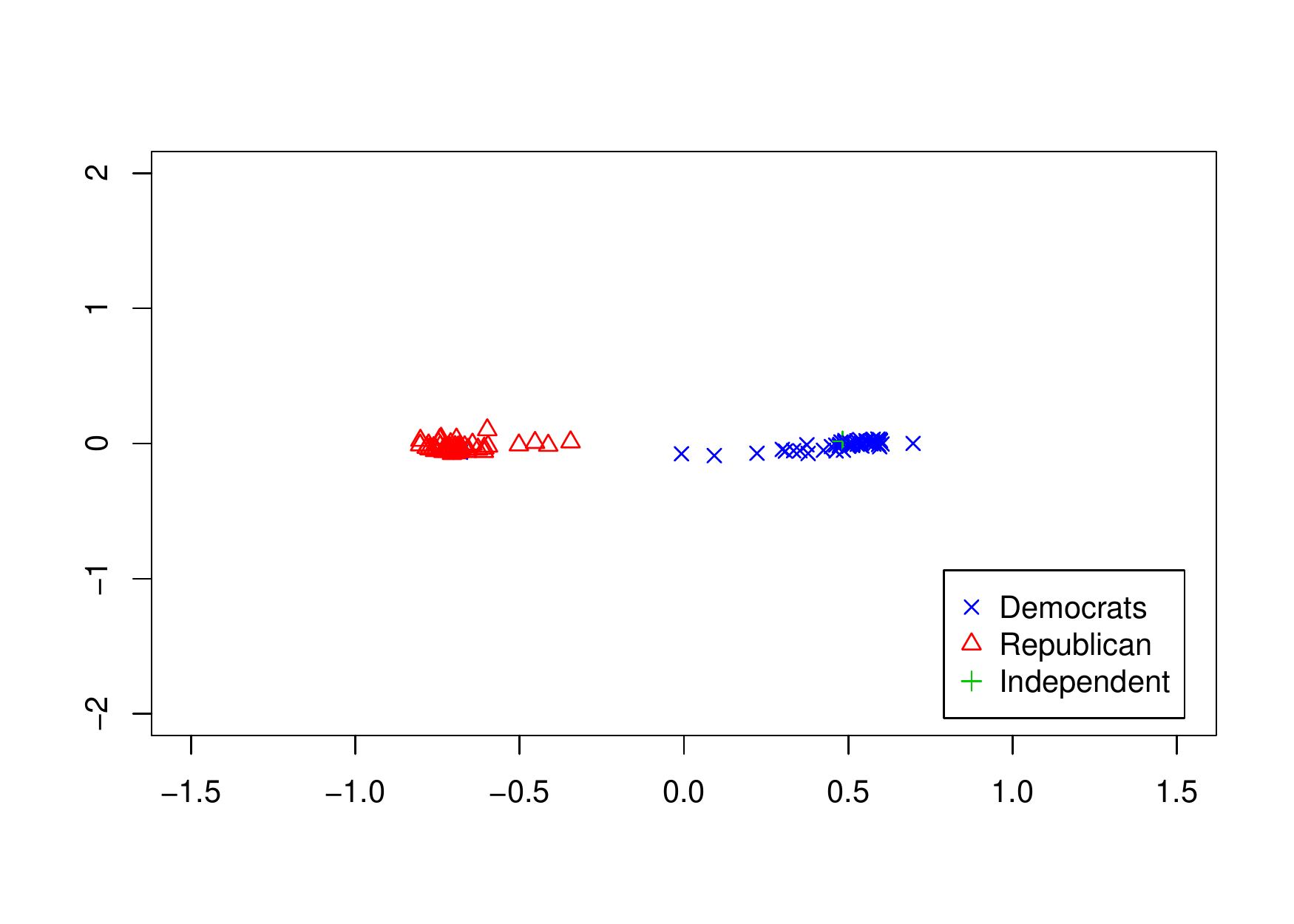}
\caption{Analysis of senator roll call data: Visualization of senator points, where senators are classified by their party membership. Specifically,
The Democrats, Republicans and an independent politician are indicated by blue, red, and green, respectively. }
\label{fig:classical_voting_senator}
\end{figure}


Finally, Figure~\ref{fig:classical_voting_bill} shows the unfolding results for the roll calls. The pattern in panel (a) of Figure~\ref{fig:classical_voting_bill} is similar to that of Figure~\ref{fig:vote108_bill}, where from the right to the left, the proportion of ``Yeas" from the Republicans increases.
Also similar to Figure~\ref{fig:vote108_bill}, although most of the roll calls lie near the $x$-axis, there are still quite a few of them spreading out along the $y$-axis.
According to panel (b) of Figure~\ref{fig:classical_voting_bill} based on the cross entropy measure,
the voting behavior on these roll calls tends to be heterogeneous within both parties.
This result is similar to that given in panel (b) of Figure~\ref{fig:vote108_bill}.


%
%

\begin{figure}
  \centering
    \begin{subfigure}[b]{0.4\textwidth}
        \centering
        \includegraphics[scale = 0.5]{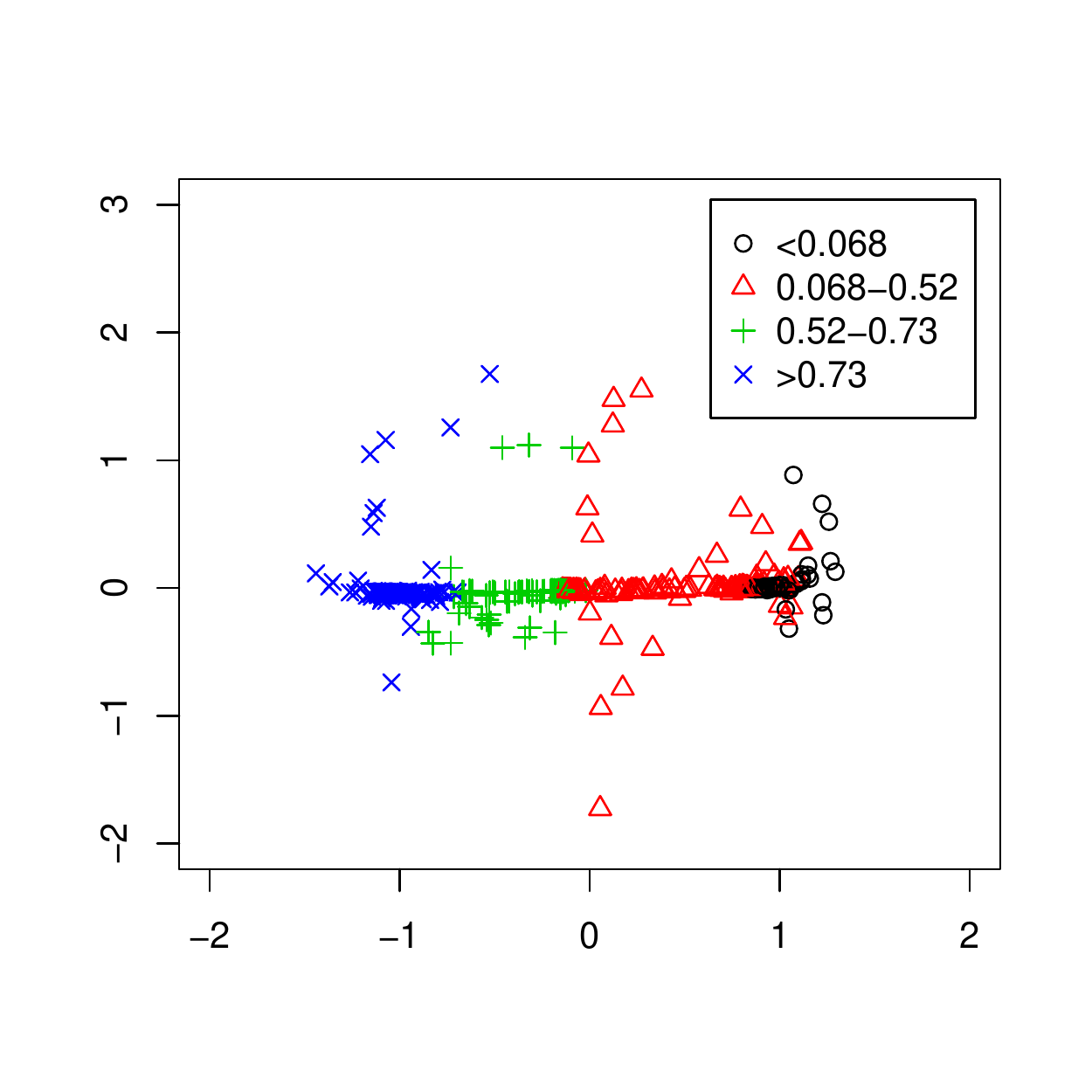}
        \caption{}
    \end{subfigure}
    \begin{subfigure}[b]{0.4\textwidth}
        \centering
        \includegraphics[scale = 0.5]{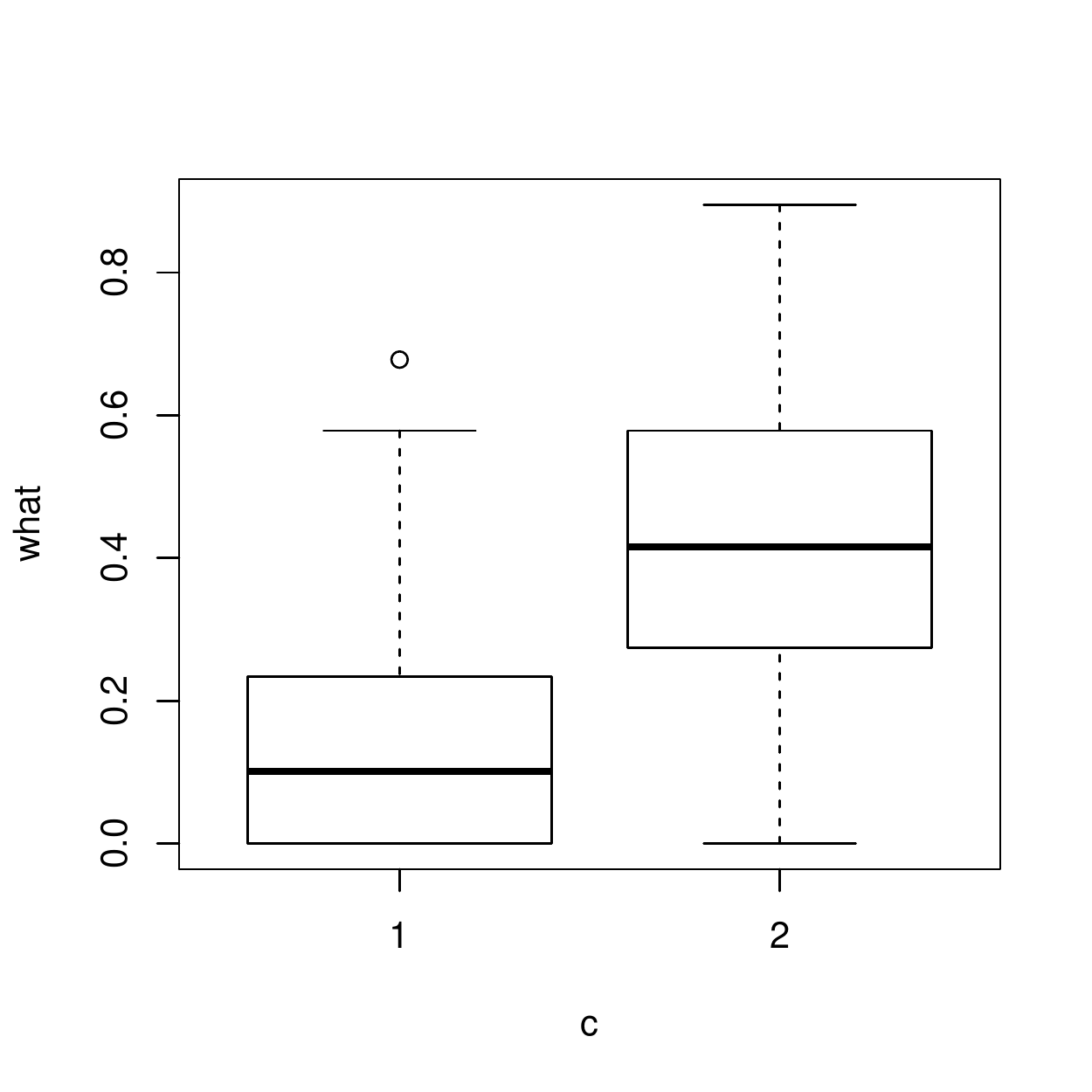}
        \caption{}
    \end{subfigure}
\caption{Analysis of senator roll call data. Panel (a): Visualization of roll call points, where roll calls are classified by the proportion of Yeas from Republicans. Specifically, roll calls who have the proportions less than 0.068, 0.068-0.52,0.52-0.73 and larger than 0.73 are indicated by black, red, green and blue points, respectively. 
Panel (b): Box plots of $\min\{\mbox{CE}^{(1)}_j, \mbox{CE}^{(2)}_j\}$, for roll calls lying near the $x$-axis ($|\hat a_{j2} | \leq 0.05$) one the left and for those spreading out along the $y$-axis ($|\hat a_{j2} | > 0.05$) on the right.}
\label{fig:classical_voting_bill}
\end{figure}

\bibliographystyle{apalike}
\bibliography{ref}
\end{document}